\documentclass[journal,onecolumn]{IEEEtran}
\usepackage{amsfonts,amssymb,amsmath,amsthm,bm}
\usepackage{latexsym,url,color,enumerate}
\usepackage{graphicx}
\usepackage{float}
\usepackage[normalem]{ulem}

\newtheorem{theorem}{Theorem}
\newtheorem{proposition}{Proposition}
\newtheorem{lemma}{Lemma}
\newtheorem{corollary}{Corollary}
\newtheorem{definition}{Definition}
\newtheorem{example}{Example}
\newtheorem{problem}{Problem}
\newtheorem{remark}{Remark}
\newtheorem*{lemma1}{Lemma~\ref{IndependentFamily}}
\newtheorem*{lemma2}{Lemma~\ref{IndependenceNumber}}
\newtheorem*{lemma3}{Lemma~\ref{IndependenceNumberStrongProduct}}

\newcommand{\RR}{\mathbb{R}}

\newcommand{\bx}{\mathbf{x}}
\newcommand{\by}{\mathbf{y}}

\newcommand{\dfn}{\triangleq}
\newcommand{\Cl}{\textup{Cl}}

\begin{document}
\title{The $\rho$-Capacity of a Graph}

\author{Sihuang Hu and Ofer Shayevitz
\thanks{This work has been supported by an ERC grant no. 639573, and an ISF grant no. 1367/14. The authors are with the Department of Electrical Engineering--Systems, Tel Aviv University, Tel Aviv, Israel (emails: sihuanghu@post.tau.ac.il, ofersha@eng.tau.ac.il).
This paper was presented in part at the IEEE International Symposium on Information Theory (ISIT), Barcelona, Spain, July 2016.
}
}

\maketitle

\begin{abstract}
Motivated by the problem of zero-error broadcasting, we introduce a new notion of graph capacity, termed $\rho$-capacity, that generalizes the Shannon capacity of a graph. We derive upper and lower bounds on the $\rho$-capacity of arbitrary graphs, and provide a Lov\'asz-type upper bound for regular graphs. We study the behavior of the $\rho$-capacity under two graph operations: the strong product and the disjoint union. Finally, we investigate the connection between the
structure of a graph and its $\rho$-capacity.  
\end{abstract}

\IEEEpeerreviewmaketitle

\section{Introduction}
The zero-error capacity of a discrete memoryless noisy channel was first investigated by Shannon~\cite{Sh1956}. In this setup, a transmitter would like to communicate a message to a receiver through the channel, and the receiver must decode the message without error. The zero-error capacity is the supremum of all achievable communication rates under this constraint, in the limit of multiple channel uses. This problem can be equivalently cast in terms of the
\textit{confusion graph} $G$ associated with the channel. The vertices of the confusion graph are the input symbols, and two vertices are adjacent if the corresponding inputs can result in the same output. Letting $G^n$ denote the $n$th-fold strong product of $G$, which is the confusion graph for $n$ uses of the channel, the zero-error capacity is obtained as the exponential growth rate of $\alpha(G^n)$, the size of  a maximum independent set of $G^n$. The zero-error
capacity is thus often referred to as the \textit{Shannon capacity} of a graph, and is denoted by $C(G)$. Despite the apparent simplicity of the problem, a general characterization of $C(G)$ remains elusive. Lower and upper bounds were obtained by Shannon~\cite{Sh1956}, Lov\'asz~\cite{Lo1979} and Haemers~\cite{Ha1979}. 

Our work is motivated by the more general problem of characterizing the zero-error capacity of the discrete memoryless broadcast channel with two receivers. This problem can be cast in terms of the confusion graphs $(G_1,G_2)$ corresponding to each of the receivers, and the associated capacity region is denoted by $C(G_1,G_2)$. This setup was considered by Weinstein \cite{We2011}, who found the region  $C(G_1,G_2)$ in a few special cases: 
\begin{itemize}
\item Both graphs are disjoint union of cliques. Note that the capacity region in this case can be deduced from more general results by Pinsker~\cite{Pi1978}, Marton~\cite{Ma1977} and Willems~\cite{Wi1990}. 
\item $G_1$ is the complete graph minus a clique, and $G_2$ is either empty or the complement graph of $G_1$. The capacity region in this case is obtained by time sharing between the optimal point-to-point zero-error codes for $G_1$ and $G_2$. 
\end{itemize}

In this paper, we focus on the case where $G_1$ is the empty graph (i.e., the first receiver observes the input noiselessly), but where the graph $G_2$ can be arbitrary. This naturally gives rise to the notion of the \textit{$\rho$-capacity of a graph}. Specifically, the $\rho$-capacity of the graph $G_2$, written as $C_\rho(G_2)$, is the maximal rate that can be conveyed with zero-error to the second receiver, while communicating with the first (noiseless) receiver at a rate of at least
$\rho$. In terms of the graph, the $\rho$-capacity is the exponential growth rate of the maximal number of pairwise non-adjacent subsets of size $2^{\rho n}$ in $G^n$. This notion of capacity generalizes the Shannon capacity of a graph, which is obtained as $C(G) = C_0(G)$. 

Our paper is dedicated to the study of the $\rho$-capacity. In Section~\ref{sec:pre}, we formally define the $\rho$-capacity and explore its relation to the zero-error broadcasting problem. In Section~\ref{sec:rho_capacity}, we provide several upper and lower bounds on the $\rho$-capacity of arbitrary graphs, as well as an upper bound on the $\rho$-capacity of regular graphs that generalizes Lov\'asz's construction~\cite{Lo1979,MRR1978,Sch1979}. In Section~\ref{sec:properties}, we study the behavior of the $\rho$-capacity under two operations on graphs: the strong graph product and the disjoint graph union. 
Some relations between the $\rho$-capacity curve of a graph and its structure are investigated in Section~\ref{sec:structural}. We conclude by briefly discussing a few open problems in Section~\ref{sec:open}.

\section{Preliminaries}\label{sec:pre}
\subsection{Notations and Background}
Let $G=(V,E)$ be a graph with vertex set $V$ and edge set $E$. Two vertices $v_1,v_2$ are called adjacent if there is an edge between $v_1$ and $v_2$, written as $v_1\sim v_2$. An \textit{independent set} in $G$ is a subset of pairwise non-adjacent vertices. A \emph{maximum independent set} is an independent set with the largest possible number of vertices. This number is called the \emph{independence number} of $G$, and denoted by $\alpha(G)$. We write $K_m$ for the complete graph over $m$ vertices. 

Let $G=(V(G),E(G))$ and $H=(V(H),E(H))$ be two graphs. The \emph{strong product} $G\boxtimes H$ of the graphs $G$ and $H$ is a graph such that
\begin{enumerate}
  \item the vertex set of  $G\boxtimes H$ is the Cartesian product $V(G) \times V(H)$; and
  \item any two distinct vertices $(u,u')$ and $(v,v')$ are adjacent in $G\boxtimes H$ if
    $u\sim v$ and $u'=v'$, or $u=v$ and $u'\sim v'$, or $u\sim v$ and $u'\sim v'$.
\end{enumerate}
For graphs $G$ and $H$, we let $G + H$ denote their disjoint union. For a positive integer $n$, we interpret $nG$ as the disjoint union of $n$ copies of $G$. Two graphs $G$ and $H$ are called \emph{isomorphic}, written as $G \cong H$, if there exists a bijection $\varphi$ from $V(G)$ onto $V(H)$ such that any two vertices $u$ and $v$ in $G$ are adjacent if and only if $\varphi(u)$ and $\varphi(v)$ in $H$ are adjacent. Note that the strong product is commutative and associative in the sense that
\begin{align*}
  G_1\boxtimes G_2 &\cong G_2\boxtimes G_1,\\
  (G_1\boxtimes G_2)\boxtimes G_3 &\cong G_1\boxtimes (G_2\boxtimes G_3).
\end{align*}
It is also immediate that the strong product is distributive for the disjoint union:
\begin{align*}
  G_1\boxtimes(G_2+G_3) = G_1\boxtimes G_2+G_1\boxtimes G_3.
\end{align*}
(See~\cite[Section 5.2]{HaImKl2011} for more properties of the strong product.)
The graph $G^{\boxtimes n}$ is defined inductively by $G^{\boxtimes n}= G^{\boxtimes n-1} \boxtimes G$. For simplicity we will write $G^n$ instead of $G^{\boxtimes n}$. 

The \emph{Shannon capacity} of a graph $G$ is defined as the exponential growth rate of the independence number of $G^n$, i.e., 
\begin{align}\label{eq:shannon_cap_def}
 C(G)=\lim_{n\to\infty}\frac{1}{n}\log\alpha(G^n),  
\end{align}
where the limit exists by the superadditivity of $\log\alpha(G^n)$. This quantity also arises as the zero-error capacity in the context of channel coding~\cite{Sh1956}, as we now briefly delineate. 

A (point-to-point) \emph{discrete memoryless channel} consists of a finite input alphabet $\mathcal{X}$, a finite output alphabet $\mathcal{Y}$, and a conditional probability mass function $p(y| x)$, such that $p(y^n| x^n)=\prod_{i=1}^np(y_{i} | x_i)$ when the channel is used $n$ times. A transmitter would like to convey a message\footnote{Throughout the paper we ignore integer issues whenever they are not important.} $w\in [2^{nR}]$ to a receiver over this channel, where the transmitter can set
the input sequence $x^n$ to the channel, and the receiver observes the output sequence $y^n$. To that end, the transmitter and receiver use an \textit{$(n,R)$ code}, which consists of
an encoder $\psi: [2^{nR}]  \rightarrow \mathcal{X}^n$ and a decoder $g: \mathcal{Y}^n \rightarrow [2^{nR}]$. Such a code is said to be \textit{zero-error} if $w$ can always be \textit{uniquely determined} from $y^n$, i.e., $w=g(y^n)$ for any $w$ and any correspondingly feasible output sequence $y^n$. We say that the communication rate $R$ is \emph{achievable} if an $(n,R)$ zero-error code exists\footnote{Note that since a concatenation of two zero-error codes is a
zero-error code, $R$ is achievable for arbitrarily large $n$.} for some $n$. The \emph{zero-error capacity} of the channel is defined to be the supremum of all achievable rates.

The channel $p(y|x)$ can be associated with a \emph{confusion graph} $G$, whose vertex set is the input alphabet $\mathcal{X}$, and whose edge set consists of all input pairs $(x,x')$ that can lead to the same output, i.e., for which both $p(y|x)>0$ and $p(y|x') >0$ for some $y\in\mathcal{Y}$. It is easy to verify that $G^n$ is the confusion graph associated with the product channel $p(y^n| x^n)$. It is well known and easy to check that the zero-error capacity of the channel is
equal to $C(G)$, the Shannon capacity of its confusion graph $G$.

\subsection{$\rho$-Capacity}
In this subsection we introduce the $\rho$-capacity of a graph, which is a generalization of the Shannon capacity. We begin by generalizing the notion of an independent set of a graph. Let $G=(V,E)$ be a graph with vertex set $V$ and edge set $E$.
Two disjoint subsets of vertices $V_1,V_2$ are called \emph{adjacent} if there exist vertices $v_1\in V_1$ and $v_2\in V_2$ such that $v_1\sim v_2$. Let $\mathcal{F}=\{V_i: 1\le i\le l\}$ be a family of disjoint subsets $V_i\subset V$. If they are pairwise non-adjacent, then we call $\mathcal{F}$ an \emph{independent family} of $G$. Moreover,  if each $V_i$ is of size not
less than $k$, then we say $\mathcal{F}$ is a \emph{$k$-independent family} of $G$. We write $|\mathcal{F}|$ for the number of subsets in $\mathcal{F}$. A \emph{maximum $k$-independent family} is a $k$-independent family with the largest possible number of subsets. This number is called the \emph{$k$-independence number} of $G$, and denoted by $\alpha_k(G)$. In particular, we have $\alpha_1(G)=\alpha(G)$, where $\alpha(G)$ is the independence number of $G$. 
\begin{example}{\rm
  Let $C_5$ be the pentagon graph, whose vertex set is $\{1,2,3,4,5\}$ and edge set is $\{12,23,34,45,51\}$. Then $\mathcal{F}=\{ \{1,2\}, \{4\}\}$ is an independent family of $C_5$, and it is easy to verify that $\alpha_2(C_5)=1$.
}
\end{example}

Here is a simple property of the $k$-independence number.
\begin{lemma} \label{StrongProductIndependenceNumberInequality}
  Let $G$ and $H$ be two graphs, and let $k_1$ and $k_2$ be two positive integers. Then
  $$\alpha_{k_1k_2}(G\boxtimes H) \ge \alpha_{k_1}(G)\cdot\alpha_{k_2}(H).$$
\end{lemma}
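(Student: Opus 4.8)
The plan is to construct a $k_1k_2$-independent family of $G \boxtimes H$ directly from a maximum $k_1$-independent family of $G$ and a maximum $k_2$-independent family of $H$, by taking "Cartesian products" of the member sets. Let $\mathcal{F} = \{A_1,\dots,A_s\}$ be a $k_1$-independent family of $G$ with $s = \alpha_{k_1}(G)$, and let $\mathcal{G} = \{B_1,\dots,B_t\}$ be a $k_2$-independent family of $H$ with $t = \alpha_{k_2}(H)$. For each pair $(i,j)$ with $1 \le i \le s$ and $1 \le j \le t$, define $W_{ij} = A_i \times B_j \subseteq V(G) \times V(H) = V(G \boxtimes H)$. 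I claim $\{W_{ij}\}$ is a $k_1k_2$-independent family, which immediately gives $\alpha_{k_1k_2}(G\boxtimes H) \ge st = \alpha_{k_1}(G)\cdot\alpha_{k_2}(H)$.

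The verification has three routine parts. First, each $W_{ij}$ has size $|A_i|\cdot|B_j| \ge k_1 k_2$, as required. Second, the sets $W_{ij}$ are pairwise disjoint: if $(i,j) \ne (i',j')$ then either $A_i \cap A_{i'} = \emptyset$ (when $i \ne i'$, since $\mathcal{F}$ is a family of disjoint subsets) or $B_j \cap B_{j'} = \emptyset$ (when $j \ne j'$), and in either case $W_{ij} \cap W_{i'j'} = (A_i \cap A_{i'}) \times (B_j \cap B_{j'}) = \emptyset$. Third, and most importantly, I must show that distinct $W_{ij}$ and $W_{i'j'}$ are non-adjacent in $G \boxtimes H$. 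Suppose for contradiction that $(u,u') \in W_{ij}$ and $(v,v') \in W_{i'j'}$ are adjacent in $G \boxtimes H$. By the definition of the strong product, this means: $u \sim v$ in $G$, or $u = v$; \emph{and} $u' \sim v'$ in $H$, or $u' = v'$ (and $(u,u') \ne (v,v')$). Now if $i \ne i'$, then since $u \in A_i$, $v \in A_{i'}$, and $\mathcal{F}$ is independent in $G$, we cannot have $u \sim v$, and since $A_i \cap A_{i'} = \emptyset$ we cannot have $u = v$ — contradiction. Symmetrically, if $j \ne j'$, then looking at the $H$-coordinates $u' \in B_j$ and $v' \in B_{j'}$ yields a contradiction with the independence of $\mathcal{G}$. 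Since $(i,j) \ne (i',j')$ forces $i \ne i'$ or $j \ne j'$, we reach a contradiction in all cases, so $W_{ij}$ and $W_{i'j'}$ are non-adjacent.

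There is no serious obstacle here; the only point requiring a little care is getting the case analysis in the third part exactly right — in particular, noting that adjacency in the strong product requires the two coordinates to each be "equal-or-adjacent," so that a single differing index (either $i \ne i'$ or $j \ne j'$) already blocks adjacency via the corresponding coordinate. One should also remember to invoke that the members of an independent family are by definition pairwise \emph{disjoint}, which is what rules out the "$u = v$" alternative. Assembling these observations completes the proof.
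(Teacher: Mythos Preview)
Your proof is correct and follows exactly the same approach as the paper: form the Cartesian products $A_i \times B_j$ of the members of maximum $k_1$- and $k_2$-independent families and observe that this yields a $k_1k_2$-independent family of $G\boxtimes H$. The paper states this in one sentence; your added verification of disjointness and non-adjacency is routine and accurate.
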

\begin{proof}
Suppose that $\{V_i: 1\le i\le l\}$ is a $k_1$-independent family of $G$, and $\{U_j: 1\le j\le l'\}$ is a $k_2$-independent family of $H$. Then their Cartesian product $\{V_i\times U_j: 1\le i\le l, 1 \le j\le l'\}$ is a $k_1k_2$-independent family of $G\boxtimes H$, and the result follows.
\end{proof}

We now define the $\rho$-capacity of a graph $G$ to be the exponential growth rate of the $2^{\rho n}$-independence number of $G^n$, which generalizes the expression~\eqref{eq:shannon_cap_def} for the Shannon capacity. 
\begin{definition}
Let $G$ be a graph with $m$ vertices. Then for any $0\le\rho\le\log{m}$, the \emph{$\rho$-capacity} of $G$ is defined to be 
  \begin{align}\label{eq:rho_cap_def}
    C_{\rho}(G) = \lim_{n\to\infty}\frac{1}{n}\log \alpha_{2^{\rho n}}(G^n).
  \end{align}
In particular, the $\rho$-capacity for $\rho=0$ is equal to the Shannon capacity of the graph, i.e., $C_0(G)= C(G)$.
\end{definition}

Note that the existence of the limit~\eqref{eq:rho_cap_def} follows from the superadditivity of $\log\alpha_{2^{\rho n}}(G^n)$, namely
$$
\log\alpha_{2^{\rho(n+n')}}(G^{n+n'}) 
\ge \log\alpha_{2^{\rho n}}(G^n) + \log\alpha_{2^{\rho n'}}(G^{n'}), 
$$
which is guaranteed by Lemma~\ref{StrongProductIndependenceNumberInequality}. In particular, it also holds that $\displaystyle{C_\rho(G) = \sup\{\tfrac{1}{n}\log \alpha_{2^{\rho n}}(G^n): n=1,2,\dots\}}$.

\subsection{A Zero-Error Broadcasting Formulation} 
In this subsection, we show how the $\rho$-capacity arises naturally in the context of zero-error broadcasting. A (two-user) \emph{discrete memoryless broadcast channel} consists of a finite input alphabet $\mathcal{X}$, two finite output alphabets $\mathcal{Y}_1$ and $\mathcal{Y}_2$, and a conditional probability mass function $p(y_1,y_2|x)$, such that $p(y_1^n,y_2^n|x^n)=\prod_{i=1}^np(y_{1i},y_{2i}|x_i)$ when the channel is used $n$ times. A transmitter would like to convey two messages $w_1\in [2^{nR_1}]$ and $w_2\in [2^{nR_2}]$ to two receivers over this channel, where the transmitter can set the input sequence $x^n$ to the channel, and the receivers observe their respective output sequences $y_1^n$ and $y_2^n$. To that end, the transmitter and receivers use an \textit{$(n,R_1,R_2)$ code}, which consists of
an encoder $\psi: [2^{nR_1}] \times [2^{nR_2}] \rightarrow \mathcal{X}^n$, and two decoders $g_1: \mathcal{Y}_1^n \rightarrow [2^{nR_1}]$ and $g_2: \mathcal{Y}_2^n \rightarrow [2^{nR_2}]$. Such a code is said to be \textit{zero-error} if $w_1$ and $w_2$ can always be \textit{uniquely determined} from $y_1^n$ and $y_2^n$, i.e., $w_1=g_1(y_1^n)$ and $w_2=g_2(y_2^n)$ for any $w_1,w_2$ and any correspondingly feasible pair of output sequences. We say that the communication rates $(R_1,R_2)$ are \emph{achievable} if an $(n,R_1,R_2)$ zero-error code exists for some $n$. The \emph{zero-error capacity region} of the broadcast channel is the closure of the set of all achievable rates.

Similarly to the case of the broadcast capacity under a vanishing-error criterion, it is easy to observe the following result.
\begin{proposition}\label{prop:depends}
  The zero-error capacity region of a broadcast channel depends only on the conditional marginal
  distributions $p(y_1|x)$ and $p(y_2|x)$.
\end{proposition}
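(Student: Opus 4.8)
The plan is to show that the zero-error capacity region is determined entirely by the confusion graphs $G_1$ and $G_2$ of the two marginal channels $p(y_1|x)$ and $p(y_2|x)$, and since these confusion graphs depend only on the marginals, the claim follows. The key observation is that the zero-error decodability conditions for an $(n,R_1,R_2)$ code can be recast purely as combinatorial non-adjacency conditions in $G_1^n$ and $G_2^n$.

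First I would make precise the combinatorial characterization of a zero-error broadcast code. Given an encoder $\psi\colon [2^{nR_1}]\times[2^{nR_2}]\to\mathcal{X}^n$, for each $w_1$ let $V_{w_1}=\{\psi(w_1,w_2): w_2\in[2^{nR_2}]\}\subseteq\mathcal{X}^n=V(G_1^n)=V(G_2^n)$. Receiver~1 can recover $w_1$ from $y_1^n$ with zero error if and only if no output $y_1^n$ is feasible for inputs coming from two different blocks $V_{w_1}$, $V_{w_1'}$; this is exactly the statement that the sets $V_{w_1}$, for distinct $w_1$, are pairwise non-adjacent in $G_1^n$ (with some care that the $V_{w_1}$ may overlap or the encoder need not be injective, which one handles by passing to a reduced code — I would note that repeated codewords can only hurt, so WLOG the relevant inputs are distinct). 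Similarly, receiver~2 recovers $w_2$ iff, for each $w_1$, the inputs $\psi(w_1,w_2)$ for distinct $w_2$ are pairwise non-adjacent in $G_2^n$; more carefully, the global decodability of $w_2$ requires that inputs associated with distinct $w_2$ (across all $w_1$) that could be confused at receiver~2 are forbidden. The precise bookkeeping here is the one place that needs attention.

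Next I would observe that both the confusion graph $G_1$ of $p(y_1|x)$ and the confusion graph $G_2$ of $p(y_2|x)$ depend, by their very definition (an edge $(x,x')$ exists iff $p(y_i|x)>0$ and $p(y_i|x')>0$ for some $y_i$), only on the conditional marginals, not on the joint $p(y_1,y_2|x)$. Therefore two broadcast channels with the same marginals have identical pairs $(G_1,G_2)$, hence identical $(G_1^n,G_2^n)$ for every $n$ (using that the $n$-fold product channel's marginals are the products of the marginals, so its confusion graphs are $G_1^n$ and $G_2^n$). Combining this with the previous step, the set of achievable $(R_1,R_2)$, being expressible solely in terms of the existence of suitable families of vertex subsets in $G_1^n$ and $G_2^n$, is the same for both channels, and taking closures gives equality of the zero-error capacity regions.

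The main obstacle is the careful handling of the decoder constraints when the encoder is not injective and when the sets $V_{w_1}$ are not disjoint: one must argue that without loss of generality one may restrict to encoders in which $\psi(w_1,\cdot)$ is injective for each $w_1$ and the blocks $V_{w_1}$ are disjoint, or alternatively phrase the characterization directly in terms of the feasible output sets $T_i(x)=\{y_i: p(y_i|x)>0\}$ and the condition that $T_i(x)\cap T_i(x')=\emptyset$ for inputs that must be distinguished by receiver~$i$. Once that reduction is in place, the rest is immediate: everything in sight is a function of the two marginal confusion graphs, which are functions of the two conditional marginals alone. I would present the argument at the level of ``it is easy to observe,'' mirroring the vanishing-error analogue, since the only content beyond the standard zero-error/confusion-graph correspondence is the trivial remark that confusion graphs see only marginals.
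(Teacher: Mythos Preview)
The paper does not actually provide a proof of this proposition; it simply remarks that, as in the vanishing-error case, the result is ``easy to observe.'' Your plan is correct and would constitute a valid proof.

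That said, your route through the confusion graphs is slightly more elaborate than the one-line observation the paper presumably has in mind. The most direct argument is: for a fixed encoder $\psi$, decoder~$i$ can recover $w_i$ with zero error if and only if, whenever $(w_1,w_2)$ and $(w_1',w_2')$ differ in the $i$th coordinate, there is no $y_i^n$ that is a feasible output for both $\psi(w_1,w_2)$ and $\psi(w_1',w_2')$ under the marginal $p(y_i^n|x^n)=\prod_k p(y_{ik}|x_k)$. This condition manifestly depends only on the support of $p(y_i|\cdot)$. Hence the set of $(n,R_1,R_2)$ zero-error codes, and therefore the capacity region, is identical for any two broadcast channels sharing the same marginals. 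Your concern about non-injective encoders and overlapping blocks $V_{w_1}$ is also unnecessary: if $\psi(w_1,w_2)=\psi(w_1',w_2')$ with $w_1\neq w_1'$, decoder~1 trivially fails, so for any zero-error code the blocks are automatically disjoint. Passing explicitly through the confusion graphs as you do is of course fine and reaches the same conclusion; it has the minor advantage of setting up the paper's subsequent corollary (that the region depends only on $G_1$ and $G_2$) in the same breath.
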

Let $G_1=(\mathcal{X},E_1)$ and $G_2=(\mathcal{X},E_2)$ be the confusion graphs associated with the channels $p(y_1|x)$ and $p(y_2|x)$ respectively. Then Proposition~\ref{prop:depends} implies a simple corollary.
\begin{corollary}
  The zero-error capacity region of a broadcast channel depends only on the confusion graphs 
  $G_1$ and $G_2$.  
\end{corollary}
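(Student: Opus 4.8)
The plan is to combine Proposition~\ref{prop:depends} with the observation that zero-error decodability is a purely combinatorial property of the confusion graphs. By Proposition~\ref{prop:depends}, the zero-error capacity region of a broadcast channel $p(y_1,y_2\mid x)$ is unchanged if it is replaced by any broadcast channel with the same conditional marginals $p(y_1\mid x)$ and $p(y_2\mid x)$; hence it suffices to show that if two broadcast channels have marginals whose confusion graphs are $G_1$ and $G_2$ respectively, then they have the same set of achievable rate pairs $(R_1,R_2)$.

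First I would fix $n$ and rates $R_1,R_2$ and characterize, in graph-theoretic terms, exactly when an $(n,R_1,R_2)$ zero-error code exists. Writing $M_i=2^{nR_i}$, an encoder $\psi\colon [M_1]\times[M_2]\to\mathcal{X}^n$ admits zero-error decoders $g_1,g_2$ if and only if: whenever $w_1\neq w_1'$, the sequences $\psi(w_1,w_2)$ and $\psi(w_1',w_2')$ are distinct and non-adjacent in the confusion graph of the $n$-fold marginal channel $p(y_1^n\mid x^n)$, for all $w_2,w_2'$; and, symmetrically, whenever $w_2\neq w_2'$, $\psi(w_1,w_2)$ and $\psi(w_1',w_2')$ are distinct and non-adjacent in the confusion graph of $p(y_2^n\mid x^n)$, for all $w_1,w_1'$. (The \emph{if} direction lets each receiver decode its message unambiguously; the \emph{only if} direction is immediate, since a pair of confusable codewords with differing $i$-th message component would make $g_i$ impossible.) Since the $n$-fold marginal $p(y_i^n\mid x^n)$ is precisely the $n$-fold product of $p(y_i\mid x)$, its confusion graph is $G_i^n$, the $n$-th strong power of $G_i$, exactly as in the point-to-point case recalled above; and $G_i^n$ depends only on $G_i$.

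Consequently the displayed condition depends only on $G_1$ and $G_2$, hence so does the collection of pairs $(R_1,R_2)$ for which an $(n,R_1,R_2)$ zero-error code exists; taking the union over $n$ and then the closure shows that the zero-error capacity region depends only on $G_1$ and $G_2$. The one point requiring care — and the main, rather mild, obstacle — is getting the combinatorial characterization in the second step exactly right: in particular, correctly handling the case in which two codewords coincide as sequences (which must be forbidden as soon as the corresponding message components differ), and using that confusion graphs carry no self-loops, so that \emph{distinct and non-adjacent} is the correct notion of unconfusability. Everything else is routine bookkeeping.
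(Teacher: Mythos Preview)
Your proposal is correct and matches the paper's (essentially unstated) argument: the paper simply declares the result a ``simple corollary'' of Proposition~\ref{prop:depends} without writing any details, and you have supplied precisely the combinatorial characterization one needs to pass from dependence on the marginals to dependence on their confusion graphs. There is nothing further to compare.
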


Following this, we write $C(G_1,G_2)$ for the zero-error capacity region of a broadcast channel with confusion graphs $G_1$ and $G_2$. We now show that the $\rho$-capacity of $G$ is the maximal rate that can be conveyed under zero-error to a noisy receiver with confusion graph $G$, while at the same time communicating with a noiseless receiver (i.e., having an empty confusion graph) at a rate of at least $\rho$. 
\begin{proposition}\label{prop:rho-capacityForBroadcast}
  Let $G$ be a graph over $m$ vertices. Then for any $0\le\rho\le\log{m}$, we have
\begin{align*}
  C_\rho(G) = \sup\,\{R : (\rho,R)\in C(\overline{K}_m,G)\}
\end{align*}
where $\overline{K}_m$ is the empty graph over $m$ vertices. 
\end{proposition}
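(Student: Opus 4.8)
The plan is to prove the two inequalities separately, matching an $(n,R_1,R_2)$ zero-error code for $C(\overline{K}_m,G)$ with a $2^{R_1 n}$-independent family of $G^n$ of size $2^{R_2 n}$.

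\textbf{From a broadcast code to an independent family.} Suppose $(\rho, R) \in C(\overline{K}_m, G)$, so that for some $n$ there is an $(n, R_1, R_2)$ zero-error code with $R_1 \ge \rho$, $R_2$ close to $R$, encoder $\psi : [2^{nR_1}] \times [2^{nR_2}] \to \mathcal{X}^n$ and decoders $g_1, g_2$. For each $w_2 \in [2^{nR_2}]$ define $V_{w_2} = \{ \psi(w_1, w_2) : w_1 \in [2^{nR_1}] \} \subseteq \mathcal{X}^n = V(G^n)$. Since the first receiver is noiseless, it observes $x^n$ directly, so $g_1(x^n)$ must recover $w_1$; this forces $\psi(\cdot, w_2)$ to be injective, hence $|V_{w_2}| = 2^{nR_1} \ge 2^{\rho n}$. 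Moreover the $V_{w_2}$ are pairwise disjoint (a common element would again contradict recovery of $w_2$ by the noiseless receiver, since it determines the input). The key point is that for $w_2 \ne w_2'$, the sets $V_{w_2}$ and $V_{w_2'}$ are non-adjacent in $G^n$: if $x^n \in V_{w_2}$, $\tilde x^n \in V_{w_2'}$ were adjacent in the confusion graph $G^n$, there would be an output sequence $y_2^n$ feasible from both, and then $g_2(y_2^n)$ cannot equal both $w_2$ and $w_2'$, contradicting zero-error. Thus $\{V_{w_2}\}$ is a $2^{\rho n}$-independent family of $G^n$ of size $2^{nR_2}$, so $\frac1n \log \alpha_{2^{\rho n}}(G^n) \ge R_2$, and passing to the limit (using the sup characterization of $C_\rho$ noted after the definition) gives $C_\rho(G) \ge R$, hence $C_\rho(G) \ge \sup\{R : (\rho,R) \in C(\overline{K}_m,G)\}$.

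\textbf{From an independent family to a broadcast code.} Conversely, fix $\rho$ and let $R < C_\rho(G)$. By the sup characterization there is some $n$ and a $2^{\rho n}$-independent family $\{V_1, \dots, V_l\}$ of $G^n$ with $l \ge 2^{Rn}$; discard elements so that each $|V_i| = \lceil 2^{\rho n} \rceil =: M$ exactly, and enumerate $V_i = \{x^n_{i,1}, \dots, x^n_{i,M}\}$. Define an $(n, R_1, R_2)$ code with $2^{nR_1} = M$, $2^{nR_2} = l$, encoder $\psi(w_1, w_2) = x^n_{w_2, w_1}$, noiseless decoder $g_1$ that reads $w_1$ off the received input (possible since within a fixed $V_{w_2}$ the indices are distinct, and across different $V_{w_2}$ the sets are disjoint, so the map $(w_1,w_2) \mapsto x^n_{w_2,w_1}$ is injective and its inverse followed by first-coordinate projection is well defined on its image), and decoder $g_2$ that, given $y_2^n$, outputs the unique $w_2$ such that some $x^n \in V_{w_2}$ is compatible with $y_2^n$ — unique because compatibility of inputs from $V_i$ and $V_j$ with a common $y_2^n$ would make $V_i, V_j$ adjacent in $G^n$. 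This is a zero-error code, so $(R_1, R_2) = (\tfrac1n\log M, \tfrac1n\log l) \in C(\overline{K}_m, G)$, with $R_1 \ge \rho$ and $R_2 \ge R$. Because the capacity region is closed and (by a standard time-sharing / prefixing argument with the noiseless user) one may trade off rate to reach exactly $R_1 = \rho$, we conclude $(\rho, R) \in C(\overline{K}_m, G)$ for every $R < C_\rho(G)$, and closedness gives the reverse inequality.

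\textbf{Main obstacle.} The routine parts are the disjointness and non-adjacency verifications, which are immediate from the zero-error property and the definition of the confusion graph $G^n$. The one point requiring a little care is the \emph{exact} matching of the first-user rate to $\rho$: a broadcast code naturally gives $R_1 = \tfrac1n \log \lceil 2^{\rho n}\rceil$, which exceeds $\rho$ slightly, and conversely the definition of $C_\rho$ uses subsets of size exactly (at least) $2^{\rho n}$. I expect to handle this by the usual argument that $C(\overline{K}_m, G)$ is closed and "downward closed in $R_1$" in the sense that reducing the noiseless-user rate is always feasible (split a $2^{\rho' n}$-family into a $2^{\rho n}$-family with more sets, or simply ignore part of $w_1$), so the supremum over $R$ at the exact value $\rho$ agrees with the limiting growth rate; this also absorbs the integer-rounding issues that the paper has already declared it will ignore.
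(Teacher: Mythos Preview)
Your proof is correct and follows essentially the same approach as the paper: establishing a correspondence between $(n,\rho,R)$ zero-error broadcast codes and $2^{\rho n}$-independent families of $G^n$ via superposition coding in one direction and partitioning the codebook by $w_2$ in the other. The paper's version is terser---it does not spell out injectivity, disjointness, or the rate-matching issue you flag (relying on its standing convention of ignoring integer issues)---but the content is identical.
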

\begin{proof}
Let $\mathcal{F}$ be a $2^{\rho n}$-independent family of $G^n$, and set $R=\tfrac{1}{n}\log|\mathcal{F}|$. Then $\mathcal{F}$ induces an $(n,\rho,R)$ zero-error code for the broadcast setup associated with the definition of $\rho$-capacity, i.e., where the first receiver is noiseless and the second receiver has confusion graph $G$. 
The $(n,\rho,R)$ zero-error code is constructed using \textit{superposition coding}: the transmitter chooses a subset of $\mathcal{F}$ for the second receiver, and chooses a vertex inside that subset for the first receiver, which is then transmitted. Clearly, the second receiver can always distinguish between the subsets of $\mathcal{F}$ (hence decode its message with zero-error), whereas the first receiver can decode both messages with zero-error. 
Therefore $(\rho, C_{\rho}(G))\in C(\overline{K}_m,G)$, and hence $C_\rho(G) \le \sup\,\{R : (\rho,R)\in C(\overline{K}_m,G)\}$.

Conversely, suppose that the rate pair $(\rho,R)$ is achievable, i.e., there exists an $(n,\rho,R)$ zero-error code for some $n$. Consider the subsets of codewords 
  obtained by fixing the second receiver's message and going over all the messages of the first receiver. 
  All these subsets are of size $2^{\rho n}$, and since the second receiver must decode with zero-error regardless of the first receiver's message, any pair of these subsets must be non-adjacent in $G^n$. This naturally induces a $2^{\rho n}$-independent family of $G^n$ whose number of subsets is equal to $2^{n R}$. Therefore $R\le C_{\rho}(G)$, and hence $C_\rho(G) \ge \sup\,\{R : (\rho,R)\in C(\overline{K}_m,G)\}$. This concludes our proof.
\end{proof}

The following proposition shows how the $\rho$-capacity can be used to provide a partial characterization of the zero-error broadcast capacity region. 
\begin{proposition}
  Let $G_1=(\mathcal{X},E_1)$ and $G_2=(\mathcal{X},E_2)$ be two graphs. Let $\mathfrak{C}$ be the convex hull of the closure of all 
      rate pairs $(R_1,R_2)$ satisfying
      \begin{align*}
        R_1 &\le \rho/s,\\ R_2 &\le  C_{\rho}(H)/s,
      \end{align*}
      where $H$ is the induced subgraph of $G_2^s$ associated with some independent set $A$ of 
      $G_1^s$ for some positive integer $s$, and $0\le\rho\le\log{|A|}$.
  Then $\mathfrak{C}\subseteq C(G_1,G_2)$. Moreover, if $E_1\subseteq E_2$ then $\mathfrak{C}=C(G_1,G_2)$. 
\end{proposition}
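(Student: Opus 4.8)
The plan is to prove the two inclusions separately; the hypothesis $E_1\subseteq E_2$ is needed only for the reverse one. As a preliminary I would record that $C(G_1,G_2)$ is closed (by definition) and convex: concatenating an $(n_1,R_1^{(1)},R_2^{(1)})$ zero-error code $a$ times with an $(n_2,R_1^{(2)},R_2^{(2)})$ zero-error code $b$ times is again a zero-error code, of block length $an_1+bn_2$, and letting $a,b$ range over the positive integers and taking the closure yields every convex combination of the two rate pairs. It is also downward closed (restricting to fewer messages preserves zero error). Hence $\mathfrak{C}\subseteq C(G_1,G_2)$ will follow once every rate pair $(R_1,R_2)$ with $R_1\le\rho/s$, $R_2\le C_\rho(H)/s$ is shown to lie in $C(G_1,G_2)$ (convexity and closedness then absorb the convex hull and closure), and $C(G_1,G_2)\subseteq\mathfrak{C}$ will follow once every rate pair attained by a zero-error code is shown to be one of these defining pairs.

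For the achievability direction, I would fix $s$, an independent set $A$ of $G_1^s$, the induced subgraph $H=G_2^s[A]$, and $0\le\rho\le\log|A|$. Given $\eps>0$, pick $t$ with $\tfrac1t\log\alpha_{2^{\rho t}}(H^t)\ge C_\rho(H)-\eps$ and a $2^{\rho t}$-independent family $\mathcal F$ of $H^t$ of that size. The two elementary facts I would use are the strong-product identities $H^{t}=(G_2^s[A])^{\boxtimes t}=G_2^{st}[A^{t}]$ with $A^{t}=A\times\cdots\times A$, and that $A$ independent in $G_1^s$ forces $A^{t}$ independent in $(G_1^s)^{\boxtimes t}=G_1^{st}$ (cf.\ Lemma~\ref{StrongProductIndependenceNumberInequality}). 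I would then use superposition coding over blocks of length $st$: the second receiver's message selects a subset $V\in\mathcal F$ and the first receiver's message selects one of its (at least $2^{\rho t}$) elements, a vertex of $A^{t}\subseteq\mathcal X^{st}$, which is transmitted. The first receiver identifies the transmitted codeword, hence both messages, because $A^{t}$ is an independent set of $G_1^{st}$; the second receiver identifies $V$, hence its message, because distinct members of $\mathcal F$ are disjoint and non-adjacent in $H^{t}=G_2^{st}[A^{t}]$, hence in $G_2^{st}$. This achieves the rate pair $(\rho/s,\ \ge(C_\rho(H)-\eps)/s)$; letting $\eps\to0$ and passing to smaller rates gives all defining pairs.

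For the converse, assume $E_1\subseteq E_2$ and take any $(n,R_1,R_2)$ zero-error code with encoder $\psi$; let $B=\{\psi(w_1,w_2)\}\subseteq\mathcal X^n$ (all codewords distinct, else some receiver fails). The key claim is that $B$ is an independent set of $G_1^n$: two codewords sharing the same $w_1$ are non-adjacent in $G_2^n$ since the second receiver decodes $w_2$, hence non-adjacent in $G_1^n$ because $E_1\subseteq E_2$ implies $E(G_1^n)\subseteq E(G_2^n)$; two codewords with different $w_1$ are non-adjacent in $G_1^n$ since the first receiver decodes $w_1$. Now set $s=n$, $A=B$, $H=G_2^n[A]$ and $\rho=nR_1$, which lies in $[0,\log|A|]$ since $|A|=2^{n(R_1+R_2)}$. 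The $2^{nR_2}$ sets $\{\psi(w_1,k):w_1\in[2^{nR_1}]\}$, $k\in[2^{nR_2}]$, are disjoint, each of size $2^{nR_1}$, and pairwise non-adjacent in $H$ (again because the second receiver decodes $w_2$); so they form a $2^{nR_1}$-independent family of $H$, whence $\alpha_{2^{\rho}}(H)\ge 2^{nR_2}$ and therefore $C_\rho(H)\ge\log\alpha_{2^{\rho}}(H)\ge nR_2$. Thus $R_1=\rho/s$ and $R_2\le C_\rho(H)/s$, i.e.\ $(R_1,R_2)$ is a defining pair of $\mathfrak{C}$, completing this direction.

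The main obstacle I anticipate is the converse, and specifically the claim that the entire codeword set $B$ is an independent set of $G_1^n$ — this is exactly the step that uses $E_1\subseteq E_2$, and it is what lets one realize $H$ as the induced subgraph of $G_2^n$ on an independent set of $G_1^n$, matching the single-letter form of $\mathfrak{C}$. Without the hypothesis one only obtains that $B$ splits into rows that are mutually, but not internally, independent in $G_1^n$, which does not fit this form. The remaining work is routine: the time-sharing argument for convexity, the two strong-product identities (induced subgraph of a product equals product of induced subgraphs; product of independent sets is independent), and checking that passing to smaller rates, convex hulls, and closures keeps one inside the closed convex region $C(G_1,G_2)$.
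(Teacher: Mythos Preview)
Your proof is correct and follows essentially the same approach as the paper's: superposition coding over $A^t\subseteq\mathcal X^{st}$ for achievability, and for the converse (under $E_1\subseteq E_2$) taking the codeword set itself as the independent set $A$ in $G_1^n$ and reading off a $2^{nR_1}$-independent family of $H=G_2^n[A]$. The paper's proof is far terser---it dispatches the converse with ``similarly as in Proposition~\ref{prop:rho-capacityForBroadcast}''---so you have in fact supplied the details the paper omits, including the precise role of the hypothesis $E_1\subseteq E_2$ in forcing the whole codeword set to be independent in $G_1^n$, and the convexity/closure bookkeeping needed to pass from the defining pairs to all of $\mathfrak{C}$.
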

\begin{proof}
  Fix a $\rho\in[0,\log{|A|}]$. Then there exists a sequence of $2^{\rho n}$-independent families $\mathcal{F}_n$ of $H^n$ such that $\lim_{n\to\infty}\frac{1}{n}\log{|\mathcal{F}_n|}= C_{\rho}(H)$. 
  Using superposition coding, we can see that $\mathcal{F}_n$ is an 
  $(sn,\rho,\frac{1}{n}\log|\mathcal{F}_n|)$ zero-error code. Hence $\mathfrak{C}\subseteq C(G_1,G_2)$. The second statement can be proved similarly as in Proposition~\ref{prop:rho-capacityForBroadcast}.
\end{proof}

\subsection{Simple Properties of the $\rho$-Capacity}
\begin{proposition}
  Let $G$ be a graph with $m$ vertices. The following properties of its $\rho$-capacity are easily observed: 
\begin{enumerate}
\item $C_0(G)= C(G)$, i.e., the $\rho$-capacity for $\rho=0$ is equal to the Shannon capacity of the graph. 
\item $C_{\log{m}}(G) = 0$. 
\item $C_{\rho}(G)$ is monotonically non-increasing in $\rho$ on $[0,\log{m}]$ (by definition).  
\item $C_{\rho}(G)$ is concave in $\rho$ on $[0,\log{m}]$ (by time sharing).
\end{enumerate}  
\end{proposition}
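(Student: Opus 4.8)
The plan is to obtain items (1)--(3) straight from the definitions and to treat the concavity claim (4) by a time-sharing (product-of-families) argument. For (1): when $\rho=0$ we have $2^{\rho n}=1$, so $\alpha_{2^{\rho n}}(G^n)=\alpha_1(G^n)=\alpha(G^n)$ and~\eqref{eq:rho_cap_def} is literally~\eqref{eq:shannon_cap_def}. For (2): $G^n$ has $m^n$ vertices, and when $\rho=\log m$ we have $2^{\rho n}=m^n$; hence any $2^{\rho n}$-independent family of $G^n$ is a collection of pairwise disjoint subsets each of size at least $m^n$, which forces it to consist of a single subset (namely all of $V(G^n)$, a one-member family being vacuously independent). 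Thus $\alpha_{m^n}(G^n)=1$ for every $n$, so $C_{\log m}(G)=\lim_n \tfrac1n\log 1=0$. For (3): if $0\le\rho\le\rho'\le\log m$ then $2^{\rho n}\le 2^{\rho' n}$ for all $n$, so every $2^{\rho' n}$-independent family of $G^n$ is in particular a $2^{\rho n}$-independent family; hence $\alpha_{2^{\rho' n}}(G^n)\le\alpha_{2^{\rho n}}(G^n)$, and taking $\tfrac1n\log(\cdot)$ and letting $n\to\infty$ gives $C_{\rho'}(G)\le C_{\rho}(G)$.

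For (4), let $\rho=\lambda\rho_1+(1-\lambda)\rho_2$ with $\lambda\in(0,1)$ (the cases $\lambda\in\{0,1\}$ being trivial) and, by symmetry, $\rho_1\ge\rho_2$; the goal is $C_\rho(G)\ge\lambda C_{\rho_1}(G)+(1-\lambda)C_{\rho_2}(G)$. Fix $\varepsilon>0$ and choose block lengths $n_1,n_2$ with a $2^{\rho_1 n_1}$-independent family $\mathcal{F}_1$ of $G^{n_1}$ and a $2^{\rho_2 n_2}$-independent family $\mathcal{F}_2$ of $G^{n_2}$ satisfying $|\mathcal{F}_i|\ge 2^{n_i(C_{\rho_i}(G)-\varepsilon)}$. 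For positive integers $a,b$, iterating the Cartesian-product construction of Lemma~\ref{StrongProductIndependenceNumberInequality} ($a$ copies of $\mathcal{F}_1$ and $b$ copies of $\mathcal{F}_2$) produces, in $G^n$ with $n=an_1+bn_2$, a $2^{a\rho_1 n_1+b\rho_2 n_2}$-independent family with at least $2^{an_1(C_{\rho_1}(G)-\varepsilon)+bn_2(C_{\rho_2}(G)-\varepsilon)}$ members. Using $\rho_1-\rho=(1-\lambda)(\rho_1-\rho_2)$ and $\rho-\rho_2=\lambda(\rho_1-\rho_2)$, one checks that whenever $an_1/(bn_2)\ge\lambda/(1-\lambda)$ we have $a\rho_1 n_1+b\rho_2 n_2\ge\rho n$, so this family is also a $2^{\rho n}$-independent family of $G^n$; moreover in that regime $\mu:=an_1/n\ge\lambda$. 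Therefore $\tfrac1n\log\alpha_{2^{\rho n}}(G^n)\ge\mu(C_{\rho_1}(G)-\varepsilon)+(1-\mu)(C_{\rho_2}(G)-\varepsilon)$, and since $C_{\rho_1}(G)\ge C_{\rho_2}(G)$ by (3) and $\mu\ge\lambda$, this is at least $\lambda C_{\rho_1}(G)+(1-\lambda)C_{\rho_2}(G)-\varepsilon$. Letting $a,b\to\infty$ with $an_1/(bn_2)\to(\lambda/(1-\lambda))^{+}$ (hence $n\to\infty$ and $\mu\to\lambda$) and recalling that the limit defining $C_\rho(G)$ exists, we get $C_\rho(G)\ge\lambda C_{\rho_1}(G)+(1-\lambda)C_{\rho_2}(G)-\varepsilon$; now send $\varepsilon\to0$.

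The main obstacle is purely the bookkeeping in this last paragraph: because of integer rounding the size $2^{a\rho_1 n_1+b\rho_2 n_2}$ need not equal $2^{\rho n}$, so one has to bias the mixing ratio slightly toward the $\rho_1$-block and then argue --- via the monotonicity already proved in (3) --- that this bias only helps. An alternative that sidesteps the estimates altogether: by Proposition~\ref{prop:rho-capacityForBroadcast} the pairs $(\rho_1,C_{\rho_1}(G))$ and $(\rho_2,C_{\rho_2}(G))$ belong to the closed region $C(\overline{K}_m,G)$, which is convex by a standard time-sharing argument on codes, so it contains $(\rho,\lambda C_{\rho_1}(G)+(1-\lambda)C_{\rho_2}(G))$, and a second application of Proposition~\ref{prop:rho-capacityForBroadcast} yields (4).
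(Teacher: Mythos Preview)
The paper does not actually give a proof of this proposition---it declares the properties ``easily observed'' with the parenthetical hints ``(by definition)'' and ``(by time sharing)''. Your write-up for (1)--(3) is correct and is exactly the unpacking those hints invite. Your alternative route to (4) via Proposition~\ref{prop:rho-capacityForBroadcast} and convexity of the zero-error broadcast region is also perfectly valid and arguably the cleanest way to formalize the phrase ``by time sharing''.

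There is, however, a genuine slip in your direct combinatorial argument for (4). You set $\rho_1\ge\rho_2$ and then invoke ``$C_{\rho_1}(G)\ge C_{\rho_2}(G)$ by (3)'', but item (3) says $C_\rho$ is non-increasing, so in fact $C_{\rho_1}(G)\le C_{\rho_2}(G)$. With $\mu\ge\lambda$ this gives
\[
\mu\, C_{\rho_1}(G)+(1-\mu)\,C_{\rho_2}(G)\ \le\ \lambda\, C_{\rho_1}(G)+(1-\lambda)\,C_{\rho_2}(G),
\]
the reverse of what you assert, so the intermediate step ``this is at least $\lambda C_{\rho_1}(G)+(1-\lambda)C_{\rho_2}(G)-\varepsilon$'' is false as stated. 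The proof is easily repaired: simply drop that intermediate monotonicity step and pass directly to the limit. You already have, for each admissible $n=an_1+bn_2$,
\[
\tfrac{1}{n}\log\alpha_{2^{\rho n}}(G^n)\ \ge\ \mu\bigl(C_{\rho_1}(G)-\varepsilon\bigr)+(1-\mu)\bigl(C_{\rho_2}(G)-\varepsilon\bigr),
\]
and along your subsequence $\mu\to\lambda$ while the left side tends to $C_\rho(G)$ (since the full limit exists); continuity of the right side in $\mu$ then yields $C_\rho(G)\ge\lambda C_{\rho_1}(G)+(1-\lambda)C_{\rho_2}(G)-\varepsilon$, and $\varepsilon\to 0$ finishes. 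Alternatively, you could bias toward the $\rho_2$-block instead (take $an_1/(bn_2)\le\lambda/(1-\lambda)$), so that $\mu\le\lambda$ and then the monotonicity from (3) does give the inequality in the direction you want.
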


We now define three quantities related to the $\rho$-capacity, which  will be of interest in the sequel. Let $G$ be a graph over $m$ vertices. We write $\rho^*(G)$ for the maximal $\rho\in[0,\log{m}]$ such that $C_{\rho}(G)=C(G)$, and $\rho_{*}(G)$ for the minimal $\rho\in[0,\log{m}]$ such that $C_{\rho}(G)=\log{m}-\rho$. We refer to  $\rho^*(G)$ as the \emph{free-lunch point} of $G$, and to $\rho_{*}(G)$ as the \emph{packing point} of $G$. The \emph{concave conjugate} of $C_\rho(G)$ is defined as
$$ C_{\star}(G,\gamma)\triangleq\inf_{\rho\in[0,\log{m}]}\gamma\rho-C_{\rho}(G) \qquad\text{for } \gamma\in[-1,0].$$

Here are two simple bounds on the $\rho$-capacity in terms of the Shannon capacity $C(G)$. The lower bound follows by time sharing (concavity), and the upper bound follows from the definition of an independent family.
\begin{proposition}\label{lem:trivial_bound_rho}
  Let $G$ be a graph with $m$ vertices. Then, for $0\le\rho\le\log{m}$, we have
  \begin{align}\label{NaiveBounds}
  \frac{C(G)}{\log{m}}(\log{m}-\rho) \le C_{\rho}(G)\le\min\{C(G), \log{m}-\rho\}.
\end{align}
\end{proposition}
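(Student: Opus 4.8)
The plan is to treat the two upper bounds and the single lower bound separately; each is a short consequence of the definitions together with one previously stated structural fact.

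\emph{Upper bound $C_\rho(G)\le C(G)$.} First I would establish the finite-$n$ inequality $\alpha_{2^{\rho n}}(G^n)\le\alpha(G^n)$. Indeed, given a $2^{\rho n}$-independent family $\{V_1,\dots,V_l\}$ of $G^n$, choose one vertex $v_i\in V_i$ from each block; the blocks are disjoint so the $v_i$ are distinct, and the blocks are pairwise non-adjacent so $v_i\not\sim v_j$ for $i\ne j$. Hence $\{v_1,\dots,v_l\}$ is an independent set of $G^n$ of size $l$, which gives $\alpha_{2^{\rho n}}(G^n)\le\alpha(G^n)$. Taking $\tfrac1n\log(\cdot)$ and letting $n\to\infty$, and comparing \eqref{eq:shannon_cap_def} with \eqref{eq:rho_cap_def}, yields $C_\rho(G)\le C(G)$.

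\emph{Upper bound $C_\rho(G)\le\log m-\rho$.} This is a packing bound. The vertex set of $G^n$ has $m^n$ elements, and a $2^{\rho n}$-independent family consists of pairwise disjoint subsets each of size at least $2^{\rho n}$, so it has at most $m^n/2^{\rho n}=2^{n(\log m-\rho)}$ blocks. Thus $\alpha_{2^{\rho n}}(G^n)\le 2^{n(\log m-\rho)}$; dividing the logarithm by $n$ and passing to the limit gives $C_\rho(G)\le\log m-\rho$. Combining with the previous paragraph proves the right-hand inequality $C_\rho(G)\le\min\{C(G),\log m-\rho\}$.

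\emph{Lower bound.} Here I would invoke concavity. By the previously stated simple properties, $\rho\mapsto C_\rho(G)$ is concave on $[0,\log m]$, equals $C(G)$ at $\rho=0$, and equals $0$ at $\rho=\log m$. The affine function $\ell(\rho)=\frac{C(G)}{\log m}(\log m-\rho)$ agrees with $C_\rho(G)$ at both endpoints, so concavity forces $C_\rho(G)\ge\ell(\rho)$ for all $\rho\in[0,\log m]$, which is the left-hand inequality. Alternatively, one can argue directly from Lemma~\ref{StrongProductIndependenceNumberInequality}: writing $G^n=G^{n_1}\boxtimes G^{n_2}$ with $n_1=\rho n/\log m$ (so $m^{n_1}=2^{\rho n}$) and $n_2=n(\log m-\rho)/\log m$, take the single block $V(G^{n_1})$ in $G^{n_1}$ and a maximum independent set of $G^{n_2}$ in $G^{n_2}$; then $\alpha_{2^{\rho n}}(G^n)\ge\alpha_{m^{n_1}}(G^{n_1})\cdot\alpha(G^{n_2})\ge\alpha(G^{n_2})$, and $\tfrac1n\log\alpha_{2^{\rho n}}(G^n)\ge\tfrac{n_2}{n}\cdot\tfrac1{n_2}\log\alpha(G^{n_2})\to\tfrac{\log m-\rho}{\log m}\,C(G)$.

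I do not anticipate a genuine obstacle: as the paper itself indicates, the statement is bookkeeping with the definitions. The only matters needing a word of care are the usual integrality fudge (that $2^{\rho n}$ and $n_1,n_2$ need not be integers, which the paper explicitly agrees to ignore) and the fact that the concavity route relies on the ``by time sharing'' assertion stated earlier; a fully self-contained proof would simply substitute the explicit time-sharing construction above for that appeal.
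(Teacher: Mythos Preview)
Your proposal is correct and mirrors the paper's own (one-line) justification: the upper bound is read off directly from the definition of an independent family (representatives give $C_\rho(G)\le C(G)$, disjointness gives the packing bound $\log m-\rho$), and the lower bound is exactly the time-sharing/concavity argument between the endpoints $(\rho,C_\rho)=(0,C(G))$ and $(\log m,0)$. Your optional explicit construction for the lower bound is just the time-sharing argument written out, so the approaches coincide.
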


\begin{example}{\rm 
    Let $G$ be a disjoint union of two cliques, each of size $\tfrac{m}{2}$. It is easy to see that $C(G)=1$, and that $\alpha_{\frac{m}{2}}(G)=2$. Hence, $C_\rho(G) \geq 1$ for any $\rho\in[0,\log{\tfrac{m}{2}}]$, and by concavity also $C_\rho(G) \geq \log{m}-\rho$ for any $\rho\in[\log{\tfrac{m}{2}}, \log{m}]$. Hence the upper bound from Proposition~\ref{lem:trivial_bound_rho} is tight in this case. In particular, the free-lunch point and the packing point coincide,
    $\rho^*(G)=\rho_{*}(G)=\log{\tfrac{m}{2}}$. The concave conjugate is given by $C_{\star}(G,\gamma) = \gamma\rho_*(G) - C_0(G) = \gamma\log{\tfrac{m}{2}} - 1$.  
}
\end{example}

\begin{example}{\rm
    \cite[Theorem 7]{We2011}
  Let $G$ be the complete graph on $m$ vertices minus a clique on $d$ vertices. Then
  $$C_{\rho}(G)=\log{d}-\frac{\log{d}}{\log{m}}\,\rho.$$
  This meets the lower bound of Proposition~\ref{lem:trivial_bound_rho}. In particular, the free-lunch point $\rho^*(G)=0$ and the packing point $\rho_{*}(G)=\log{m}$. The concave conjugate is given by $C_{\star}(G,\gamma) = \min\left\{-\log{d}, \gamma\log{m}\right\}$.  
  \label{CliqueMinusClique}
}
\end{example}

\section{Bounds on the $\rho$-Capacity}\label{sec:rho_capacity}
In this section, we give three types of bounds on the $\rho$-capacity of a graph. The first bound is trivially derived from the capacity region of the degraded broadcast channel under the vanishing-error criterion. The second is based on the distribution of independent families and clique covers, via an explicit expression for the $\rho$-capacity of a disjoint union of cliques. The third generalizes Lov\'asz's $\vartheta$-function upper bound for the Shannon capacity. 

\subsection{An Information-Theoretic Upper Bound}
The random variables $X,Y,Z$ are said to form a \emph{Markov chain} in that order, denoted by $X-Y-Z$, if their joint probability mass function can be written as $p(x,y,z)=p(x)p(y|x)p(z|y).$

\begin{theorem} \label{UpperBoundFromVanishingError}
  The $\rho$-capacity of a graph $G$ satisfies
  $$C_\rho(G) \leq \min_{p(y|x)}\max_{\substack{U-X-Y\\ H(X|U) \ge \rho\\}} I(U;Y)$$
where the min is taken over all possible point-to-point channels $p(y|x)$ associated with a confusion graph $G$, and the random variable $U$ has cardinality bounded by $|\mathcal{U}|\le\min\{|\mathcal{X}|,|\mathcal{Y}|\}$.
\end{theorem}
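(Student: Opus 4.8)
The plan is to obtain this bound essentially ``for free'' from the single-letter characterization of the vanishing-error capacity region of the degraded broadcast channel, using the elementary fact that a zero-error code is \emph{a fortiori} a vanishing-error code. By Proposition~\ref{prop:rho-capacityForBroadcast} it suffices to upper bound $\sup\{R:(\rho,R)\in C(\overline{K}_m,G)\}$. Fix an arbitrary point-to-point channel $p(y|x)$ whose confusion graph is $G$, and pair it with the identity (``noiseless'') channel $Y_1=X$, which realizes the confusion graph $\overline{K}_m$. Any rate pair achievable with zero error for the confusion-graph pair $(\overline{K}_m,G)$ is, after concatenating copies of the code, achievable with vanishing error over this specific channel realization; hence $C(\overline{K}_m,G)$ is contained in the vanishing-error broadcast capacity region of that realization.

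Next I would invoke the classical single-letter characterization of this region. Because $Y_1=X$, the Markov chain $X-Y_1-Y_2$ holds trivially (with $Y_2$ the output of $p(y|x)$), so the channel is physically degraded with receiver~$2$ the weaker one; its vanishing-error region is therefore $\{(R_1,R_2):R_1\le I(X;Y_1|U),\ R_2\le I(U;Y_2)\ \text{for some}\ U-X-(Y_1,Y_2)\}$, where the input distribution $p(x)$ is itself part of the optimization. Substituting $Y_1=X$ gives $I(X;Y_1|U)=H(X|U)$, so the region becomes $\{(R_1,R_2):R_1\le H(X|U),\ R_2\le I(U;Y_2)\ \text{for some}\ U-X-Y_2\}$. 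Slicing at $R_1=\rho$ (feasible whenever $\rho\le\log m$, e.g.\ via a constant $U$ and uniform $X$) shows that the supremum of $R$ with $(\rho,R)$ in this region equals $\max_{U-X-Y_2:\,H(X|U)\ge\rho} I(U;Y_2)$, the maximum being attained once $|\mathcal{U}|$ is bounded, which supplies compactness.

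Combining the two steps gives $C_\rho(G)\le\max_{U-X-Y:\,H(X|U)\ge\rho} I(U;Y)$ for every channel $p(y|x)$ with confusion graph $G$, and minimizing over all such channels yields the claimed inequality; the cardinality bound $|\mathcal{U}|\le\min\{|\mathcal{X}|,|\mathcal{Y}|\}$ is the standard Fenchel--Eggleston--Carath\'eodory bound for the degraded broadcast channel, specialized to the present setting where $|\mathcal{Y}_1|=|\mathcal{X}|$. The only point requiring genuine care is getting the direction of degradedness right---it is receiver~$2$ (the noisy one), not receiver~$1$, that is degraded, so that the converse part of the characterization legitimately attaches the auxiliary variable $U$ to receiver~$2$'s rate. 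A secondary subtlety, and the reason only an upper bound emerges, is that the zero-error region depends only on the confusion graphs while the vanishing-error region depends on the actual channel realization, which is precisely why the outer minimization over channels $p(y|x)$ realizing $G$ appears.
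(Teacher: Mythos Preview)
Your proposal is correct and follows essentially the same approach as the paper: both embed the zero-error region $C(\overline{K}_m,G)$ inside the vanishing-error capacity region of the degraded broadcast channel with a noiseless first receiver and an arbitrary realization $p(y|x)$ of $G$ for the second, invoke the standard single-letter characterization (with $I(X;Y_1|U)=H(X|U)$ since $Y_1=X$), and then minimize over realizations. Your write-up is simply more explicit than the paper's about the degradedness direction and the cardinality-bound justification.
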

\begin{proof}
  Consider a broadcast channel where the first receiver sees a noiseless channel, i.e., observes the input $x$, and the second receiver sees the input $x$ through a noisy channel $p(y|x)$. The capacity region for this broadcast channel under the vanishing-error criterion is the convex hull of the closure of all $(R_1,R_2)$ satisfying
  \begin{align}\label{capacity_region_vanishing}
    \begin{split}
      R_1 &\le H(X|U),\\
      R_2 &\le I(U;Y)
    \end{split}
  \end{align}
  for some Markov chain $U-X-Y$, where the auxiliary random variable $U$ has cardinality bounded by $|\mathcal{U}|\le\min\{|\mathcal{X}|,|\mathcal{Y}|\}$ (see \cite[Theorem 15.6.2]{CoTh2006}). In particular, if $p(y|x)$ has confusion graph $G$, then the above region contains the zero-error capacity region $C(\overline{K}_m,G)$ where $m=|\mathcal{X}|$. The result now follows from Proposition~\ref{prop:rho-capacityForBroadcast}. 
\end{proof}

\subsection{Upper and Lower Bounds Based on Disjoint Union of Cliques}

Recall that the R\'{e}nyi entropy of order $\beta$, where $\beta \geq 0$ and $\beta \neq 1$,
is defined as
$$
H_{\beta}(P)=\frac{1}{1-\beta}\log\sum_{i=1}^s p_i^{\beta},
$$
where $P=\{p_1,\ldots,p_s\}$ is a probability distribution.
The limiting value of $H_\beta$ as $\beta \rightarrow 1$ is the Shannon entropy
$H_1(P)=H(P)$. 
Let $\mathcal{F}$ be a family of disjoint subsets of sizes $\{m_1,\ldots,m_s\}$. We define $M_\mathcal{F}\dfn\sum_{i=1}^sm_i$, and $Q_\mathcal{F}$ to be the distribution induced by the family, namely the distribution $\left(m_1/M_\mathcal{F},\ldots, m_s/M_\mathcal{F}\right)$. 

The Shannon capacity satisfies~\cite{Sh1956}
\begin{align}\label{sandwich_bound}
  \log{\alpha(G)} \le C(G) \le \log{{\rm cc}(G)}
\end{align}
where $\alpha(G)$ is the independence number of $G$ and cc($G$) is the vertex clique covering number of $G$. A \emph{vertex clique covering} of a graph $G$ is set of cliques such that every vertex of $G$ is a member of exactly one clique. A minimum clique covering is a clique covering of minimum size, and the \emph{clique covering number} cc($G$) is the size of a minimum clique covering.
Now we describe a natural generalization of this bound to the $\rho$-capacity, which also include the bound~\eqref{sandwich_bound} as a special case when $\rho=0$. Note that the derivations in the proofs of Theorems~\ref{thm:bounds}--\ref{rho_capacity_disjoint_union_of_cliques} are incidentally very similar to those of~\cite[Chapter 5]{jelinek}, where Jelinek calculated the error exponents in source coding.

\begin{theorem}\label{thm:bounds}
 Let $G$ be a graph with $m$ vertices. Suppose that $\mathcal{F}_1$ is an independent family of $G$ and $\mathcal{F}_2$ is a vertex clique cover of $G$.
Then
  \begin{align}\label{lower_bound_rho}
    C_{\rho}(G) \ge \inf_{\beta\in[0,1]}(1-\beta)H_{\beta}(Q_{\mathcal{F}_1})+\beta(\log{M_{\mathcal{F}_1}}-\rho)
  \end{align}
for $0\leq\rho\leq \log{M_{\mathcal{F}_1}}$, and 
  \begin{align}\label{upper_bound_rho}
    C_{\rho}(G) \le \inf_{\beta\in[0,1]}(1-\beta)H_{\beta}(Q_{\mathcal{F}_2})+\beta(\log{m}-\rho)
  \end{align}
  for $0\le\rho\le\log{m}$. Moreover, if $G$ is a disjoint union of cliques, then both bounds coincide (and are hence tight). The minimizing $\beta$ is given in~\eqref{rho_expression} and~\eqref{beta-expression}. 
\end{theorem}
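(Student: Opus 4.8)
The plan is to reduce both inequalities to one exact computation --- the $\rho$-capacity of a disjoint union of cliques --- and then to carry out that computation.

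\textbf{Reduction to a union of cliques.} Write $\mathcal{F}_1=\{V_1,\dots,V_s\}$ with $|V_j|=m_j$ and $M_{\mathcal{F}_1}=\sum_j m_j$. For a colour sequence $\mathbf{c}\in[s]^n$ put $B_{\mathbf{c}}=V_{c_1}\times\cdots\times V_{c_n}\subseteq V(G^n)$. Since the $V_j$ are pairwise disjoint and pairwise non-adjacent, the boxes $\{B_{\mathbf{c}}\}$ are pairwise disjoint and, for $\mathbf{c}\ne\mathbf{c}'$, pairwise non-adjacent in $G^n$ (they disagree non-adjacently in a coordinate where $\mathbf{c},\mathbf{c}'$ differ), and their size profile $|B_{\mathbf{c}}|=\prod_i m_{c_i}$ is exactly that of the product cliques of $(K_{m_1}+\cdots+K_{m_s})^{n}$. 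Merging whole boxes into groups of total size $\ge 2^{\rho n}$ yields a valid $2^{\rho n}$-independent family of $G^n$ irrespective of the internal structure of the $B_{\mathbf{c}}$, so $\alpha_{2^{\rho n}}(G^n)\ge\alpha_{2^{\rho n}}((K_{m_1}+\cdots+K_{m_s})^{n})$ and hence $C_\rho(G)\ge C_\rho(K_{m_1}+\cdots+K_{m_s})$. Dually, if $\mathcal{F}_2=\{W_1,\dots,W_t\}$ with $|W_j|=\mu_j$ and $\sum_j\mu_j=m$, then $K_{\mu_1}+\cdots+K_{\mu_t}$ is a spanning subgraph of $G$ (it retains all edges inside the cliques $W_j$ and deletes the others); deleting edges only enlarges independent families, so $C_\rho(G)\le C_\rho(K_{\mu_1}+\cdots+K_{\mu_t})$. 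Both statements of the theorem then follow from the identity below, and when $G$ is itself a disjoint union of cliques we may take $\mathcal{F}_1=\mathcal{F}_2$ equal to its clique partition, so that $M_{\mathcal{F}_1}=m$, $Q_{\mathcal{F}_1}=Q_{\mathcal{F}_2}$, and the two bounds coincide.

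\textbf{Exact value for a union of cliques.} Let $G=K_{a_1}+\cdots+K_{a_r}$ and $A=\sum_j a_j$. In $G^n$ the maximal cliques are the boxes $C_{\mathbf{c}}$, $\mathbf{c}\in[r]^n$, which are pairwise non-adjacent, while any two vertices inside one $C_{\mathbf{c}}$ are adjacent; hence in any $2^{\rho n}$-independent family each box lies inside a single member, and $\alpha_{2^{\rho n}}(G^n)$ equals the largest number of groups into which a sub-multiset of $\{|C_{\mathbf{c}}|\}$ can be partitioned with every group-sum $\ge 2^{\rho n}$. Group colour sequences by type $P=(k_j/n)$: there are $\binom{n}{k_1,\dots,k_r}\le 2^{nH(P)}$ of them, each of size $2^{nL_P}$ with $L_P=\sum_j P_j\log a_j$. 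For the lower bound, fix the best type $P$ and either use each box as its own group (if $L_P\ge\rho$) or bunch $\lceil 2^{n(\rho-L_P)}\rceil$ boxes per group (if $L_P<\rho$), losing only a factor polynomial in $n$; this gives $\tfrac1n\log\alpha_{2^{\rho n}}(G^n)\ge\max_P\bigl[H(P)-(\rho-L_P)^+\bigr]-o(1)$. For the upper bound, assign to each box $C_{\mathbf{c}}$ the weight $\phi(C_{\mathbf{c}})=\min\{1,\,|C_{\mathbf{c}}|\,2^{-\rho n}\}$; every group accumulates $\phi$-weight $\ge 1$ (if it contains a box of size $\ge 2^{\rho n}$ this is immediate, otherwise its weight is $2^{-\rho n}$ times a group-sum that is $\ge 2^{\rho n}$), so the number of groups is at most $\sum_{\mathbf{c}}\phi(C_{\mathbf{c}})=\sum_P\binom{n}{k_1,\dots,k_r}\min\{1,2^{n(L_P-\rho)}\}\le(n+1)^r\max_P 2^{n(H(P)-(\rho-L_P)^+)}$. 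Letting $n\to\infty$ gives
\[
C_\rho(K_{a_1}+\cdots+K_{a_r})=\max_P\bigl[H(P)-(\rho-L_P)^+\bigr],
\]
the maximum over probability vectors $P$ on $[r]$, valid for $0\le\rho\le\log A$.

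\textbf{From the maximum to the infimum, and the optimal $\beta$.} It remains to show $\max_P[H(P)-(\rho-L_P)^+]=\inf_{\beta\in[0,1]}\bigl[\log\sum_j a_j^\beta-\beta\rho\bigr]$; writing $a_j=AQ_j$ rewrites the right side as $\inf_{\beta\in[0,1]}(1-\beta)H_\beta(Q)+\beta(\log A-\rho)$. For ``$\le$'': for every $P$ and $\beta\in[0,1]$, $(\rho-L_P)^+\ge\beta(\rho-L_P)$ and, by Gibbs' inequality against the tilted law $a_j^\beta/\sum_l a_l^\beta$, $H(P)+\beta L_P\le\log\sum_j a_j^\beta$, so $H(P)-(\rho-L_P)^+\le\log\sum_j a_j^\beta-\beta\rho$. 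For ``$\ge$'': the function $f(\beta)=\log\sum_j a_j^\beta-\beta\rho$ is convex with $f'(\beta)=L_{P^{(\beta)}}-\rho$ for $P^{(\beta)}_j\propto a_j^\beta$; take its minimizer $\beta^\star\in[0,1]$ and $P^\star=P^{(\beta^\star)}$, for which Gibbs is tight, $H(P^\star)+\beta^\star L_{P^\star}=\log\sum_j a_j^{\beta^\star}$. Optimality of $\beta^\star$ forces $(\rho-L_{P^\star})^+=\beta^\star(\rho-L_{P^\star})$ (interior $\beta^\star$ gives $L_{P^\star}=\rho$; $\beta^\star=1$ gives $f'(1)\le0$, i.e.\ $L_{P^\star}\le\rho$; $\beta^\star=0$ gives $f'(0)\ge0$, i.e.\ $L_{P^\star}\ge\rho$), so $H(P^\star)-(\rho-L_{P^\star})^+=f(\beta^\star)$. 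The same computation identifies the minimizing $\beta$ as the clipping to $[0,1]$ of the solution of $\sum_j a_j^\beta\log a_j=\rho\sum_j a_j^\beta$, i.e.\ the $\beta$ for which $\rho$ equals the expression in~\eqref{rho_expression}, inverted to give~\eqref{beta-expression}. Substituting back $a_j=m_j$, $A=M_{\mathcal{F}_1}$ in the lower bound and $a_j=\mu_j$, $A=m$ in the upper bound completes the proof.

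\textbf{Main obstacle.} The conceptual steps --- the product-box family and the spanning-subgraph comparison --- are routine; the real work lies in the exact evaluation for a union of cliques, and within that in the converse weighting $\phi=\min\{1,|C|\,2^{-\rho n}\}$, which is precisely what keeps the converse from collapsing to the trivial $\min\{C(G),\log m-\rho\}$, together with the convex-duality bookkeeping that recasts $\max_P[H(P)-(\rho-L_P)^+]$ in R\'enyi form and extracts the optimal $\beta$.
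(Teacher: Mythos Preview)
Your proof is correct and follows the same strategy as the paper: reduce both inequalities to the exact computation of $C_\rho$ for a disjoint union of cliques (via the induced independent-family subgraph for the lower bound and the clique-cover spanning subgraph for the upper bound), then evaluate that case by the method of types. The execution is somewhat tidier than the paper's --- your weighting $\phi(C)=\min\{1,|C|\,2^{-\rho n}\}$ is exactly the quantity $A(n)+B(n)$ that the paper computes in two pieces, and your Gibbs/Legendre argument for $\max_P[H(P)-(\rho-L_P)^+]=\inf_{\beta\in[0,1]}\log\sum_j a_j^\beta-\beta\rho$ replaces the KKT calculation in the appendix while handling all $\rho\in[0,\log A]$ uniformly rather than splitting into three $\rho$-regimes --- but the approach is the same.
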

\begin{proof}
  We prove the bounds are tight for $G=K_{m_1}+K_{m_2}+\cdots+K_{m_s}$ that is a disjoint union of cliques.\footnote{Note that the bounds for a disjoint union of cliques appear implicitly in~\cite{We2011, Pi1978, Ma1977, Wi1990}. Here we provide the exact analytical expression.} The lower bound for a general graph follows since given an independent family, we can consider the associated induced subgraph, and add edges to create a disjoint union of cliques (hence decrease the  $\rho$-capacity). The upper bound for a general graph will follow by noting that given a vertex clique cover, we can remove edges (hence increase the $\rho$-capacity) to create a disjoint union of cliques. 

The case that $m_i,1\le i\le s$ are all equal is easy to prove, thus we can assume that $s\ge2$ and $m_i,1\le i\le s$ are not all equal. Set $\mathcal{F}=\{V(K_{m_1}),\ldots,V(K_{m_s})\}$ and $m=m_1+\dots+m_s$. Now we have 
\begin{align}\label{G^n}
   G^n = (K_{m_1}+K_{m_2}+\cdots+K_{m_s})^{n} \cong \sum_{i_1+\cdots+i_s=n}{n\choose i_1,i_2,\ldots,i_s}K_{m_1^{i_1}\cdots m_s^{i_s}},
 \end{align}
where the sum is taken over all combinations of nonnegative integer indices $i_1$ through $i_s$ such that the sum of all $i_j$ is $n$. 
From~\eqref{G^n} we see $G^n$ is also a disjoint union of cliques. Let us write $G^n$ as a disjoint union of small and large cliques, i.e.,  $G^n=G_1+G_2$ where
\begin{align*}
  G_1 &= \sum_{\substack{i_1+\cdots+i_s=n\\m_1^{i_1}\cdots m_s^{i_s}< 2^{\rho n}}}{n\choose i_1,i_2,\ldots,i_s}K_{m_1^{i_1}\cdots m_s^{i_2}},\\
  G_2 &= \sum_{\substack{i_1+\cdots+i_s=n\\m_1^{i_1}\cdots m_s^{i_s}\ge2^{\rho n} }}{n\choose i_1,i_2,\ldots,i_s}K_{m_1^{i_1}\cdots m_s^{i_2}}.
\end{align*}
It is easy to see that $\alpha_{2^{\rho n}}(G_2)\le \alpha_{2^{\rho n}}(G^n) \le \alpha_{2^{\rho n}}(G_1)+\alpha_{2^{\rho n}}(G_2)$, and
\begin{align*}
  \alpha_{2^{\rho n}}(G_1)\le A(n) &\dfn \frac{1}{2^{\rho n}}\sum_{\substack{i_1+\cdots+i_s=n\\m_1^{i_1}\cdots m_s^{i_s}\le 2^{\rho n}}}{n\choose i_1,i_2,\ldots,i_s}m_1^{i_1}\cdots m_s^{i_2}=\frac{|V(G_1)|}{2^{\rho n}},\\
 \alpha_{2^{\rho n}}(G_2) = B(n) &\dfn \sum_{\substack{i_1+\cdots+i_s=n\\m_1^{i_1}\cdots m_s^{i_s}\ge2^{\rho n} }}{n\choose i_1,i_2,\ldots,i_s}.
\end{align*}
Hence
\begin{align}
B(n)\le \alpha_{2^{\rho n}}(G^n) \le A(n)+B(n).
  \label{sandwich_bound_G^n}
\end{align}

Suppose that
\begin{align}\label{rho_interval}
  \frac{1}{s}\sum_{i=1}^s \log{m_i}\le\rho\le\frac{1}{m}\sum_{i=1}^s m_i\log{m_i}.
\end{align}
By~\eqref{sandwich_bound_G^n} and Lemma~\ref{optimization_problems} of the Appendix, we have 
\begin{align}\label{rho-function-1}
  C_{\rho}(G)
  =\lim_{n\rightarrow\infty}\frac{1}{n}\log{A(n)}
  =\lim_{n\rightarrow\infty}\frac{1}{n}\log{B(n)}
  =\log{\left(\sum_{i=1}^s m_i^{\beta}\right)}-\beta\rho
  =(1-\beta)H_{\beta}(Q_{\mathcal{F}})+\beta(\log{M_{\mathcal{F}}}-\rho)
\end{align}
where $\beta\in[0,1]$ is the unique solution such that
 \begin{align}\label{rho_expression}
   \rho= \left(\sum_{i=1}^s m_i^{\beta}\log{m_i}\right)\Big{/}\sum_{i=1}^s m_i^{\beta}.
 \end{align}
Note that: if $\rho=\frac{1}{s}\sum_{i=1}^s \log{m_i}$ associated with $\beta=0$ then $C_{\rho}(G)=C(G)=\log{s}$; and if $\rho=\frac{1}{m} \sum_{i=1}^s m_i\log{m_i}$ associated with $\beta=1$ then $C_{\rho}(G)=\log{m}-\rho$. Therefore
\begin{align}\label{rho-capacity-outrange}
    \begin{split}
    C_{\rho}(G)=\begin{cases}
      \log{s} & \text{if }0\le\rho\le\frac{1}{s}\sum_{i=1}^s \log{m_i}\\
      \log{m}-\rho & \text{if }\frac{1}{m} \sum_{i=1}^s m_i\log{m_i}\le\rho\le\log{m},
    \end{cases}
  \end{split}
  \end{align}
  where the first equality follows from the monotonically decreasing property of $C_{\rho}(G)$, and the second follows by time sharing. Via direct computation, we can verify
  \begin{align}\label{rho-function-2}
    \begin{split}
    C_{\rho}(G) 
    &=\inf_{\beta\in[0,1]}\log{\left(\sum_{i=1}^s m_i^{\beta}\right)}-\beta\rho \\
    &= \inf_{\beta\in[0,1]}(1-\beta)H_{\beta}(Q_{\mathcal{F}})+\beta(\log{M_{\mathcal{F}}}-\rho)
  \end{split}
  \end{align}
  for $0\le\rho\le\log{M_{\mathcal{F}}}$. Here
  \begin{align}\label{beta-expression}
    \begin{split}
    \beta^* &= \underset{\beta\in[0,1]}{\arg\inf}\ (1-\beta)H_{\beta}(Q_{\mathcal{F}})+\beta(\log{M_{\mathcal{F}}}-\rho)\\
  &=\begin{cases}
    0 & \text{if }0\le\rho\le\frac{1}{s}\sum_{i=1}^s \log{m_i}\\
    1 & \text{if }\frac{1}{m} \sum_{i=1}^s m_i\log{m_i}\le\rho\le\log{M_{\mathcal{F}}},
  \end{cases}
\end{split}
\end{align}
and $\beta^*\in[0,1]$ is the unique solution satisfying~\eqref{rho_expression} when $\rho$ satisfies~\eqref{rho_interval}. 
\end{proof}

\begin{remark}{\rm
Note that for $\rho=0$, the bound~\eqref{lower_bound_rho} yields $C(G)\ge\log{\alpha(G)}$. Moreover, if we pick $\beta=0$ in~\eqref{upper_bound_rho},
then it follows that $C_{\rho}(G)\le\log{{\rm cc}(G)}$, and
if we pick $\beta=1$ in~\eqref{upper_bound_rho},
then it follows that $C_{\rho}(G)\le\log{m}-\rho$.
}
\end{remark}

In the following theorem we provide an alternative characterization for the $\rho$-capacity of a disjoint union of cliques, via its concave conjugate. We also explicitly find the associated free-lunch point and packing point. 
\begin{theorem}\label{rho_capacity_disjoint_union_of_cliques}
  Let $G=K_{m_1}+K_{m_2}+\cdots+K_{m_s}$ be a disjoint union of cliques and $m=m_1+\dots+m_s$. Suppose that $s\ge2$ and $m_i,1\le i\le s$ are not all equal.
  \begin{enumerate}
    \item  The concave conjugate of the $\rho$-capacity is given by  
  \begin{align*}
     C_{\star}(G,\gamma)&=-\log{\sum_{i=1}^s m_i^{-\gamma}} \quad \text{for } -1\le\gamma\le0.
  \end{align*}
    \item 
      The $\rho$-capacity $C_{\rho}(G)$ is differentiable on $[0,\log{m}]$ and 
      $$C_{\rho}'(G)=-\,\underset{\beta\in[0,1]}{\arg\inf}(1-\beta)H_{\beta}(Q_{\mathcal{F}})+\beta(\log{M_{\mathcal{F}}}-\rho).$$
    \item The free-lunch point $\rho^*(G)=\frac{1}{s}\sum_{i=1}^s \log{m_i}$.
    \item The packing point $\rho_*(G)=\frac{1}{m} \sum_{i=1}^s m_i\log{m_i}$.
  \end{enumerate}
\end{theorem}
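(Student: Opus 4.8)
The plan is to reduce all four parts to the variational identity for the $\rho$-capacity of a disjoint union of cliques obtained in Theorem~\ref{thm:bounds}. Put $\phi(\beta)\dfn\log\sum_{i=1}^s m_i^{\beta}$; then, with $\mathcal{F}=\{V(K_{m_1}),\dots,V(K_{m_s})\}$ and $M_{\mathcal F}=m$, equation~\eqref{rho-function-2} reads $C_\rho(G)=\inf_{\beta\in[0,1]}\bigl(\phi(\beta)-\beta\rho\bigr)$ for $0\le\rho\le\log m$. I would first record the two properties of $\phi$ that power everything: it is smooth and \emph{strictly} convex on $\RR$ (since the $m_i$ are not all equal), and $\phi'(\beta)=\frac{\sum_i m_i^{\beta}\log m_i}{\sum_i m_i^{\beta}}$ is a weighted mean of the $\log m_i$, hence continuous and strictly increasing, with $\phi'(0)=\frac1s\sum_i\log m_i$, $\phi'(1)=\frac1m\sum_i m_i\log m_i$, and $0\le\phi'(\beta)<\log m$ for all $\beta$ (the strict upper bound uses $s\ge2$, which forces $m_i<m$ for each $i$).

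For (1), negating the identity gives $-C_\rho(G)=\sup_{\beta\in[0,1]}\bigl(\beta\rho-\phi(\beta)\bigr)$, so
$$C_\star(G,\gamma)=\inf_{\rho\in[0,\log m]}\ \sup_{\beta\in[0,1]}\ \bigl((\gamma+\beta)\rho-\phi(\beta)\bigr).$$
The objective is affine (hence quasi-convex and continuous) in $\rho$ on the compact interval $[0,\log m]$ and strictly concave (hence quasi-concave and continuous) in $\beta$ on $[0,1]$, so Sion's minimax theorem permits exchanging $\inf$ and $\sup$. For fixed $\beta$ the inner $\inf$ over $\rho$ of the linear map $\rho\mapsto(\gamma+\beta)\rho-\phi(\beta)$ is attained at $\rho=0$ when $\gamma+\beta\ge0$ and at $\rho=\log m$ when $\gamma+\beta<0$; using $-\gamma\in[0,1]$, this leaves $\sup_{\beta\in[-\gamma,1]}\bigl(-\phi(\beta)\bigr)$ on one piece and $\sup_{\beta\in[0,-\gamma)}\bigl((\gamma+\beta)\log m-\phi(\beta)\bigr)$ on the other. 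On the first piece monotonicity of $\phi$ puts the maximum at $\beta=-\gamma$; on the second, the derivative $\log m-\phi'(\beta)>0$ forces the supremum to be approached as $\beta\uparrow-\gamma$; either way the value is $-\phi(-\gamma)=-\log\sum_{i=1}^s m_i^{-\gamma}$.

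For (2), I would use the standard fact that a concave function realized as a pointwise infimum of affine maps $\rho\mapsto\phi(\beta)-\beta\rho$ is differentiable exactly at those $\rho$ where the minimizing $\beta$ is unique, with derivative equal to minus that slope. Uniqueness is immediate from strict convexity of $\phi$: the map $\beta\mapsto\phi(\beta)-\beta\rho$ has strictly increasing derivative $\phi'(\beta)-\rho$, so its minimizer over $[0,1]$ is $\beta=0$ when $\rho\le\phi'(0)$, is $\beta=1$ when $\rho\ge\phi'(1)$, and is the unique root of $\phi'(\beta)=\rho$ otherwise --- precisely the $\beta^*$ of~\eqref{beta-expression}. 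Hence $C_\rho(G)$ is differentiable on $[0,\log m]$ (one-sidedly at the endpoints) with $C_\rho'(G)=-\beta^*(\rho)$.

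For (3) and (4): by~\eqref{rho-capacity-outrange}, $C_\rho(G)=\log s=C(G)$ for $\rho\le\phi'(0)$, while for $\rho>\phi'(0)$ one has $\beta^*(\rho)>0$, so $C_\rho'(G)=-\beta^*(\rho)<0$ and $C_\rho(G)<\log s$; thus $\rho^*(G)=\phi'(0)=\frac1s\sum_i\log m_i$. For the packing point, $C_\rho(G)\le\log m-\rho$ always (Proposition~\ref{lem:trivial_bound_rho}, or take $\beta=1$ in the infimum); equality holds when $\rho\ge\phi'(1)$, since then $\beta=1$ is the minimizer, whereas for $\rho<\phi'(1)$ we get $\beta^*(\rho)<1$ and $\phi(\beta)-\beta\rho$ strictly increasing on $[\beta^*(\rho),1]$, so $C_\rho(G)<\log m-\rho$ strictly; thus $\rho_*(G)=\phi'(1)=\frac1m\sum_i m_i\log m_i$. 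The one step I expect to require genuine care is the minimax exchange in (1): checking the hypotheses of Sion's theorem and then steering the two-regime optimization over $\beta$ so that both regimes meet at the common value $-\log\sum_i m_i^{-\gamma}$; everything else is bookkeeping around the strict convexity of $\phi$.
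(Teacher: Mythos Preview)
Your proof is correct. The main difference from the paper lies in part~(1). The paper observes that the identity $C_\rho(G)=\inf_{\beta\in[0,1]}\bigl(\phi(\beta)-\beta\rho\bigr)$, after the substitution $\gamma=-\beta$, already exhibits $C_\rho(G)$ as the concave conjugate $g_\star(\rho)$ of the concave function $g(\gamma)=-\log\sum_i m_i^{-\gamma}$; the Fenchel--Moreau theorem then gives $C_\star(G,\gamma)=(g_\star)_\star(\gamma)=g(\gamma)$ in one stroke. Your route via Sion's minimax theorem reaches the same conclusion but trades a single invocation of biconjugacy for a minimax swap plus a two-regime optimization over $\beta$; it is perfectly valid (the hypotheses of Sion are met exactly as you note, and the two regimes do meet at $-\phi(-\gamma)$ since $\phi'(\beta)<\log m$), just less economical. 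For part~(2) the paper cites a textbook result on derivatives of conjugates (\cite[Theorem~4.1.1]{HiJeLe1993}), which is precisely the ``standard fact'' you spell out via strict convexity of $\phi$ and uniqueness of the argmin; your argument here is essentially the same, only more explicit. Parts~(3) and~(4) are handled identically in both proofs.
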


\begin{proof}
Let $g(\gamma) = -\log{\sum_{i=1}^s m_i^{-\gamma}}$ for $\gamma\in[-1,0]$. Define $g_{\star}:[0,\log{m}]\rightarrow \RR$ to be the concave conjugate of $g$, namely $g_{\star}(\rho)=\underset{\gamma\in[-1,0]}{\inf}\rho\gamma-g(\gamma)$. 
  By~\eqref{rho-function-2} we have
  \begin{align*} 
    C_{\rho}(G) 
    &=\inf_{\beta\in[0,1]}\log{\left(\sum_{i=1}^s m_i^{\beta}\right)}-\beta\rho \\
    &= \inf_{\gamma\in[-1,0]}\rho\gamma+\log{\sum_{i=1}^s m_i^{-\gamma}}\\
    &= g_{\star}(\rho).
  \end{align*}
  By the Fenchel--Moreau Theorem~\cite[Exercise 3.39]{BoVa2004} and~\cite[Theorem 4.1.1]{HiJeLe1993}, we have $ C_{\star}(G,\gamma)=g(\gamma)$ for $-1\le\gamma\le0$, and
  \begin{align*}
  C_{\rho}'(G) &=g_{\star}'(\rho) 
   =\underset{\gamma\in[-1,0]}{\arg\inf}\,\gamma\rho-g(\gamma)
  =-\,\underset{\beta\in[0,1]}{\arg\inf}(1-\beta)H_{\beta}(Q_{\mathcal{F}})+\beta(\log{M_{\mathcal{F}}}-\rho).
\end{align*}
This proves 1) and 2). Thus, we have $C_{\rho}'(G)<0$ for $\rho>\frac{1}{s}\sum_{i=1}^s \log{m_i}$ and $C_{\rho}'(G)>-1$ for $\rho<\frac{1}{m} \sum_{i=1}^s m_i\log{m_i}$, and then 3) and 4) follow.
\end{proof}

Next, we provide bounds on the free-lunch point and packing point of general graphs. 

\begin{corollary}\label{free-lunch-rate-and-packing-rate}
  Let $G$ be a graph with $m$ vertices.
  \begin{enumerate}
    \item 
  Suppose that $G$ has $s$ connected components of sizes $m_1,\ldots,m_s$. Then the packing point satisfies $\rho_{*}(G)\le \frac{1}{m} \sum_{i=1}^s m_i\log{m_i}$.
\item
Let $t$ be a positive integer, and let $\mathcal{F}=\{V_1,\dots,V_n\}$ be an independent family of $G^t$. If $C_{0}(G)=(\log{|\mathcal{F}|})/t$ (the Shannon capacity is finitely attained), then the free-lunch point satisfies $\rho^*(G)\ge \frac{1}{tn}\sum_{i=1}^{n}\log{|V_i|}$.
  \end{enumerate}
\end{corollary}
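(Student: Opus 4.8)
The plan is to reduce both parts to the exact $\rho$-capacity of a disjoint union of cliques (Theorems~\ref{thm:bounds} and~\ref{rho_capacity_disjoint_union_of_cliques}), together with two elementary monotonicity facts that are immediate from the definitions: (a) adding edges to a graph cannot increase its $\rho$-capacity, since a denser graph has denser strong powers and hence smaller $k$-independence numbers; and (b) passing to an induced subgraph cannot increase its $\rho$-capacity, since a $k$-independent family of an induced subgraph is also one of the ambient graph.

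For part 1, I would write $G=H_1+\cdots+H_s$ with $H_i$ the component on $m_i$ vertices, and set $K=K_{m_1}+\cdots+K_{m_s}$. Since $E(H_i)\subseteq E(K_{m_i})$ and there are no edges between distinct components, $K$ is obtained from $G$ by adding edges, so by~(a) $C_\rho(K)\le C_\rho(G)$ for all $\rho$. By~\eqref{rho-capacity-outrange} we have $C_\rho(K)=\log m-\rho$ for all $\rho\ge\frac{1}{m}\sum_i m_i\log m_i$, while Proposition~\ref{lem:trivial_bound_rho} gives $C_\rho(G)\le\log m-\rho$; hence $C_\rho(G)=\log m-\rho$ on $[\frac{1}{m}\sum_i m_i\log m_i,\log m]$, which is precisely the assertion $\rho_*(G)\le\frac{1}{m}\sum_i m_i\log m_i$.

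For part 2, I would set $\sigma_0=\frac{1}{tn}\sum_{i=1}^n\log|V_i|$ and, using monotonicity of $C_\rho$ together with $C_0(G)=C(G)$, reduce to proving $C_\sigma(G)\ge C(G)$ for all $\sigma\in[0,\sigma_0]$. Since $V_1,\dots,V_n$ are pairwise non-adjacent in $G^t$, all edges of the induced subgraph of $G^t$ on $\bigcup_i V_i$ lie inside individual sets $V_i$, so the disjoint union of cliques $H=K_{|V_1|}+\cdots+K_{|V_n|}$ is obtained from that induced subgraph by adding edges; combining~(a) and~(b) (restrict $G^{tL}$ to $(\bigcup_i V_i)^L$, and use that the strong power of a denser graph is denser) yields $\alpha_k(H^L)\le\alpha_k(G^{tL})$ for all $L$ and $k$. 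Taking $k=2^{\sigma tL}$, dividing by $tL$, and letting $L\to\infty$ along the subsequence $N=tL$ of $\frac{1}{N}\log\alpha_{2^{\sigma N}}(G^N)$ --- which has the same limit $C_\sigma(G)$ because the full limit exists by superadditivity --- yields $C_\sigma(G)\ge\frac{1}{t}C_{\sigma t}(H)$. Finally,~\eqref{rho-capacity-outrange} applied to the disjoint union of $n$ cliques $H$ gives $C_{\sigma t}(H)=\log n$ for $0\le\sigma t\le\frac{1}{n}\sum_i\log|V_i|$, i.e., for $\sigma\in[0,\sigma_0]$; and $\log n=\log|\mathcal{F}|=t\,C(G)$ by hypothesis, so $C_\sigma(G)\ge C(G)$ on $[0,\sigma_0]$ and therefore $\rho^*(G)\ge\sigma_0$.

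None of the steps is deep; the places that need care --- and the only real obstacles I foresee --- are the index and rescaling bookkeeping when moving between powers of $G$ and of $G^t$, checking that the defining limit may be evaluated along the subsequence $N=tL$ (legitimate because $C_\rho$ is a genuine limit, by superadditivity), and the degenerate cases where all the relevant clique sizes coincide, or $s=1$ (resp.\ $n=1$), for which Theorems~\ref{thm:bounds}--\ref{rho_capacity_disjoint_union_of_cliques} are stated only under the not-all-equal hypothesis; in those cases one instead checks directly, via a count on $n^LK_{v^L}$, that $C_\rho(nK_v)=\log n$ for $\rho\le\log v$ and $C_\rho(nK_v)=\log m-\rho$ for $\rho\ge\log v$.
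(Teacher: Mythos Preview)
Your proposal is correct and is essentially the paper's own argument, just with the citations unpacked. The paper invokes the lower bound~\eqref{lower_bound_rho} of Theorem~\ref{thm:bounds}---whose proof is exactly your monotonicity reduction (restrict to the induced subgraph on the independent family, then add edges to make a disjoint union of cliques)---together with the explicit formula~\eqref{rho-capacity-outrange}; you re-derive that bound in place and are more explicit than the paper about the $G$ versus $G^t$ rescaling and the subsequence step.
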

\begin{proof}
  \begin{enumerate}
    \item 
  Note that these connected components trivially form an independent family of $G$. From~\eqref{lower_bound_rho} and~\eqref{rho-capacity-outrange}, we have $C_{\rho}(G)\ge\log{m}-\rho$ for $\frac{1}{m} \sum_{i=1}^s m_i\log{m_i} \le\rho\le\log{m}$. This meets the upper bound of Proposition~\ref{lem:trivial_bound_rho}, and is hence tight.
\item
  By~\eqref{lower_bound_rho} and~\eqref{rho-capacity-outrange} we have $C_{\rho}(G)=C_{0}(G)=(\log{|\mathcal{F}|})/t$ for $0\le\rho\le\frac{1}{tn}\sum_{i=1}^{n}\log{|V_i|}$. Hence $\rho^*(G)\ge\frac{1}{tn}\sum_{i=1}^{n}\log{|V_i|}$.
  \end{enumerate}
\end{proof}

We will later prove (in Theorem~\ref{PackingPoint}) that the upper bound on $\rho_*(G)$ in Corollary~\ref{free-lunch-rate-and-packing-rate} is in fact always attained.

%
\begin{example}{\rm 
Let $G$ be the graph with vertex set $\{1,2,3,4,5\}$ and edge set $\{12,13,23,34,45,15\}$ (pentagon with an extra edge). It is easy to see that $C_{0}(G)=1$. Using the independent family $\mathcal{F}=\{\{2\},\{4,5\}\}$ in  Corollary~\ref{free-lunch-rate-and-packing-rate}, we see that $G$ has a nontrivial free-lunch point $\rho^*(G)\ge 1/2 > 0$.  
}
\end{example}

\begin{example}{\rm
    Let $K_{m,n}\ (m\ge n\ge 2)$ be the complete bipartite graph whose vertices can be partitioned into two subsets $V_1=\{1,2,\dots,m\}$ and $V_2=\{m+1,\dots,m+n\}$ such that no edge has both endpoints in the same subset, and every possible edge that could connect vertices in different subsets is part of the graph. Let $G$ be obtained from $K_{m,n}$ by deleting the edges that connect $m+1$ and $2,3,\dots,m$. Using the independent family $\mathcal{F}=\{\{1,m+1\},\{2\},\{3\},\dots,\{m\}\}$ in Corollary~\ref{free-lunch-rate-and-packing-rate}
     we see that $G$ has a nontrivial free-lunch point $\rho^*(G)\ge1/m>0$.
}
\end{example}

\subsection{A Lov\'asz-Type Upper Bound for Regular Graphs}\label{sec:bound}
In this section, we give an upper bound for the $\rho$-capacity of regular graphs.
Our approach follows the technique developed in~\cite{MRR1978,Sch1979}, which generalized 
Lov\'asz's brilliant idea~\cite{Lo1979}.

Let $G=(V,E)$ be a graph with $m$ vertices and $e\ge1$ edges.
The graph $G$ is said to be \emph{regular} if the number of edges containing a given vertex
$v$ is a constant $r$, independent of $v$, called the \emph{degree} of $G$.
Let $B$ be the adjacency matrix of $G$, and $\mu$ be its smallest eigenvalue. 
It is well-known that $\mu\le -1$ if $G$ has at least one edge.
Set $A = I+|\mu|^{-1}B$.
For two matrices $M$ and $N$, 
we use $M\otimes N$ to denote their Kronecker product. The matrix $M^{\otimes n}$ is
defined inductively by $M^{\otimes n}\dfn M^{\otimes n-1}\otimes M$.
Define
\begin{align*}
\lambda(M)\dfn \inf\{\bx^TM\bx: \sum \bx(i) =1\}.  
\end{align*}

\begin{lemma}[\cite{Lo1979,MRR1978}]\label{lemma:lambda}
  \ \\[-5mm]
  \begin{enumerate}
    \item
      $\lambda(A^{\otimes n})=\lambda(A)^n$.
    \item
      $ C(G) \le \log\lambda(A)^{-1}=\log\frac{m|\mu|}{r+|\mu|}.$
  \end{enumerate}
\end{lemma}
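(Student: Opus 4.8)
# Proof Proposal for Lemma~\ref{lemma:lambda}

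The plan is to establish both parts by exploiting the structure of $A = I + |\mu|^{-1}B$ as a positive semidefinite matrix whose quadratic form is controlled by the geometry of the simplex $\{\bx : \sum_i \bx(i) = 1\}$.

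\textbf{Part 1 (multiplicativity).} First I would observe that $A$ is positive semidefinite: since $\mu$ is the smallest eigenvalue of $B$, the eigenvalues of $|\mu|^{-1} B$ are at least $\mu/|\mu| = -1$, so every eigenvalue of $A = I + |\mu|^{-1}B$ is nonnegative. Consequently $A^{\otimes n}$ is also positive semidefinite (the Kronecker product of PSD matrices is PSD, as its eigenvalues are products of nonnegative eigenvalues). The inequality $\lambda(A^{\otimes n}) \le \lambda(A)^n$ follows by taking $\bx$ to be the $n$-fold Kronecker power of the optimal vector for $\lambda(A)$; this vector still has entries summing to $1$ (the sum of entries of $\bx^{\otimes n}$ is $(\sum_i \bx(i))^n = 1$), and $(\bx^{\otimes n})^T A^{\otimes n} (\bx^{\otimes n}) = (\bx^T A \bx)^n$ by the mixed-product property of the Kronecker product. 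For the reverse inequality $\lambda(A^{\otimes n}) \ge \lambda(A)^n$, I would use the standard argument that for a PSD matrix $M$, the functional $\lambda(M)$ equals $\min_{\bx : \sum \bx(i) = 1} \bx^T M \bx$, and that this can be rewritten via the spectral decomposition / the fact that $\lambda(M)^{-1} = \mathbf{1}^T M^+ \mathbf{1}$ when $\mathbf{1}$ is in the column space of $M$ (or more carefully, $\lambda(M) = 1 / (\mathbf{1}^T M^{-1} \mathbf{1})$ when $M$ is invertible, by a Lagrange-multiplier computation). Since $(M \otimes N)^{-1} = M^{-1} \otimes N^{-1}$ and $\mathbf{1}_{mn} = \mathbf{1}_m \otimes \mathbf{1}_n$, we get $\mathbf{1}^T (A^{\otimes n})^{-1} \mathbf{1} = (\mathbf{1}^T A^{-1} \mathbf{1})^n$, giving the matching bound. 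The degenerate (non-invertible) case is handled by restricting to the column space or by a perturbation argument.

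\textbf{Part 2 (the capacity bound).} The inequality $C(G) \le \log \lambda(A)^{-1}$ is Lov\'asz's sandwiching argument: if $S$ is an independent set in $G^n$, then the principal submatrix of $A^{\otimes n}$ indexed by $S$ is the identity $I_{|S|}$ (independent vertices contribute no off-diagonal terms, and $A$ has $1$'s on the diagonal). Taking $\bx$ supported uniformly on $S$ with entries $1/|S|$ gives $\bx^T A^{\otimes n} \bx = |S| \cdot (1/|S|)^2 = 1/|S|$, so $\lambda(A^{\otimes n}) \le 1/\alpha(G^n)$, hence $\alpha(G^n) \le \lambda(A^{\otimes n})^{-1} = \lambda(A)^{-n}$ by Part 1, and taking the normalized logarithmic limit yields $C(G) \le \log \lambda(A)^{-1}$. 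Finally, the explicit evaluation $\lambda(A)^{-1} = \frac{m|\mu|}{r + |\mu|}$ uses regularity: when $G$ is $r$-regular, $\mathbf{1}$ is an eigenvector of $B$ with eigenvalue $r$, so $A \mathbf{1} = (1 + r/|\mu|)\mathbf{1}$. By symmetry the optimal $\bx$ for $\lambda(A)$ is the uniform vector $\bx = \mathbf{1}/m$ (one can verify this satisfies the stationarity condition, since $A\bx$ is a constant vector), giving $\lambda(A) = (\mathbf{1}/m)^T A (\mathbf{1}/m) = \frac{1}{m}(1 + r/|\mu|) = \frac{r + |\mu|}{m|\mu|}$, and inverting gives the claimed formula.

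\textbf{Main obstacle.} The routine parts are the Kronecker-product identities and the submatrix-is-identity observation. The one point requiring genuine care is justifying that the uniform vector is optimal for $\lambda(A)$ in the regular case, and more generally that $\lambda(M) = 1/(\mathbf{1}^T M^{-1}\mathbf{1})$ for PSD $M$ — this is where one must handle the possibility that $A$ (equivalently, $A^{\otimes n}$) is singular, i.e., that $\mu$ is achieved with $\mathbf{1}$ not in the range. Since this lemma is attributed to~\cite{Lo1979,MRR1978}, I would simply cite the convexity/Lagrangian argument there rather than re-derive it, treating the singular case by noting $\lambda(A) > 0$ iff $\mathbf{1}$ lies in the column space of $A$ and otherwise the bound is vacuous or handled by continuity.
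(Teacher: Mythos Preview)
Your argument is correct. The paper itself does not prove this lemma at all; its entire ``proof'' is the single sentence ``See Section II of~\cite{MRR1978}.'' What you have written is a faithful reconstruction of the standard argument found in those references: the positive semidefiniteness of $A$, the tensor-power feasibility for one inequality in Part~1, the Lagrangian identity $\lambda(M) = (\mathbf{1}^T M^{-1}\mathbf{1})^{-1}$ (and its Kronecker compatibility) for the other, the observation that an independent set in $G^n$ indexes an identity principal submatrix of $A^{\otimes n}$ for the capacity bound, and the stationarity of the uniform vector in the regular case for the explicit formula. There is nothing to compare against, and no gap to flag; your self-identified ``main obstacle'' (the singular case) is exactly the technical wrinkle one has to address, and deferring it to the cited sources is what the paper itself does in toto.
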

\begin{proof}
  See Section II of~\cite{MRR1978}. 
\end{proof}

Set $V=\{v_1,\dots,v_m\}$. 
Let $\mathcal{F}$ be a $k$-independent family in $G$. 
We can assume that each subset in $\mathcal{F}$ contains exactly $k$ vertices; otherwise we can form another
$k$-independent family $\mathcal{F}\,'$ by choosing exactly $k$ vertices from each subset.
For a vertex $v\in V$, we say that $v\in \mathcal{F}$ if $v$ is contained in some subset of $\mathcal{F}$. Recall that we write $|\mathcal{F}|$ for the number of subsets in $\mathcal{F}$.
We now define a length $m$ vector $\by$ by
$\by(i)=1/(k|\mathcal{F}|)$ if $v_i\in \mathcal{F}$; otherwise $\by(i)=0$.
Then $\sum_i \by(i)=1$ and
\begin{align}\label{bound}
  \lambda(A) \le \by^TA\by = \frac{1}{(k|\mathcal{F}|)^2}(k|\mathcal{F}|+\sum_{\substack{v_i,v_j\in \mathcal{F}\\v_i\sim v_j}}A(i,j)).
\end{align}
Let $\mathcal{F}_n$ be a maximum $ 2^{\rho n}$-independent family of graph $G^n$.
Then by Lemma~\ref{lemma:lambda} and~\eqref{bound} we obtain
\begin{align}\label{bound2}
  \lambda(A)^n = \lambda(A^{\otimes n})  
  \le \frac{ 2^{\rho n} |\mathcal{F}_n|+\sum_{i=1}^ns^{(n)}_i|\mu|^{-i}}{( 2^{\rho n} |\mathcal{F}_n|)^2},
\end{align}
where $s^{(n)}_i$ is the number of pairs $(\mathbf{u},\mathbf{v})\in \mathcal{F}_n\times \mathcal{F}_n$ such that 
$A^{\otimes n}(\mathbf{u},\mathbf{v})=|\mu|^{-i}$. 
We now give an upper bound for $s^{(n)}_i$ and the sum $\sum_{i=1}^ns^{(n)}_i|\mu|^{-i}$
through a simple counting argument. First let us introduce two functions which will simplify our derivations. Recall that $m$ is the number of vertices of $G$ and $e$ is its number of edges. For $1\le i\le n$, let 
\begin{align*}
  f(i) \dfn m^{n-i}(2e)^i{n\choose i} \mbox{\quad and\quad} g(i) \dfn m^{n-i}(2e)^i{n\choose i}|\mu|^{-i}.
\end{align*}

\begin{lemma}\label{fg_eq}
  \ \\[-5mm]
  \begin{enumerate}
    \item For any $1\le i\le n$,
      \begin{align*}
        s^{(n)}_i &\le  f(i), \\
        \sum_{i=1}^{n}s^{(n)}_i &\le  2^{\rho n}( 2^{\rho n}-1)|\mathcal{F}_n|.
      \end{align*}
    \item For any $1\le k\le n$,
          \begin{align*}
            \sum_{i=1}^{n}s^{(n)}_i|\mu|^{-i} 
            \le  2^{\rho n}( 2^{\rho n}-1)|\mathcal{F}_n||\mu|^{-k} +\sum_{i=1}^{k}g(i).
          \end{align*}
    \item For any $0\le q\le 1$,
      \begin{align*}
        \lim_{n\to\infty}\frac{1}{n}\log{f(qn)} 
        &= \log{(m(r+1))}-D\left(q \| \frac{r}{r+1}\right),\\
        \lim_{n\to\infty}\frac{1}{n}\log{g(qn)} 
        &= \log{\frac{m(r+|\mu|)}{|\mu|}}-D\left(q \| \frac{r}{r+|\mu|}\right).
      \end{align*}
    \item Write $C_\rho= C_{\rho}(G)$. Then for $0\le q\le \frac{r}{r+|\mu|}$,
          \begin{align*}
            &\limsup_{n\to\infty}\frac{1}{n}\log \sum_{i=1}^{n}s^{(n)}_i|\mu|^{-i} \\
            \le & \max\Big{\{}2\rho+C_\rho-q\log|\mu|, \\
                &\ \ \ \ \ \ \  \log{\frac{m(r+|\mu|)}{|\mu|}}-D\left(q \| \frac{r}{r+|\mu|}\right)\Big{\}}.
          \end{align*}
  \end{enumerate}
\end{lemma}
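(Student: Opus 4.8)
The plan is to exploit the explicit form of $A=I+|\mu|^{-1}B$: its diagonal entries equal $1$, its entries on adjacent pairs equal $|\mu|^{-1}$, and all other entries vanish. Consequently $A^{\otimes n}(\mathbf{u},\mathbf{v})=\prod_{t=1}^{n}A(u_t,v_t)$, and this product equals $|\mu|^{-i}$ precisely when exactly $i$ of the coordinates $t$ satisfy $u_t\sim v_t$ while the remaining $n-i$ satisfy $u_t=v_t$. In particular, by the standard characterization of adjacency in a strong power, $A^{\otimes n}(\mathbf{u},\mathbf{v})=|\mu|^{-i}$ with $i\ge 1$ holds if and only if $\mathbf{u}$ and $\mathbf{v}$ are distinct and adjacent in $G^n$. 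This single observation drives all four parts.

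For part 1, the inequality $s^{(n)}_i\le f(i)$ is a direct count of the pairs $(\mathbf{u},\mathbf{v})\in V^n\times V^n$ with $A^{\otimes n}(\mathbf{u},\mathbf{v})=|\mu|^{-i}$: choose the $i$ ``adjacent'' coordinates in ${n\choose i}$ ways, choose an ordered adjacent pair on each of them in $2e$ ways, and choose an arbitrary common vertex on each of the remaining $n-i$ coordinates in $m$ ways, giving $m^{n-i}(2e)^i{n\choose i}=f(i)$; since the vertices occurring in $\mathcal{F}_n$ form a subset of $V^n$, we get $s^{(n)}_i\le f(i)$. For the second inequality, the observation above identifies $\sum_{i\ge 1}s^{(n)}_i$ with the number of ordered pairs of distinct, $G^n$-adjacent vertices both lying in $\mathcal{F}_n$. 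Two vertices from different members of $\mathcal{F}_n$ are non-adjacent by the definition of an independent family, so every such pair lies inside a single member; assuming (as we may) that each member has exactly $2^{\rho n}$ vertices, each member contributes at most $2^{\rho n}(2^{\rho n}-1)$ ordered pairs, and summing over the $|\mathcal{F}_n|$ members gives the claim.

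Part 2 is then bookkeeping: split $\sum_{i=1}^{n}s^{(n)}_i|\mu|^{-i}$ at $i=k$. For $i\le k$ use $s^{(n)}_i|\mu|^{-i}\le f(i)|\mu|^{-i}=g(i)$ from part 1; for $i>k$ use $|\mu|\ge 1$ (which holds since $\mu\le -1$ whenever $G$ has an edge) to get $|\mu|^{-i}\le|\mu|^{-k}$, then bound $\sum_{i>k}s^{(n)}_i\le\sum_{i\ge 1}s^{(n)}_i\le 2^{\rho n}(2^{\rho n}-1)|\mathcal{F}_n|$ again by part 1. Part 3 is Stirling's estimate: since $G$ is $r$-regular, $2e=mr$, so $f(i)=m^n r^i{n\choose i}$ and $g(i)=m^n r^i{n\choose i}|\mu|^{-i}$; with $i=qn$ and $\frac1n\log{n\choose qn}\to H(q)$ this gives $\frac1n\log f(qn)\to\log m+H(q)+q\log r$ and $\frac1n\log g(qn)\to\log m+H(q)+q\log r-q\log|\mu|$, and a one-line algebraic rearrangement (using $D(q\|p)=-H(q)-q\log\frac{p}{q^0}$-type identities) rewrites these as $\log(m(r+1))-D(q\|\tfrac{r}{r+1})$ and $\log\tfrac{m(r+|\mu|)}{|\mu|}-D(q\|\tfrac{r}{r+|\mu|})$.

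Finally, for part 4 take $k=qn$ in part 2, so $\sum_{i=1}^{n}s^{(n)}_i|\mu|^{-i}\le 2^{\rho n}(2^{\rho n}-1)|\mathcal{F}_n||\mu|^{-qn}+\sum_{i=1}^{qn}g(i)$. Since $\mathcal{F}_n$ is a maximum $2^{\rho n}$-independent family, $\frac1n\log|\mathcal{F}_n|=\frac1n\log\alpha_{2^{\rho n}}(G^n)\to C_\rho$, so the first term has exponential rate $2\rho+C_\rho-q\log|\mu|$. Writing $\phi(x)=\log\tfrac{m(r+|\mu|)}{|\mu|}-D(x\|\tfrac{r}{r+|\mu|})$, part 3 together with the uniformity of Stirling's estimate gives $\frac1n\log g(i)=\phi(i/n)+o(1)$ uniformly in $i$; since $D(\cdot\|\tfrac{r}{r+|\mu|})$ is decreasing on $[0,\tfrac{r}{r+|\mu|}]$, the function $\phi$ is increasing there, so for $q\le\tfrac{r}{r+|\mu|}$ the sum $\sum_{i=1}^{qn}g(i)$ (at most $n$ terms, each of rate $\le\phi(q)$) has $\limsup$ rate at most $\phi(q)$. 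Combining the two terms via $\log(a+b)\le\log 2+\max\{\log a,\log b\}$ and taking $\limsup$ yields the stated maximum. The step requiring the most care is part 4: one must check that the maximum of $g(i)$ over $1\le i\le qn$ is governed by $\phi(q)$ — which is exactly where the hypothesis $q\le\tfrac{r}{r+|\mu|}$ and the monotonicity of the divergence enter — and that the $\mathrm{poly}(n)$ factors (number of summands, Stirling corrections) are exponentially negligible; the combinatorial identification in part 1 of $\sum_i s^{(n)}_i$ with within-block adjacent pairs is the other place to be precise.
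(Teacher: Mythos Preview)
Your proposal is correct and follows essentially the same approach as the paper: the same coordinate-wise reading of $A^{\otimes n}(\mathbf{u},\mathbf{v})=|\mu|^{-i}$ for part~1, the same split at $i=k$ using $|\mu|\ge 1$ for part~2, the same Stirling computation with $2e=mr$ for part~3, and the same substitution $k=qn$ for part~4. If anything, you supply slightly more detail than the paper does in part~4, by making explicit the monotonicity of $x\mapsto D(x\|\tfrac{r}{r+|\mu|})$ on $[0,\tfrac{r}{r+|\mu|}]$ to justify that $\sum_{i\le qn}g(i)$ has exponential rate exactly $\phi(q)$; the paper simply asserts this limit.
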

\begin{proof}
  \ \\[-5mm]
  \begin{enumerate}
    \item
    For two vertices $\mathbf{u}=(u_1,\dots,u_n)$ and $\mathbf{v}=(v_1,\dots,v_n)$ of $G^n$,
    the element $A^{\otimes n}(\mathbf{u},\mathbf{v})=|\mu|^{-i}$ if and only if there are $n-i$ pairs of
    coordinates such that $u_j=v_j$ and the other $i$ pairs are adjacent in graph $G$.
    So for $1\le i\le n$ we get
    $$
    s^{(n)}_i \le  m^{n-i}(2e)^i{n\choose i}=f(i).
    $$
    We see that $\frac{1}{2}\sum_{i=1}^{n}s^{(n)}_i$ is the number of edges that connect pairs of vertices in $\mathcal{F}_n$. By the definition of a $2^{\rho n}$-independent family, this number is upper bounded by $\frac{1}{2}  2^{\rho n}( 2^{\rho n}-1)|\mathcal{F}_n|$.
    \item
    We have
    \begin{align*}
      \sum_{i=1}^{n}s^{(n)}_i|\mu|^{-i} 
      &\le \sum_{i=1}^kg(i)+\sum_{k+1}^ns^{(n)}_i|\mu|^{-i}\\
      &\le \sum_{i=1}^kg(i)+|\mu|^{-k}\sum_{k+1}^ns^{(n)}_i\\
      &\le \sum_{i=1}^kg(i)+|\mu|^{-k} 2^{\rho n}( 2^{\rho n}-1)|\mathcal{F}_n|.
    \end{align*}
    \item
      Note that the first equality is a special case of the second with $|\mu|=1$. We have
    \begin{align*}
      &\lim_{n\to\infty}\frac{1}{n}\log{g(qn)}\\
      &= \lim_{n\to\infty}\frac{1}{n}\log{\left( m^{n-qn}(2e)^{qn}{n\choose qn}|\mu|^{-qn}\right)}\\
      &= (1-q)\log{m}+q\log{(2e)}+h(q)-q\log{|\mu|}\\
      &= \log{\frac{m(r+|\mu|)}{|\mu|}}-D\left(q \| \frac{r}{r+|\mu|}\right).
    \end{align*}
    \item 
      Set $k=qn$ in 2), and choose $\mathcal{F}_n$ to asymptotically achieve $C_\rho$. Then this argument follows from
    \begin{align*}
      &\lim_{n\to\infty}\frac{1}{n}\log( 2^{\rho n}( 2^{\rho n}-1)|\mathcal{F}_n||\mu|^{-qn})\\
      &= 2\rho+C_\rho-q\log|\mu|,
    \end{align*}
    and
    \begin{align*}
        & \lim_{n\to\infty}\frac{1}{n}\log\sum_{i=1}^{qn}g(i)\\ 
      &= \log{\frac{m(r+|\mu|)}{|\mu|}}-D\left(q \| \frac{r}{r+|\mu|}\right).      
    \end{align*}
\end{enumerate}
\end{proof}

We are now ready to state our bound. 
\begin{theorem}\label{bound_for_regular_graph}
  Let $G=(V,E)$ be a regular graph with $m$ vertices, $e$ edges and degree $r$.
  Let $\mu$ be its smallest eigenvalue.
  Then for any $\rho$ satisfying 
  \begin{align}\label{eq:rho_interval}
  \frac{1}{2}\log{\frac{r+|\mu|}{|\mu|}}<\rho<\log\frac{r+|\mu|}{|\mu|}+\frac{r}{r+|\mu|}\log{|\mu|},     
  \end{align}
  it holds that 
  \begin{align}\label{eq:bound}
     C_{\rho}(G) \le \log{m}-\rho-\frac{1}{2}D\left(p\,\|\,\frac{r}{r+|\mu|}\right),
  \end{align}
  where $0 < p < \frac{r}{r+|\mu|}$ is the unique solution of
  \begin{equation}\label{eq:rho_eq}
  \rho = \log\frac{r+|\mu|}{|\mu|}+p\log{|\mu|}-\frac{1}{2}D\left(p\,\|\,\frac{r}{r+|\mu|}\right).    
  \end{equation}
\end{theorem}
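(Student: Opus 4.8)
The plan is to take logarithms in~\eqref{bound2}, divide by $n$, and let $n\to\infty$, feeding in the estimates collected in Lemma~\ref{fg_eq}. Recall from Lemma~\ref{lemma:lambda} that $\lambda(A)=\frac{r+|\mu|}{m|\mu|}$, so $\log\lambda(A)=\log\frac{r+|\mu|}{|\mu|}-\log m$. Let $p$ be the solution of~\eqref{eq:rho_eq}, and take $q=p$ in Lemma~\ref{fg_eq}(4); this is legal since $0<p<\frac{r}{r+|\mu|}$. Choosing the maximum $2^{\rho n}$-independent families $\mathcal{F}_n$ so that $\frac1n\log|\mathcal{F}_n|\to C_\rho$, where $C_\rho=C_\rho(G)$, the denominator $(2^{\rho n}|\mathcal{F}_n|)^2$ in~\eqref{bound2} has exponential rate exactly $2(\rho+C_\rho)$, while the numerator---a sum of nonnegative terms---has rate equal to the larger of $\rho+C_\rho$ (the rate of $2^{\rho n}|\mathcal{F}_n|$) and the rate of $\sum_i s^{(n)}_i|\mu|^{-i}$, which is bounded by Lemma~\ref{fg_eq}(4). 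Passing to the limit therefore gives
$$\log\tfrac{r+|\mu|}{|\mu|}-\log m \;\le\; \max\left\{-(\rho+C_\rho),\; -C_\rho-p\log|\mu|,\; \log\tfrac{m(r+|\mu|)}{|\mu|}-D\left(p\,\|\,\tfrac{r}{r+|\mu|}\right)-2(\rho+C_\rho)\right\}.$$

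Next I would split into three cases according to which term attains the maximum. If the third term does, rearranging yields exactly $C_\rho\le\log m-\rho-\frac12 D(p\,\|\,\frac{r}{r+|\mu|})$, the desired bound. If the second term does, one obtains $C_\rho\le\log m-\log\frac{r+|\mu|}{|\mu|}-p\log|\mu|$, which coincides with the claimed bound after substituting the defining identity~\eqref{eq:rho_eq}. If the first term does, one obtains $C_\rho\le\log m-\log\frac{r+|\mu|}{|\mu|}-\rho$; here I would use that $p\mapsto D(p\,\|\,\frac{r}{r+|\mu|})$ is decreasing on $[0,\frac{r}{r+|\mu|}]$, so that $D(p\,\|\,\frac{r}{r+|\mu|})\le D(0\,\|\,\frac{r}{r+|\mu|})=\log\frac{r+|\mu|}{|\mu|}$, making this bound no weaker than the claimed one. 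Hence~\eqref{eq:bound} holds in every case.

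In parallel I must check that $p$ is well defined by~\eqref{eq:rho_eq}. Setting $\rho(p)=\log\frac{r+|\mu|}{|\mu|}+p\log|\mu|-\frac12 D(p\,\|\,\frac{r}{r+|\mu|})$, a direct computation gives $\rho'(p)=\log|\mu|-\frac12\log\frac{p|\mu|}{r(1-p)}$, which is strictly positive on $(0,\frac{r}{r+|\mu|})$ because $|\mu|\ge1$ and the logarithmic term is negative there; moreover the endpoint limits of $\rho(p)$ as $p\to0^+$ and $p\to\frac{r}{r+|\mu|}^-$ are precisely the two bounds in~\eqref{eq:rho_interval}. Thus $\rho\mapsto p$ is a strictly monotone bijection onto the interval~\eqref{eq:rho_interval}, which gives existence and uniqueness.

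The main obstacle I anticipate is bookkeeping rather than conceptual: Lemma~\ref{fg_eq}(4) only upper-bounds a $\limsup$, so the limiting argument must be phrased with an arbitrary $\varepsilon>0$ of slack, and the three-way case split must be carried out uniformly in $n$ before limits are taken. The one point that genuinely needs an idea is the first case above, where the $2^{\rho n}|\mathcal{F}_n|$ term dominates the numerator: there the inequality one gets is seen to imply~\eqref{eq:bound} only after invoking the monotonicity of the divergence $D(\cdot\,\|\,\frac{r}{r+|\mu|})$ on the interval $[0,\frac{r}{r+|\mu|}]$.
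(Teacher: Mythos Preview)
Your proposal is correct and follows essentially the same path as the paper's proof: both take the limit in~\eqref{bound2}, feed in Lemma~\ref{fg_eq}(4), and arrive at the three-term maximum~\eqref{twoupperbounds}, then set $q=p$ where $p$ equates the last two terms. Your treatment is in fact slightly more explicit than the paper's in two places: you spell out why the first term of the max is dominated (via monotonicity of $D(\cdot\,\|\,\tfrac{r}{r+|\mu|})$ on $[0,\tfrac{r}{r+|\mu|}]$), whereas the paper simply asserts that $q=p$ is ``optimal'' in~\eqref{twoupperbounds}; and you verify existence and uniqueness of $p$ by computing $\rho'(p)$ directly, which is equivalent to the paper's argument that $\Delta(q)$ is continuous, strictly decreasing, and changes sign on the interval.
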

\begin{proof}
  Write $C_\rho= C_{\rho}(G)$, and let $\mathcal{F}_n$ asymptotically achieve $C_\rho$. Suppose $0<q<\frac{r}{r+|\mu|}$. 
  By claim 2) of Lemma~\ref{lemma:lambda}, inequality~\eqref{bound2} and claim 4) of Lemma~\ref{fg_eq}, we have
  \begin{align}\label{boundbound}
    \begin{split}
         -\log{\frac{m|\mu|}{r+|\mu|}} 
  &\le \limsup_{n\to\infty}\frac{1}{n}\log{\frac{ 2^{\rho n} |\mathcal{F}_n|+\sum_{i=1}^ns^{(n)}_i|\mu|^{-i}}{( 2^{\rho n} |\mathcal{F}_n|)^2}}\\
  &= \limsup_{n\to\infty}\frac{1}{n}\log\left({ 2^{\rho n} |\mathcal{F}_n|+\sum_{i=1}^ns^{(n)}_i|\mu|^{-i}}\right)-2\rho-2C_{\rho}\\
  &\le \max\Big\{C_{\rho}+\rho,\, 2\rho+C_\rho-q\log|\mu|,\,\log{\frac{m(r+|\mu|)}{|\mu|}}-D\left(q \| \frac{r}{r+|\mu|}\right)\Big\}-2\rho-2C_{\rho}.
\end{split}
  \end{align}
  Rearranging the terms in~\eqref{boundbound} we get
  \begin{align}\label{twoupperbounds}
    C_{\rho}\le\max\Big\{ \log{\frac{m|\mu|}{r+|\mu|}}-\rho, \log{\frac{m|\mu|}{r+|\mu|}}-q\log{|\mu|},\, \log{m}-\rho-\frac{1}{2}D\left(q\,\|\,\frac{r}{r+|\mu|}\right)\Big\}.
  \end{align}
  The difference of the last two terms in~\eqref{twoupperbounds} is
  \begin{align*}
    \Delta(q)\dfn &\Big(\log{\frac{m|\mu|}{r+|\mu|}}-q\log{|\mu|}\Big)-\Big(\log{m}-\rho-\frac{1}{2}D\left(q\,\|\,\frac{r}{r+|\mu|}\right)\Big) \\
    =& \rho - \Big( \log\frac{r+|\mu|}{|\mu|}+q\log{|\mu|}-\frac{1}{2}D\left(q\,\|\,\frac{r}{r+|\mu|}\right)\Big).
  \end{align*}
It is easy to check that $\Delta(q)$ is a continuous and strictly decreasing function of $q$ in the interval $[0,\frac{r}{r+|\mu|}]$. Morever, for any $\rho$ satisfying~\eqref{eq:rho_interval}, $\Delta(0)>0$  and $\Delta(r/(r+|\mu|))<0$. Hence there exists a unique value $p$ satistfying~\eqref{eq:rho_eq}, and the bound~\eqref{eq:bound} follows by setting $q=p$, which is optimal, in~\eqref{twoupperbounds}.

\end{proof}

\begin{example}\label{ex:pentagon}{\rm
    We apply Theorem~\ref{bound_for_regular_graph} to the cycle graph $C_5$ (pentagon), whose
    vertex set is $\{1,2,3,4,5\}$ and edge set is $\{12,23,34,45,51\}$. Then $r=2$ and
    $\mu = -(\sqrt{5}+1)/2$. The results are depicted
    in Figure~\ref{pentagon}. A lower bound can be obtained using~\eqref{lower_bound_rho} of Theorem~\ref{thm:bounds} with the following three independent families of
    $C_5^2$: 
    (i) $\{\{1\}\times\{1,2,3,4,5\},\{3\}\times\{1,2,3,4,5\},\{4\}\times\{1,2,3,4,5\}\}$;
    (ii) $\{\{4\}\times\{1,2\},\{1,2\}\times\{1,2\},\{1,2,3,4,5\}\times\{4\}$;
    (iii) $\{\{4,5\}\times\{5\},\{2\}\times\{1,5\},\{1,2\}\times\{3\},\{4\}\times\{2,3\}\}.$
    The other bounds are obtained using Proposition~\ref{lem:trivial_bound_rho}. 
}
\begin{figure}[!t]
    \centering
    \includegraphics[width=3.5 in]{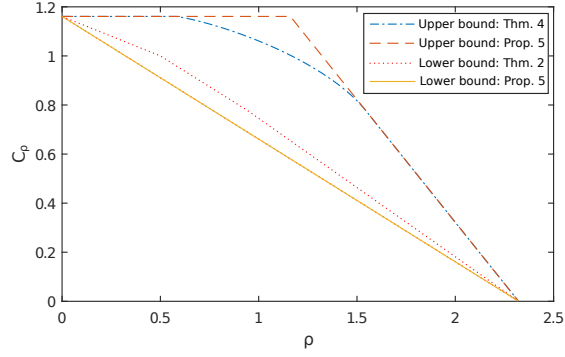} 
    \caption{Bounds on the $\rho$-capacity of the Pentagon (Example~\ref{ex:pentagon})}
    \label{pentagon}
\end{figure}
\end{example}

\section{Some Properties of the $\rho$-Capacity}\label{sec:properties}
In this section, we study the properties of the $\rho$-capacity function under two graph operations: the strong product and the disjoint union. To that end, we first observe the following three simple lemmas, whose proofs are relegated to the Appendix. 

\begin{lemma}\label{IndependentFamily}
  Let $G$ be a graph with $m$ vertices and $2\le k\le m$. Suppose $\mathcal{F}=\{V_i \mid 1\le i\le N\}$ is an independent family of $G$ such that $|V_i|\le k$ for $1\le i\le N$. Then 
  $$\sum_{i=1}^{N}|V_i|\le \min\{m, (k-1)(2\,\alpha_k(G)+1)\}.$$
\end{lemma}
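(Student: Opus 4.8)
We want to bound $\sum_{i=1}^N |V_i|$ for an independent family $\mathcal{F}$ in which every subset has size at most $k$. The bound $\sum_i |V_i| \le m$ is immediate since the $V_i$ are pairwise disjoint subsets of a vertex set of size $m$. So the real content is the bound $\sum_i |V_i| \le (k-1)(2\alpha_k(G)+1)$, and the plan is to compare $\mathcal{F}$ against a maximum $k$-independent family.

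The idea is to extract from $\mathcal{F}$ as many ``large'' subsets as possible and use them to lower-bound $\alpha_k(G)$. First I would split the index set into $L = \{i : |V_i| = k\}$ and $S = \{i : |V_i| \le k-1\}$. The subsets indexed by $L$ already form a $k$-independent family, so $|L| \le \alpha_k(G)$, contributing at most $k\,\alpha_k(G)$ to the sum. For the ``small'' subsets indexed by $S$, the strategy is to greedily merge them in pairs: since each has size at most $k-1$, merging two of them gives a subset of size at most $2(k-1)$; but more usefully, one can keep accumulating small subsets until the running union first reaches size $\ge k$ (at which point it has size at most $2(k-1)$, having jumped from something $\le k-1$ by adding a set of size $\le k-1$), and then start a fresh group. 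Each completed group is a set of size between $k$ and $2(k-1)$, and crucially all these groups are pairwise non-adjacent and non-adjacent to the $L$-subsets, because unions of subsets from an independent family remain non-adjacent. So the $L$-subsets together with these merged groups form a $k$-independent family, giving (number of groups) $+ |L| \le \alpha_k(G)$.

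Now count: if there are $t$ completed groups, they absorb a total vertex count of at least $kt$ but we also know $t \le \alpha_k(G) - |L|$. There may be one leftover partial group of size $\le k-1$ that never reached the threshold. So $\sum_{i \in S} |V_i| \le (\text{sum over completed groups}) + (k-1)$, and each completed group has size $\le 2(k-1)$, giving $\sum_{i\in S}|V_i| \le 2(k-1)t + (k-1) \le 2(k-1)(\alpha_k(G) - |L|) + (k-1)$. Adding the contribution $k|L|$ from $L$: $\sum_i |V_i| \le k|L| + 2(k-1)(\alpha_k(G)-|L|) + (k-1)$. Since $k \le 2(k-1)$ for $k \ge 2$, the coefficient of $|L|$ is nonpositive, so this is maximized at $|L| = 0$, yielding $\sum_i |V_i| \le 2(k-1)\alpha_k(G) + (k-1) = (k-1)(2\alpha_k(G)+1)$, as desired.

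The main obstacle — really the only subtle point — is making the greedy merging argument airtight: I need that when I accumulate small subsets one at a time, the first time the union reaches size $\ge k$ its size is genuinely $\le 2(k-1)$ (true because just before adding the last subset the union had size $\le k-1$, and the added subset has size $\le k-1$), and that the resulting merged sets, being unions of members of an independent family, are still pairwise non-adjacent and still disjoint (immediate from the definition of adjacency of subsets and disjointness of the $V_i$). One should also double-check the edge case where $S$ is empty or where the partial leftover group is itself empty; in those cases the ``$+(k-1)$'' and ``$+1$'' slack terms are simply not tight, so the inequality still holds. Finally, take the minimum of this bound with the trivial bound $m$ to conclude.
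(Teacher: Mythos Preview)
Your proof is correct and follows essentially the same approach as the paper: greedily merge the small subsets into groups of size between $k$ and $2(k-1)$, observe that these together with the size-$k$ subsets form a $k$-independent family bounding their number by $\alpha_k(G)$, and account for the leftover with a $+(k-1)$ term. The paper's write-up is slightly more compact in that it does not separate out the case $|L|>0$ but simply bounds every group (including the original size-$k$ ones) by $2(k-1)$ from the outset, using $k\le 2(k-1)$; your explicit tracking of $|L|$ and the observation that the coefficient of $|L|$ is nonpositive amounts to the same thing.
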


\begin{lemma}\label{IndependenceNumber}
  Let $G=H_1+H_2+\cdots+H_n$ be the disjoint union of $n$ graphs $H_1,\ldots,H_n$ and $k\ge2$. Then
  $$\sum_{i=1}^{n}\alpha_{k}(H_i)\le \alpha_{k}(G) \le \min\left\{\frac{|V(G)|}{k}, \frac{k-1}{k}\sum_{i=1}^{n}(2\,\alpha_{k}(H_i)+1)\right\}.$$
\end{lemma}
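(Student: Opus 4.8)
The plan is to prove the three inequalities separately; the first two are routine and the last one carries all the content. For the lower bound $\sum_i \alpha_k(H_i) \le \alpha_k(G)$, I would pick a maximum $k$-independent family $\mathcal{F}^{(i)}$ in each $H_i$ and take the union $\bigcup_{i=1}^n \mathcal{F}^{(i)}$: two subsets in the same $\mathcal{F}^{(i)}$ are non-adjacent by hypothesis, and two subsets lying in distinct components are non-adjacent because a disjoint union has no edges across components; each subset still has size at least $k$, so this is a $k$-independent family of $G$ with $\sum_i \alpha_k(H_i)$ members. For the bound $\alpha_k(G)\le |V(G)|/k$, I would simply note that the subsets of any $k$-independent family are pairwise disjoint and each has size at least $k$, hence their number is at most $|V(G)|/k$.

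For the remaining bound, let $\mathcal{F}=\{V_1,\dots,V_N\}$ be a maximum $k$-independent family of $G$, so $N=\alpha_k(G)$; replacing each $V_j$ by an arbitrary $k$-element subset of itself (which preserves pairwise non-adjacency and the number of members), I may assume $|V_j|=k$ for all $j$. For each component index $i$ set $\mathcal{F}_i=\{V_j\cap V(H_i) : 1\le j\le N,\ V_j\cap V(H_i)\neq\emptyset\}$. These are pairwise disjoint subsets of $V(H_i)$, pairwise non-adjacent in $H_i$ (an $H_i$-adjacency between $V_j\cap V(H_i)$ and $V_{j'}\cap V(H_i)$ would be a $G$-adjacency between $V_j$ and $V_{j'}$), and each has size at most $k$. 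Hence Lemma~\ref{IndependentFamily} applies to $H_i$ and yields $\sum_{W\in\mathcal{F}_i}|W|\le (k-1)(2\,\alpha_k(H_i)+1)$; when $k>|V(H_i)|$ this inequality holds trivially, since the left side is at most $|V(H_i)|\le k-1$ while $\alpha_k(H_i)=0$. Now summing over $i$ and using that the components partition $V(G)$, the left side equals $\sum_{j=1}^N |V_j| = kN = k\,\alpha_k(G)$, while the right side is $(k-1)\sum_{i=1}^n(2\,\alpha_k(H_i)+1)$; dividing by $k$ gives the claim.

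The only genuinely delicate point is that the subsets of a $k$-independent family of $G$ need not respect the component decomposition, so one cannot directly say that $\mathcal{F}$ restricts to a $k$-independent family on each $H_i$ (the pieces $V_j\cap V(H_i)$ may be much smaller than $k$). The resolution is exactly the bounded-size statement Lemma~\ref{IndependentFamily}, which was set up for this situation; once the restriction $\mathcal{F}_i$ is defined and checked, the estimate is a one-line summation. I would also make sure to dispatch the corner cases $k>|V(H_i)|$ and $\mathcal{F}_i=\emptyset$ so that the invocation of Lemma~\ref{IndependentFamily} is legitimate for every $i$.
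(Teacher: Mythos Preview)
Your proof is correct and follows essentially the same route as the paper: reduce to $|V_j|=k$, restrict to each component to get an independent family of $H_i$ with members of size at most $k$, apply Lemma~\ref{IndependentFamily}, and sum. Your explicit handling of the corner case $k>|V(H_i)|$ (and of empty intersections) is a nice addition that the paper's proof glosses over.
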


\begin{lemma} \label{IndependenceNumberStrongProduct}
  Let $G$ be a graph. Then for any positive integer $k\le|V(G)|$, we have
  $\alpha_{k}(G) = \alpha_{km}(G\boxtimes K_{m}).$
\end{lemma}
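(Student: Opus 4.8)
The plan is to prove both inequalities $\alpha_k(G) \le \alpha_{km}(G\boxtimes K_m)$ and $\alpha_{km}(G\boxtimes K_m) \le \alpha_k(G)$ separately. The first direction is the easy one: it is a special case of Lemma~\ref{StrongProductIndependenceNumberInequality} with $H = K_m$, $k_1 = k$, $k_2 = m$, since $\alpha_m(K_m) = 1$ (the whole vertex set is the unique $m$-independent family, as any two distinct vertices of $K_m$ are adjacent). Thus $\alpha_{km}(G\boxtimes K_m) \ge \alpha_k(G)\cdot\alpha_m(K_m) = \alpha_k(G)$. Alternatively and just as directly: given a $k$-independent family $\{V_i\}$ of $G$, the sets $V_i \times V(K_m)$ have size at least $km$ and remain pairwise non-adjacent in $G\boxtimes K_m$.

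For the reverse inequality, the key structural observation is that in $G\boxtimes K_m$, for any fixed vertex $u \in V(G)$ the ``column'' $\{u\}\times V(K_m)$ is a clique (since all vertices of $K_m$ are mutually adjacent, condition 2 of the strong product definition applies), and moreover two columns $\{u\}\times V(K_m)$ and $\{v\}\times V(K_m)$ with $u \sim v$ in $G$ are completely joined, while if $u\not\sim v$ and $u \ne v$ there are no edges between them. So I would take a $km$-independent family $\mathcal{F} = \{W_1,\dots,W_N\}$ of $G\boxtimes K_m$ and define, for each $j$, the projection $\pi(W_j) \dfn \{u \in V(G) : (u,a)\in W_j \text{ for some } a\}$. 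The first claim to verify is that the $\pi(W_j)$ are pairwise non-adjacent in $G$: if $u\in\pi(W_i)$, $v\in\pi(W_j)$ with $u\sim v$, then picking witnesses $(u,a)\in W_i$ and $(v,b)\in W_j$, these are adjacent in $G\boxtimes K_m$ (the case $u\sim v$), contradicting non-adjacency of $W_i, W_j$. Hence in particular the $\pi(W_j)$ are pairwise \emph{disjoint} as well (a vertex shared by $\pi(W_i)$ and $\pi(W_j)$ with $i\ne j$ would be adjacent to itself only if it had a self-loop — but more carefully, disjointness needs the $W_j$ disjoint plus the column structure: if $u\in\pi(W_i)\cap\pi(W_j)$ then $(u,a)\in W_i$ and $(u,b)\in W_j$ with $a\ne b$ since $W_i\cap W_j=\emptyset$, and $(u,a)\sim(u,b)$ in $G\boxtimes K_m$ because they lie in the same column clique, again contradicting non-adjacency). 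The second claim is the size bound: since $W_j \subseteq \pi(W_j)\times V(K_m)$, we get $|W_j| \le m\cdot|\pi(W_j)|$, so $|\pi(W_j)| \ge |W_j|/m \ge km/m = k$. Therefore $\{\pi(W_1),\dots,\pi(W_N)\}$ is a $k$-independent family of $G$, giving $N \le \alpha_k(G)$, and taking $\mathcal{F}$ maximum yields $\alpha_{km}(G\boxtimes K_m)\le\alpha_k(G)$.

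I do not expect a serious obstacle here; the only point requiring a little care is making the disjointness of the projections airtight, since ``non-adjacent'' for sets is defined only for \emph{disjoint} sets, so one must first confirm disjointness of the $\pi(W_j)$ before asserting they form an independent family — and that disjointness is exactly where the clique structure of the $K_m$-columns (rather than just the edges of $G$) enters the argument. Everything else is bookkeeping with the definition of the strong product.
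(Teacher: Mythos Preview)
Your proposal is correct and takes essentially the same approach as the paper: both exploit that each column $\{u\}\times V(K_m)$ is a clique, so only the $G$-coordinate matters. The paper phrases this as ``without loss of generality saturate each $V_i$ to be a union of full columns'' and then reads off a bijection with $k$-independent families of $G$, whereas you phrase it as projecting each $W_j$ onto its first coordinate; these are the same idea, and your write-up is in fact more careful (especially on the disjointness of the projections) than the paper's sketch.
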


\subsection{Strong Product} \label{subsec:strong_prod}
The following theorem provides a lower bound on the $\rho$-capacity of a strong product.
\begin{theorem}
  Let $G$ be a graph with $m_1$ vertices and $H$ a graph with $m_2$ vertices. Then
  \begin{align}\label{StrongProductInequality}
  C_\rho(G\boxtimes H) \geq \max_{\rho_1+\rho_2=\rho} C_{\rho_1}(G) + C_{\rho_2}(H)
  \end{align}
  where the max is taken over $0\le\rho_1\le\log{m_1}$ and $0\le\rho_2\le\log{m_2}$ such that $\rho_1+\rho_2=\rho$.
  \label{StrongProduct}
\end{theorem}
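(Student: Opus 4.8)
The plan is to reduce the bound to Lemma~\ref{StrongProductIndependenceNumberInequality} via a direct product construction. Fix any pair $(\rho_1,\rho_2)$ with $0\le\rho_1\le\log m_1$, $0\le\rho_2\le\log m_2$, and $\rho_1+\rho_2=\rho$. It suffices to prove $C_\rho(G\boxtimes H)\ge C_{\rho_1}(G)+C_{\rho_2}(H)$ for this fixed pair, since maximizing over all such admissible pairs then gives~\eqref{StrongProductInequality}.

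The core step uses the commutativity and associativity of the strong product, so that $(G\boxtimes H)^n\cong G^n\boxtimes H^n$. Applying Lemma~\ref{StrongProductIndependenceNumberInequality} to the graphs $G^n$ and $H^n$ with $k_1=2^{\rho_1 n}$ and $k_2=2^{\rho_2 n}$, and noting $k_1k_2=2^{\rho n}$, yields
\begin{align*}
  \alpha_{2^{\rho n}}\bigl((G\boxtimes H)^n\bigr) \;\ge\; \alpha_{2^{\rho_1 n}}(G^n)\cdot\alpha_{2^{\rho_2 n}}(H^n).
\end{align*}
Taking $\tfrac1n\log$ of both sides and letting $n\to\infty$, the left-hand side tends to $C_\rho(G\boxtimes H)$ and the right-hand side tends to $C_{\rho_1}(G)+C_{\rho_2}(H)$, both limits existing by the definition of the $\rho$-capacity. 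This proves the inequality for the fixed pair. The same argument can be phrased in the broadcasting language of Proposition~\ref{prop:rho-capacityForBroadcast}: since $\overline{K}_{m_1}\boxtimes\overline{K}_{m_2}=\overline{K}_{m_1m_2}$, the ``product'' of an $(n,\rho_1,R_1)$ zero-error code for $(\overline{K}_{m_1},G)$ with an $(n,\rho_2,R_2)$ zero-error code for $(\overline{K}_{m_2},H)$ is an $(n,\rho_1+\rho_2,R_1+R_2)$ zero-error code for $(\overline{K}_{m_1m_2},G\boxtimes H)$.

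There is no real obstacle here; the only point requiring a little care is that $2^{\rho_i n}$ need not be an integer, which is the sort of integrality issue the paper agrees to suppress. If one wishes to be fully rigorous, one replaces $k_i$ by $\lceil 2^{\rho_i n}\rceil$: the Cartesian product of a maximum $\lceil 2^{\rho_1 n}\rceil$-independent family of $G^n$ with a maximum $\lceil 2^{\rho_2 n}\rceil$-independent family of $H^n$ is still a $2^{\rho n}$-independent family of $G^n\boxtimes H^n$, because $\lceil 2^{\rho_1 n}\rceil\lceil 2^{\rho_2 n}\rceil\ge 2^{\rho n}$. One then checks that $\tfrac1n\log\alpha_{\lceil 2^{\rho_i n}\rceil}(G^n)\to C_{\rho_i}(G)$ using the monotonicity of $\alpha_k$ in $k$ together with the continuity of $\rho\mapsto C_\rho(G)$ on $(0,\log m_i)$ guaranteed by concavity, treating the endpoints $\rho_i\in\{0,\log m_i\}$ separately (there the asserted bound is immediate from the trivial bounds of Proposition~\ref{lem:trivial_bound_rho}).
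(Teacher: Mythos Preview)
Your proof is correct and follows essentially the same approach as the paper: apply Lemma~\ref{StrongProductIndependenceNumberInequality} to $G^n$ and $H^n$ with $k_1=2^{\rho_1 n}$, $k_2=2^{\rho_2 n}$, then take $\tfrac1n\log$ and let $n\to\infty$. The paper's proof is just the one displayed inequality with the remark that the result immediately follows; your version adds the explicit isomorphism $(G\boxtimes H)^n\cong G^n\boxtimes H^n$, the broadcasting reinterpretation, and a careful treatment of the ceiling issue, but the core argument is identical.
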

\begin{proof}
  By Lemma~\ref{StrongProductIndependenceNumberInequality} we get
$$
\alpha_{2^{(\rho_1+\rho_2)n}}((G\boxtimes H)^n) 
\ge \alpha_{2^{\rho_1 n}}(G^n)\cdot \alpha_{2^{\rho_2 n}}(H^{n}).
$$
The result immediately follows.  
\end{proof}

\begin{remark} {\rm We note that Theorem~\ref{StrongProduct} says that the function $C_{\rho}(G\boxtimes H)$ is lower bounded by the \textit{supremal convolution} of $C_{\rho}(G)$ and $C_{\rho}(H)$. Equivalently, it says that the hypograph of $C_\rho(G\boxtimes H)$ contains the Minkowsky sum of the hypographs of $C_\rho(G)$ and $C_\rho(H).$
}
\end{remark}

The following is a simple corollary of Theorem~\ref{StrongProduct}. 
\begin{corollary}
The concave conjugate of the $\rho$-capacity is subadditive with respect to the strong product, i.e., 
  $$
   C_{\star}(G\boxtimes H,\gamma)\le C_{\star}(G,\gamma)+ C_{\star}(H,\gamma).
  $$
\end{corollary}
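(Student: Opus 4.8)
The plan is to derive this immediately from Theorem~\ref{StrongProduct} together with the definition of the concave conjugate, exploiting the elementary convex-analytic fact that the (infimal) concave conjugate turns a supremal convolution into a sum. Write $m_1=|V(G)|$ and $m_2=|V(H)|$, so that $G\boxtimes H$ has $m_1m_2$ vertices and, crucially, $\log(m_1m_2)=\log m_1+\log m_2$; this is what makes the splitting below admissible.

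First I would fix arbitrary $\rho_1\in[0,\log m_1]$ and $\rho_2\in[0,\log m_2]$ and set $\rho=\rho_1+\rho_2$, which lies in $[0,\log(m_1m_2)]$. Since $C_\star(G\boxtimes H,\gamma)$ is an infimum over all admissible $\rho$, it is bounded above by the objective evaluated at this particular $\rho$, namely $C_\star(G\boxtimes H,\gamma)\le\gamma\rho-C_\rho(G\boxtimes H)$. Next I would invoke Theorem~\ref{StrongProduct}, which gives $C_\rho(G\boxtimes H)\ge C_{\rho_1}(G)+C_{\rho_2}(H)$ (this split is one of the terms in the maximum). Substituting yields
\begin{align*}
C_\star(G\boxtimes H,\gamma) &\le \gamma(\rho_1+\rho_2)-C_{\rho_1}(G)-C_{\rho_2}(H)\\
&= \bigl(\gamma\rho_1-C_{\rho_1}(G)\bigr)+\bigl(\gamma\rho_2-C_{\rho_2}(H)\bigr).
\end{align*}

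Finally, because $\rho_1$ and $\rho_2$ are arbitrary and vary independently while the left-hand side does not depend on them, I would take the infimum over $\rho_1\in[0,\log m_1]$ and over $\rho_2\in[0,\log m_2]$ on the right, obtaining
\begin{align*}
C_\star(G\boxtimes H,\gamma) &\le \inf_{\rho_1\in[0,\log m_1]}\bigl(\gamma\rho_1-C_{\rho_1}(G)\bigr)+\inf_{\rho_2\in[0,\log m_2]}\bigl(\gamma\rho_2-C_{\rho_2}(H)\bigr)\\
&= C_\star(G,\gamma)+C_\star(H,\gamma),
\end{align*}
which is the claimed subadditivity. There is no genuine obstacle here: the only point requiring a moment's care is confirming that $\rho=\rho_1+\rho_2$ is a legitimate argument for $C_\star(G\boxtimes H,\gamma)$, i.e.\ that it lies in $[0,\log(m_1m_2)]$, which is immediate from $\log(m_1m_2)=\log m_1+\log m_2$. (One could equivalently phrase the whole argument as: the hypograph of $C_\rho(G\boxtimes H)$ contains the Minkowski sum of the hypographs of $C_\rho(G)$ and $C_\rho(H)$ by the preceding remark, and the support-function/concave-conjugate of a Minkowski sum is the sum of the support functions.)
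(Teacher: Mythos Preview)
Your proof is correct and is essentially identical to the paper's own argument: both start from the definition of $C_\star$, invoke Theorem~\ref{StrongProduct} to replace $C_{\rho_1+\rho_2}(G\boxtimes H)$ by $C_{\rho_1}(G)+C_{\rho_2}(H)$, and then split the infimum over $\rho=\rho_1+\rho_2$ into two independent infima. The only cosmetic difference is that the paper writes it as a single chain of (in)equalities while you fix $\rho_1,\rho_2$ first and take infima at the end.
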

\begin{proof}
  Let $m_1=|V(G_1)|$ and $m_2=|V(G_2)|$. Then
  \begin{align*}
     C_{\star}(G\boxtimes H,\gamma)
    &=\inf_{\rho\in[0,\log(m_1m_2)]}\gamma\rho-C_{\rho}(G\boxtimes H) \\
    &=\underset{\substack{\rho_1\in[0,\log{m_1}]\\\rho_2\in[0,\log{m_2}]}}{\inf}\gamma(\rho_1+\rho_2)-C_{\rho_1+\rho_2}(G\boxtimes H) \\
    &\le\underset{\substack{\rho_1\in[0,\log{m_1}]\\\rho_2\in[0,\log{m_2}]}}{\inf}\gamma(\rho_1+\rho_2)-C_{\rho_1}(G)-C_{\rho_2}(H) \\
    &=\left(\inf_{\rho_1\in[0,\log{m_1}]}\gamma\rho_1-C_{\rho_1}(G)\right) + \left(\inf_{\rho_2\in[0,\log{m_2}]}\gamma\rho_2-C_{\rho_2}(H)\right) \\
    &= C_{\star}(G,\gamma)+ C_{\star}(H,\gamma),
  \end{align*}
  where the inequality follows from Theorem~\ref{StrongProduct}.
\end{proof}

When $H$ is taken to be a complete graph in Theorem~\ref{StrongProduct}, then the lower bound~\eqref{StrongProductInequality} is attained.
\begin{theorem}\label{GXK_m}
  For any graph $G$, we have
  $$ C_{\rho}(G\boxtimes K_m) =
  \begin{cases}
    C(G) & \emph{if }0\le \rho < \log{m} \\
    C_{\rho-\log{m}}(G) & \emph{if }\log{m}\le\rho\le\log{|V(G\boxtimes K_{m})|}.
  \end{cases}
  $$
\end{theorem}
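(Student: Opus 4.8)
The plan is to base everything on the isomorphism $(G\boxtimes K_m)^n\cong G^n\boxtimes K_{m^n}$, which holds because the strong product is commutative and associative and $K_m^{\boxtimes n}\cong K_{m^n}$, and then to feed the appropriate subset-size into Lemma~\ref{IndependenceNumberStrongProduct}. Write $m_1=|V(G)|$. I would first record two elementary facts: (a) $\alpha_k(H)\le\alpha_1(H)=\alpha(H)$ for every graph $H$ and every $k\ge 1$ (pick one vertex from each subset of a $k$-independent family); and (b) $\alpha(G^n\boxtimes K_{m^n})=\alpha(G^n)$, since an independent set of $G^n\boxtimes K_{m^n}$ projects injectively onto an independent set of $G^n$ (equal $G^n$-coordinates would force adjacent $K_{m^n}$-coordinates, and a $G^n$-edge lifts to an edge of the product), the reverse inequality being Lemma~\ref{StrongProductIndependenceNumberInequality}. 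Combining (a), (b) and the isomorphism with the monotonicity of $C_\rho$ in $\rho$ yields, for every $\rho\ge 0$,
\[
C_\rho(G\boxtimes K_m)\le C_0(G\boxtimes K_m)=C(G\boxtimes K_m)=C(G),
\]
which is the upper bound in both regimes.

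For $\log m\le\rho\le\log(m_1m)$ I would apply Lemma~\ref{IndependenceNumberStrongProduct} with graph $G^n$, clique $K_{m^n}$, and $k=2^{(\rho-\log m)n}$ (ignoring the harmless integer rounding; note that $1\le k\le m_1^n$ is exactly the stated range for $\rho$). This gives
\[
\alpha_{2^{\rho n}}\big((G\boxtimes K_m)^n\big)=\alpha_{2^{\rho n}}(G^n\boxtimes K_{m^n})=\alpha_{k\,m^n}(G^n\boxtimes K_{m^n})=\alpha_{2^{(\rho-\log m)n}}(G^n),
\]
and dividing by $n$ and letting $n\to\infty$ gives $C_\rho(G\boxtimes K_m)=C_{\rho-\log m}(G)$ as an exact equality, so no separate lower bound is needed here. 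In particular the boundary value at $\rho=\log m$ is $C_0(G)=C(G)$, matching the other branch.

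For $0\le\rho<\log m$ it only remains to prove $C_\rho(G\boxtimes K_m)\ge C(G)$. Here I would use Lemma~\ref{IndependenceNumberStrongProduct} with $k=1$ (equivalently Lemma~\ref{StrongProductIndependenceNumberInequality} with $k_1=1$, $k_2=m^n$, or Theorem~\ref{StrongProduct} with $\rho_1=0$, $\rho_2=\rho$): since $2^{\rho n}\le m^n$ and $\alpha_k$ is non-increasing in $k$,
\[
\alpha(G^n)=\alpha_{m^n}\big((G\boxtimes K_m)^n\big)\le\alpha_{2^{\rho n}}\big((G\boxtimes K_m)^n\big),
\]
so passing to growth rates gives $C(G)\le C_\rho(G\boxtimes K_m)$, which together with the first paragraph closes this case. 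The argument is largely mechanical: the only points that need care are the isomorphism $(G\boxtimes K_m)^n\cong G^n\boxtimes K_{m^n}$, the verification of $\alpha(G\boxtimes K_m)=\alpha(G)$, and correctly bookkeeping which subset-size to plug into Lemma~\ref{IndependenceNumberStrongProduct} in each regime, rather than any substantial obstacle.
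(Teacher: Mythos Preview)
Your proposal is correct and follows essentially the same route as the paper: both arguments rest on the isomorphism $(G\boxtimes K_m)^n\cong G^n\boxtimes K_{m^n}$, the identity $C(G\boxtimes K_m)=C(G)$, and Lemma~\ref{IndependenceNumberStrongProduct} applied with $k=2^{(\rho-\log m)n}$ to obtain the exact equality in the regime $\rho\ge\log m$. The only cosmetic difference is that for $0\le\rho<\log m$ the paper sandwiches $C_\rho(G\boxtimes K_m)$ between $C_0(G\boxtimes K_m)$ and $C_{\log m}(G\boxtimes K_m)$ purely by monotonicity, whereas you obtain the lower bound directly from $\alpha(G^n)=\alpha_{m^n}((G\boxtimes K_m)^n)\le\alpha_{2^{\rho n}}((G\boxtimes K_m)^n)$; these are equivalent one-line observations.
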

\begin{proof}
  First we can readily verify that $C(G\boxtimes K_m)=C(G)$. Since $(G\boxtimes K_m)^n\cong G^n\boxtimes K_{m^n}$, by Lemma~\ref{IndependenceNumberStrongProduct} we have
  $$\alpha_{2^{\rho n}}(G^n) = \alpha_{2^{\rho n}m^n}(G^n\boxtimes K_{m^n}).$$  
  It follows that 
  $$C_{\rho}(G\boxtimes K_m)=C_{\rho-\log{m}}(G) \text{  for  } \rho\ge\log{m}.$$
  For $0<\rho<\log{m}$, the result follows from the monotonically decreasing property of the $\rho$-capacity.
\end{proof}

\subsection{Disjoint Union} \label{subsec:disjoint_union}
We begin by finding the $\rho$-capacity of a union of two identical graphs, in terms of the $\rho$-capacity of a single copy. 
\begin{theorem}
  Let $G$ be a graph with $m$ vertices. Then
  \begin{align*}
    C_{\rho}(G+G) =
    \begin{cases}
      1+C_{\rho}(G) & \emph{if }0\le\rho<\log{m}\\
      1-\rho+\log{m} & \emph{if }\log{m}\le\rho\le 1+\log{m}.
    \end{cases}
  \end{align*}
  \label{DisjointUnion}
\end{theorem}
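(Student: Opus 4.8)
The plan is to prove the two cases separately, corresponding to whether $\rho$ is below or above $\log m$. For the range $0\le\rho<\log m$, I would establish matching upper and lower bounds. The lower bound $C_\rho(G+G)\ge 1+C_\rho(G)$ follows by taking $G+G \cong G\boxtimes K_2$ is \emph{not} quite right (the disjoint union is not a strong product), so instead I would argue directly: from a $2^{\rho n}$-independent family $\mathcal{F}$ of $G^n$ with $|\mathcal{F}|=\alpha_{2^{\rho n}}(G^n)$, and using the decomposition $(G+G)^n \cong \sum_{j=0}^n \binom{n}{j} G^n$ (which follows from distributivity of $\boxtimes$ over $+$ exactly as in~\eqref{G^n}), I would place disjoint copies of $\mathcal{F}$ into each of the roughly $2^n$ components of $(G+G)^n$, obtaining $\alpha_{2^{\rho n}}((G+G)^n)\ge 2^n \cdot \alpha_{2^{\rho n}}(G^n)$, hence $C_\rho(G+G)\ge 1+C_\rho(G)$. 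Actually it is cleaner to invoke Lemma~\ref{IndependenceNumber} directly applied to $(G+G)^n \cong \sum \binom{n}{j}G^n$: the left inequality there gives the lower bound immediately.

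For the upper bound in this range, I would again use $(G+G)^n \cong \sum_{j=0}^n \binom{n}{j} G^n$ together with the right-hand inequality of Lemma~\ref{IndependenceNumber}, namely $\alpha_k(\sum_i H_i)\le \frac{k-1}{k}\sum_i(2\alpha_k(H_i)+1)$. With all $H_i = G^n$ and $k=2^{\rho n}$, and $\sum_j \binom{n}{j} = 2^n$ summands, this yields $\alpha_{2^{\rho n}}((G+G)^n)\le \sum_j\binom{n}{j}(2\alpha_{2^{\rho n}}(G^n)+1) \le 2^n(2\alpha_{2^{\rho n}}(G^n)+1)$. Taking $\tfrac1n\log(\cdot)$ and letting $n\to\infty$ gives $C_\rho(G+G)\le 1+C_\rho(G)$, provided $\alpha_{2^{\rho n}}(G^n)$ grows (so the $+1$ is negligible); since $\rho<\log m$, we have $\alpha_{2^{\rho n}}(G^n)\ge m^n/2^{\rho n} = 2^{(\log m-\rho)n}\to\infty$, so this is fine. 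Combined, $C_\rho(G+G)=1+C_\rho(G)$ for $0\le\rho<\log m$.

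For the range $\log m\le\rho\le 1+\log m$: here $G+G$ has $2m$ vertices, so Proposition~\ref{lem:trivial_bound_rho} gives the upper bound $C_\rho(G+G)\le \log(2m)-\rho = 1+\log m-\rho$. For the matching lower bound, note $\alpha_m(G+G)\ge 2$ (take each copy of $G$ as a block, of size exactly $m$), which by the argument in the first Example after Proposition~\ref{lem:trivial_bound_rho} — or directly by concavity of $C_\rho$ together with the point $C_{\log m}(G+G)\ge 1$ (from case 1, continuity/monotonicity) and $C_{1+\log m}(G+G)=0$ — yields $C_\rho(G+G)\ge 1+\log m-\rho$ on this interval by time sharing. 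Alternatively, Corollary~\ref{free-lunch-rate-and-packing-rate}(1) applied to $G+G$ (which has $2$ connected components of size $m$, if $G$ is connected; in general one refines the argument using the components of $G$) pins down the packing point at exactly $\tfrac{1}{2m}(m\log m + m\log m)=\log m$ — wait, this needs $G$ connected; to avoid that hypothesis I would instead just use the independent family $\{V(G_{\text{first copy}}), V(G_{\text{second copy}})\}$ of size-$m$ blocks in Lemma~\ref{thm:bounds}'s lower bound~\eqref{lower_bound_rho}, which gives $C_\rho(G+G)\ge \log(2m)-\rho$ for $\rho$ in the appropriate range $[\tfrac12(\log m+\log m),\ \tfrac{1}{2m}(2m\log m)] = [\log m,\log m]$... that degenerates. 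So the cleanest route for the lower bound on $[\log m, 1+\log m]$ is time sharing between the point $(\log m, \ 1+C_{\log m}(G))$ — or rather its limit from the left, $(\log m,\,1)$, using $C_{\log m}(G)=0$ — and the point $(1+\log m,\,0)$.

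The main obstacle I anticipate is handling the boundary point $\rho=\log m$ cleanly and making sure the $+1$ additive slack in Lemma~\ref{IndependenceNumber} is asymptotically negligible; the latter requires the observation that $\alpha_{2^{\rho n}}(G^n)$ is exponentially large for $\rho<\log m$, which is immediate. A secondary subtlety is that to apply Lemma~\ref{IndependenceNumber} to $(G+G)^n$ one needs the isomorphism $(G+G)^n\cong\sum_{j=0}^n\binom nj G^n$; this is a routine consequence of the distributive and commutative/associative laws for $\boxtimes$ and $+$ recorded in Section~\ref{sec:pre}, analogous to~\eqref{G^n}.
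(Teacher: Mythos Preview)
Your approach is the paper's: write $(G+G)^n\cong 2^nG^n$, sandwich $\alpha_{2^{\rho n}}((G+G)^n)$ between $2^n\alpha_{2^{\rho n}}(G^n)$ and $2^n(2\alpha_{2^{\rho n}}(G^n)+1)$ via Lemma~\ref{IndependenceNumber}, and treat $\rho\ge\log m$ separately. Two small points deserve correction.

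First, the inequality $\alpha_{2^{\rho n}}(G^n)\ge m^n/2^{\rho n}$ is false in general (take $G=K_m$, where $\alpha_k(G^n)=1$ for all $k\le m^n$). You do not need it: since $\rho\le\log m$ forces $\alpha_{2^{\rho n}}(G^n)\ge 1$, one has $2\alpha+1\le 3\alpha$ and hence $\tfrac1n\log\bigl(2^n(2\alpha+1)\bigr)\to 1+C_\rho(G)$ with no growth hypothesis. This is exactly how the paper argues.

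Second, your worry that Corollary~\ref{free-lunch-rate-and-packing-rate}(1) needs $G$ connected is unfounded, and invoking that corollary is precisely how the paper handles $\rho\ge\log m$. If $G$ has components of sizes $m_1,\dots,m_s$, then $G+G$ has components $m_1,\dots,m_s,m_1,\dots,m_s$, so the corollary gives
\[
\rho_*(G+G)\le\frac{1}{2m}\cdot 2\sum_{i=1}^s m_i\log m_i=\frac{1}{m}\sum_{i=1}^s m_i\log m_i\le\log m,
\]
the last step being $\log m - H(m_1/m,\dots,m_s/m)\le\log m$. Your time-sharing fallback is also valid. And your attempt via~\eqref{lower_bound_rho} with the two-copy family does \emph{not} degenerate: with $M_{\mathcal{F}_1}=2m$ the relevant interval in~\eqref{rho-capacity-outrange} is $[\log m,\log(2m)]$, not $[\log m,\log m]$; you conflated the total vertex count (here $2m$) with the per-copy size $m$.
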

\begin{proof}
  It can be easily verified that $C(G+G)=1+C(G)$. Now we consider the case $0<\rho\le\log{m}$. From $(G+G)^{n} \cong 2^{n}G^{n}$ and Lemma~\ref{IndependenceNumber} we have
  $$2^n\cdot\alpha_{2^{\rho n}}(G^n) 
  \le \alpha_{2^{\rho n}}((G+G)^n)
  < 2^{n}\cdot\left(2\,\alpha_{2^{\rho n}}(G^n)+1\right).$$
It follows that $C_{\rho}(G+G)=1+C_{\rho}(G)$ for $0<\rho\le\log{m}$. The remaining case follows directly from 1) of Corollary~\ref{free-lunch-rate-and-packing-rate}.
\end{proof}

We now provide a lower bound on the $\rho$-capacity of a general disjoint union.
\begin{theorem} \label{DisjointUnion_2}
  Let $G$ be a graph with $m_1$ vertices and $H$ a graph with $m_2$ vertices, and let 
  $$\delta=(m_1\log{m_1}+m_2\log{m_2})/(m_1+m_2).$$
  Then
  \begin{align}\label{DisjointUnionInequality}
    C_{\rho}(G+H)
    \begin{cases}
      \ge \underset{p\rho_1+(1-p)\rho_2=\rho}{\max} h(p)+p\,C_{\rho_1}(G)+(1-p)C_{\rho_2}(H) & \emph{if } 0\le\rho<\delta \\
      =\log{(m_1+m_2)}-\rho & \emph{if }\delta\le\rho\le\log{(m_1+m_2)},
    \end{cases}
  \end{align}
  where the max is taken over $0\le p\le 1, 0\le \rho_1\le \log{m_1}$, and  $0\le\rho_2\le\log{m_2}$ such that $p\rho_1+(1-p)\rho_2=\rho$.
\end{theorem}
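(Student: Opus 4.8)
The plan is to establish the two cases separately. For the second case, $\delta \le \rho \le \log(m_1+m_2)$, I would argue as in the disjoint-union corollary: the graph $G+H$ has two connected components (or is itself a union of components whose sizes aggregate to $m_1$ and $m_2$), so by part 1) of Corollary~\ref{free-lunch-rate-and-packing-rate} the packing point satisfies $\rho_*(G+H) \le \delta$, which combined with the trivial upper bound $C_\rho(G+H) \le \log(m_1+m_2)-\rho$ of Proposition~\ref{lem:trivial_bound_rho} forces equality $C_\rho(G+H) = \log(m_1+m_2)-\rho$ on $[\delta, \log(m_1+m_2)]$. In fact one should be slightly careful: $G$ and $H$ need not be connected, but the quantity $\delta$ is exactly the $\rho_*$ bound one gets from treating $V(G)$ and $V(H)$ as an independent family of size $2$ in $G+H$; the lower bound $C_\rho(G+H)\ge\log(m_1+m_2)-\rho$ on $[\delta,\cdot]$ then follows from~\eqref{lower_bound_rho} applied to that two-set independent family together with~\eqref{rho-capacity-outrange}, exactly as in the proof of Corollary~\ref{free-lunch-rate-and-packing-rate}.

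For the first case, the lower bound is a counting/concatenation construction. Fix $p\in[0,1]$, $\rho_1\in[0,\log m_1]$, $\rho_2\in[0,\log m_2]$ with $p\rho_1+(1-p)\rho_2=\rho$. Let $\mathcal{F}$ be a $2^{\rho_1 n}$-independent family of $G^n$ and $\mathcal{G}$ a $2^{\rho_2 n}$-independent family of $H^n$, chosen to asymptotically achieve $C_{\rho_1}(G)$ and $C_{\rho_2}(H)$. Working with $N = \lceil p\cdot\text{(number of factors)}\rceil$ coordinates of type $G$ and the rest of type $H$, I would use the decomposition $(G+H)^{n} \cong \sum_{i=0}^n \binom{n}{i} G^{i}\boxtimes H^{n-i}$: restricting to the term $i = pn$ gives roughly $\binom{n}{pn} \approx 2^{nh(p)}$ disjoint copies of $G^{pn}\boxtimes H^{(1-p)n}$. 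In each such copy, Lemma~\ref{StrongProductIndependenceNumberInequality} produces a $2^{pn\rho_1 + (1-p)n\rho_2}$-independent family, i.e. a $2^{\rho n}$-independent family, of size $\alpha_{2^{\rho_1 pn}}(G^{pn})\cdot\alpha_{2^{\rho_2(1-p)n}}(H^{(1-p)n}) \approx 2^{n(p C_{\rho_1}(G) + (1-p)C_{\rho_2}(H))}$. Since the $\binom{n}{pn}$ copies live on disjoint vertex sets they are pairwise non-adjacent in $(G+H)^n$, so by the first inequality in Lemma~\ref{IndependenceNumber} (monotonicity of $\alpha_k$ under disjoint union) we get
\begin{align*}
\alpha_{2^{\rho n}}\big((G+H)^n\big) \ \ge\ \binom{n}{pn}\cdot \alpha_{2^{\rho_1 pn}}(G^{pn})\cdot \alpha_{2^{\rho_2(1-p)n}}(H^{(1-p)n}).
\end{align*}
Taking $\tfrac1n\log$ and letting $n\to\infty$ yields $C_\rho(G+H) \ge h(p) + p\,C_{\rho_1}(G) + (1-p)C_{\rho_2}(H)$; optimizing over the admissible $(p,\rho_1,\rho_2)$ gives the claimed bound.

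The main obstacle is the bookkeeping of integrality and of the "size" constraint on the subsets in the constructed family: the subsets built inside $G^{pn}\boxtimes H^{(1-p)n}$ have size $2^{\rho_1 pn}\cdot 2^{\rho_2(1-p)n} = 2^{\rho n}$ only when $pn$, $\rho_1 pn$, $\rho_2(1-p)n$ are all integers, so one must pass to a subsequence of $n$ (or a suitable $s$-fold blow-up) along which these hold approximately, and absorb the rounding errors into the limit — this is the kind of "integer issue" the paper explicitly waives, but it is the only nontrivial point. A secondary subtlety is making sure the same value of $\rho$ (hence the constraint $p\rho_1+(1-p)\rho_2=\rho$) is used consistently as $n$ grows; this is handled by fixing $p,\rho_1,\rho_2$ once and for all and only then sending $n\to\infty$, and finally taking the supremum over feasible triples outside the limit.
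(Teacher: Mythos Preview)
Your proposal is correct and follows essentially the same route as the paper: the equality on $[\delta,\log(m_1+m_2)]$ via Corollary~\ref{free-lunch-rate-and-packing-rate} (and you are right to note that one really uses the independent family $\{V(G),V(H)\}$ rather than literal connected components), and the lower bound via the decomposition $(G+H)^n\cong\sum_i\binom{n}{i}G^i\boxtimes H^{n-i}$ together with Lemma~\ref{StrongProductIndependenceNumberInequality} at $i=pn$. The paper's proof is terser but identical in substance; your remarks on integrality and on the connectedness caveat are sound elaborations rather than a different argument.
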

\begin{proof}
  By 1) of Corollary~\ref{free-lunch-rate-and-packing-rate} we have $C_{\rho}(G+H)=\log(m_1+m_2)-\rho$ if $\rho\ge\frac{1}{m_1+m_2}\sum_{i=1}^2m_i\log{m_i}=\delta$.
  By the properties of the strong product we have
  $$(G+H)^n \cong \sum_{i=0}^{n}{n\choose i}G^i\boxtimes H^{n-i}.$$
  Let $i=pn$. By Lemma~\ref{StrongProductIndependenceNumberInequality} we have
  \begin{align*}
  \alpha_{2^{(p\rho_1+(1-p)\rho_2)n}}((G+H)^n) \ge {n\choose pn}\cdot \alpha_{2^{\rho_1 pn}}(G^{pn})\cdot \alpha_{2^{\rho_2 (1-p)n}}(H^{(1-p)n}).
  \end{align*}
Now the first inequality follows directly. 
\end{proof}

If $H$ is taken to be a complete graph in Theorem~\ref{DisjointUnion_2}, then the lower bound~\eqref{DisjointUnionInequality} is attained.
\begin{theorem}\label{G+K_m}
  Let $G$ be a graph with $m_1$ vertices, and let 
  $$\delta=(m_1\log{m_1}+m_2\log{m_2})/(m_1+m_2).$$
  Then
  \begin{align*}
   C_{\rho}(G + K_{m_2}) =
   \begin{cases}
     \underset{p\rho_1+(1-p)\log{m_2} \ge \rho}{\max} h(p)+p\,C_{\rho_1}(G) &\emph{if } 0\le\rho<\delta \\
     \log{(m_1+m_2)}-\rho & \emph{if }\delta\le\rho\le\log{m_1+m_2}.
 \end{cases} 
  \end{align*}
  Here the max is taken over $0\le p\le 1$ and $0\le \rho_1\le \log{m_1}$ such that $p\rho_1+(1-p)\log{m_2} \ge \rho$.
\end{theorem}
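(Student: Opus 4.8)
The plan is to prove the two inequalities separately. The lower bound, together with the regime $\rho\ge\delta$, falls out of Theorem~\ref{DisjointUnion_2}; the substance is the matching upper bound for $0\le\rho<\delta$. For the lower bound I would instantiate Theorem~\ref{DisjointUnion_2} with $H=K_{m_2}$. The key observation is that a complete graph admits at most one subset in any independent family, so $\alpha_{2^{\rho_2 n}}(K_{m_2}^n)=\alpha_{2^{\rho_2 n}}(K_{m_2^n})\in\{0,1\}$ and hence $C_{\rho_2}(K_{m_2})=0$ for every $\rho_2\in[0,\log m_2]$. Theorem~\ref{DisjointUnion_2} then gives
\[ C_\rho(G+K_{m_2})\ \ge\ \max_{\,p\rho_1\le\rho\le p\rho_1+(1-p)\log m_2}h(p)+p\,C_{\rho_1}(G), \]
and the left constraint $p\rho_1\le\rho$ is redundant: if $p\rho_1>\rho$ while $p\rho_1+(1-p)\log m_2\ge\rho$, replacing $\rho_1$ by $\rho/p\le\rho_1$ keeps the point feasible and, by monotonicity of $C_{(\cdot)}(G)$, does not decrease the objective. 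The case $\delta\le\rho\le\log(m_1+m_2)$ is already the second case of Theorem~\ref{DisjointUnion_2} (equivalently part~1 of Corollary~\ref{free-lunch-rate-and-packing-rate}, since each connected component of $G$ has at most $m_1$ vertices, so the relevant weighted log-size average for $G+K_{m_2}$ is at most $\delta$).

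For the upper bound with $0\le\rho<\delta$ I would set $k=2^{\rho n}$ and use $(G+K_{m_2})^n\cong\sum_{i=0}^n\binom{n}{i}G^i\boxtimes K_{m_2^{n-i}}$, splitting the components into the ``large'' ones with $m_1^i m_2^{n-i}\ge k$ and the ``small'' ones; write $P,Q$ for the two disjoint unions, so $(G+K_{m_2})^n=P+Q$, and apply Lemma~\ref{IndependenceNumber} to $P+Q$ and again to $P$. For each large $i$, Lemma~\ref{IndependenceNumberStrongProduct} — modulo rounding $k/m_2^{n-i}$ up to an integer, or, precisely, by projecting a $k$-independent family of $G^i\boxtimes K_{m_2^{n-i}}$ onto the $G^i$-coordinates to obtain a $\lceil k/m_2^{n-i}\rceil$-independent family of $G^i$ of the same size — gives $\alpha_k(G^i\boxtimes K_{m_2^{n-i}})\le\alpha_{2^{i\rho_i}}(G^i)\le 2^{i\,C_{\rho_i}(G)}$, where $\rho_i=\max\{0,(\rho n-(n-i)\log m_2)/i\}$ and I use $C_{\rho_i}(G)=\sup_j\tfrac1j\log\alpha_{2^{\rho_i j}}(G^j)$. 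Being large forces $\rho_i\le\log m_1$, and for $p=i/n$ one has $p\rho_i+(1-p)\log m_2=\max\{(1-p)\log m_2,\rho\}\ge\rho$, so $(p,\rho_i)$ is a feasible point of the claimed maximum and the large part contributes an exponent no larger than the right-hand side of the theorem. For $Q$ I would use the vertex-count branch of Lemma~\ref{IndependenceNumber} instead: $\alpha_k(Q)\le|V(Q)|/k=\tfrac1k\sum_{\text{small }i}\binom{n}{i}m_1^i m_2^{n-i}$.

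It then remains to control the small part by a one-variable optimization. Assuming $m_1\ge m_2$ (the case $m_1<m_2$ is symmetric; if $m_1=m_2$ there are no small components), a component is small exactly when $p<p^\dagger:=(\rho-\log m_2)/(\log m_1-\log m_2)$, and $\phi(p):=h(p)+p\log m_1+(1-p)\log m_2$ is increasing on $[0,m_1/(m_1+m_2)]$; since $\rho<\delta$ forces $p^\dagger<m_1/(m_1+m_2)$, one gets $\max_{p\le p^\dagger}\phi(p)=\phi(p^\dagger)=h(p^\dagger)+\rho$, so $\tfrac1n\log(|V(Q)|/k)$ has $\limsup$ at most $h(p^\dagger)$, which equals the value at the feasible point $(p^\dagger,\log m_1)$ because $C_{\log m_1}(G)=0$. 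Combining the estimates on $P$ and $Q$, absorbing polynomial factors, and passing from the discrete maximum over $i/n$ to the supremum over $p\in[0,1]$ (legitimate by continuity of $h$ and of $p\mapsto p\,C_{\rho_i(p)}(G)$, the latter because $C_{(\cdot)}(G)$ is concave, hence continuous, and bounded), yields $C_\rho(G+K_{m_2})\le\max_{p\rho_1+(1-p)\log m_2\ge\rho}h(p)+p\,C_{\rho_1}(G)$, matching the lower bound. I expect the one genuinely delicate point to be the small components: bounding their contribution crudely through the ``$+1$'' in Lemma~\ref{IndependenceNumber} would cost a factor up to $2^n$ and break the bound for $\rho$ near $\delta$ (where the true capacity is below $1$), so one must keep the $|V(Q)|/k$ estimate and exploit $\rho<\delta$ via the monotonicity of $\phi$ on $[0,p^\dagger]$; the integer rounding in Lemma~\ref{IndependenceNumberStrongProduct} and the discrete-to-continuous passage are the only other steps needing care.
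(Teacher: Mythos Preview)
Your proposal is correct and follows essentially the same route as the paper's proof. Both arguments derive the lower bound from Theorem~\ref{DisjointUnion_2} specialized to $H=K_{m_2}$, handle the range $\rho\ge\delta$ via Corollary~\ref{free-lunch-rate-and-packing-rate}, and for the upper bound expand $(G+K_{m_2})^n\cong\sum_i\binom{n}{i}G^i\boxtimes K_{m_2^{n-i}}$, split the summands into ``large'' ($m_1^im_2^{n-i}\ge 2^{\rho n}$) and ``small'' pieces, apply Lemma~\ref{IndependenceNumber}, bound the large summands via the projection underlying Lemma~\ref{IndependenceNumberStrongProduct} (the paper phrases this as ``similar analysis as Theorem~\ref{GXK_m}''), and bound the small part by the vertex-count branch $|V(Q)|/k$; the paper's $B(n)$ and $q$ are exactly your $|V(Q)|/2^{\rho n}$ and $p^\dagger$. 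Your case organization is slightly more uniform (you treat $m_1\ge m_2$ and absorb the easy sub-cases, whereas the paper separates $m_1=m_2$, $m_1<m_2$ with $\rho\le\log m_1$, and $m_1<m_2$ with $\rho>\log m_1$), but the substance is identical.
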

\begin{proof}
  The second equality follows from 1) of Corollary~\ref{free-lunch-rate-and-packing-rate} directly. 
  We have
  $$(G+K_{m_2})^n \cong \sum_{i=0}^{n}{n\choose i}G^i\boxtimes K_{m_2^{n-i}}.$$   

  First, assume that $m_1=m_2$ and $0\le\rho<\delta=\log{m_1}$. Then by Lemma~\ref{IndependenceNumber} we have
  \begin{align}\label{LowerUpperBounds}
  \sum_{i=0}^n{n\choose i}\alpha_{2^{\rho n}}(G^i\boxtimes K_{m_2^{n-i}}) 
  \le \alpha_{2^{\rho n}}((G+K_{m_2})^n)
  < \sum_{i=0}^n{n\choose i}(2\,\alpha_{2^{\rho n}}(G^i\boxtimes K_{m_2^{n-i}})+1).
\end{align}
Since $\alpha_{2^{\rho n}}(G^i\boxtimes K_{m_2^{n-i}})\ge1$, we get
\begin{align}
  C_{\rho}(G+K_{m_2})
  =\lim_{n\rightarrow\infty}\frac{1}{n}\log\left({\sum_{i=0}^n{n\choose i}\alpha_{2^{\rho n}}(G^i\boxtimes K_{m_2^{n-i}})}\right).
  \label{G+K_mLimit}
\end{align}
  Let $i=pn$, and $\rho_1=(\rho-(1-p)\log{m_2})/p$. 
  Through a similar analysis as Theorem~\ref{GXK_m}, we have
  \begin{align}
  \lim_{n\rightarrow\infty}\frac{1}{n}\log{\alpha_{2^{\rho n}}(G^{pn}\boxtimes K_{m_2^{(1-p)n}})}
  = \begin{cases} 
      p\,C_{\rho_1}(G) & \text{if }\rho\ge (1-p)\log{m_2} \\
      p\,C_{0}(G) & \text{if }\rho <(1-p)\log{m_2}.
    \end{cases}
  \end{align}
      Combining this with~\eqref{G+K_mLimit} we get
  \begin{align*}
  C_{\rho}(G+K_{m_2})
  =\underset{p\rho_1+(1-p)\log{m_2}\ge\rho}{\max} h(p)+p\,C_{\rho_1}(G).  
\end{align*}

Secondly, assume that $m_1<m_2$. If $0\le\rho\le\log{m_1}$, then it can be proved similarly as above. Thus, we can assume that $\log{m_1}<\rho<\delta$. Let $N$ be the largest integer such that $|V(G^i\boxtimes K_{m_2^{n-i}})|=m_1^{N}m_2^{n-N}\ge 2^{\rho n}$. Set
\begin{align*}
  A(n) &= \sum_{i=0}^N{n\choose i}\alpha_{2^{\rho n}}(G^i\boxtimes K_{m_2^{n-i}}),\\ 
  B(n) &= \frac{1}{2^{\rho n}}\sum_{i=N}^n{n\choose i}m_1^im_2^{n-i}. 
\end{align*}
Then
\begin{align}\label{A(n)}
 A(n)\le\alpha_{2^{\rho n}}((G+K_{m_2})^n).
 \end{align}
By Lemma~\ref{IndependenceNumber} we have 
\begin{align}\label{A(n)3}
  \begin{split}
  \alpha_{2^{\rho n}}((G+K_{m_2})^n)
  &<\sum_{i=0}^{N}{n\choose i}\left(2\alpha_{2^{\rho n}}(G^i\boxtimes K_{m_2^{n-i}})+1\right)+2\left(\alpha_{2^{\rho n}}\left(\sum_{i=N}^n{n\choose i}G^i\boxtimes K_{m_2^{n-i}}\right)+1\right)\\
  &\le 2(A(n)+B(n))+\sum_{i=0}^{N}{n\choose i}+2.
\end{split}
\end{align}
Following a similar analysis as above, we can obtain
\begin{align}\label{A(n)2}
 \lim_{n\rightarrow\infty}\frac{1}{n}\log{A(n)}= \underset{p\rho_1+(1-p)\log{m_2}\ge\rho}{\max} h(p)+p\,C_{\rho_1}(G).
 \end{align}
Following the same method of Lemma~\ref{optimization_problems} of the Appendix, we can prove 
\begin{align}\label{A(n)4}
 \lim_{n\rightarrow\infty}\frac{1}{n}\log{B(n)}=h(q)
 \end{align}
where $q\in[0,1]$ satisfying that $q\log{m_1}+(1-q)\log{m_2}=\rho$. 
Combining~\eqref{A(n)} --~\eqref{A(n)4}, we get
\begin{align*}
  C_{\rho}(G)
  = \lim_{n\rightarrow\infty}\frac{1}{n}\log{A(n)}
  &= \lim_{n\rightarrow\infty}\frac{1}{n}\log{\left(2(A(n)+B(n))+\sum_{i=0}^{N}{n\choose i}+2\right)} \\
 &= \underset{p\rho_1+(1-p)\log{m_2}\ge\rho}{\max} h(p)+p\,C_{\rho_1}(G).
\end{align*}
The case $m_1>m_2$ can be proved similarly.
\end{proof}

\begin{remark}{\rm
    Let $G$ and $H$ be two graphs. Suppose that $\Theta(G)=\log{A}$ and $\Theta(H)=\log{B}$ for some $A>0$ and $B>0$. Shannon~\cite[Theorem 4]{Sh1956} proved that $\Theta(G+H)\ge \log(A+B)$ and $\Theta(G\boxtimes H)\ge \log{A}+\log{B}$, and that both bounds hold with equality if the vertex set of one of the two graphs, say $G$, can be covered by $\alpha(G)$ cliques. He also conjectured that the equalities hold in general, which has been disproved by Alon~\cite{Alon1998}. Theorems~\ref{DisjointUnion_2}--\ref{G+K_m} can be seen as 
    generalizations of~\cite[Theorem 4]{Sh1956} to the $\rho$-capacity setting. It is thus interesting to ask whether~\eqref{DisjointUnionInequality} holds with equality under Shannon's conditions. 
}
\end{remark}

\section{$\rho$-Capacity and Structural Properties}\label{sec:structural}
In this section, we discuss the connection between the $\rho$-capacity and some simple structural properties of the graph. First, we prove that a graph $G$ is not connected if and only if its packing point $\rho_*(G)<\log{|V(G)|}$. More explicitly, we show that the gap between $\rho_*(G)$ and $\log{|V(G)|}$ is equal to the Shannon entropy of the distribution induced by the sizes of the connected components of $G$. We then proceed to show that the free-lunch point and the packing point of a graph $G$ coincide if and only if $G=sK_{n}$ for some positive integers $s$ and $n$. Lastly, we show that two disjoint union of cliques are isomorphic if and only if their $\rho$-capacity functions are the same. 

\begin{theorem}\label{PackingPoint}
  Let $G$ be a graph with $m$ vertices. Suppose that $G$ has $s$ connected components of sizes $m_1,\dots,m_s$. Let $Q=\{m_1/m,\dots,m_s/m\}$. Then $\rho_{*}(G) = \log{m} - H(Q)$. In particular, $\rho_{*}(G)<\log{m}$ if and only if G is not connected. 
\end{theorem}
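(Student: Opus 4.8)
The plan is to prove $\rho_*(G)\le\log m-H(Q)$ and $\rho_*(G)\ge\log m-H(Q)$ separately, and then read off the ``in particular'' statement. First record the bookkeeping identity $\log m-H(Q)=\tfrac1m\sum_i m_i\log m_i$. With this, the upper bound on $\rho_*(G)$ is exactly part~1 of Corollary~\ref{free-lunch-rate-and-packing-rate}: the connected components form an independent family, so \eqref{lower_bound_rho} together with \eqref{rho-capacity-outrange} gives $C_\rho(G)\ge\log m-\rho$ for $\rho\ge\tfrac1m\sum_i m_i\log m_i$, which matches the trivial upper bound of Proposition~\ref{lem:trivial_bound_rho}. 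In particular $C_{\log m-H(Q)}(G)=\log m-(\log m-H(Q))$, and since $\rho\mapsto C_\rho(G)$ is concave and bounded above by the line $\log m-\rho$, the contact set $\{\rho:C_\rho(G)=\log m-\rho\}$ is a closed subinterval $[\rho_*(G),\log m]$ with $\rho_*(G)\le\log m-H(Q)$.

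The work is the reverse inequality, i.e.\ that $C_\rho(G)<\log m-\rho$ \emph{strictly} for every $\rho<\log m-H(Q)$. Write $G=H_1+\cdots+H_s$ with $|V(H_i)|=m_i$, so that $G^n\cong\bigsqcup_{\sigma\in[s]^n}D_\sigma$ with $D_\sigma=H_{\sigma_1}\boxtimes\cdots\boxtimes H_{\sigma_n}$ connected of order $M_\sigma=\prod_k m_{\sigma_k}$. The decisive structural observation is that two vertices of $D_\sigma$ can fail to be adjacent only by disagreeing, at a non-edge, in some coordinate $k$ where $H_{\sigma_k}$ is not complete. Consequently, projecting onto those coordinates sends any independent family of $D_\sigma$ to an independent family of the connected graph $E_\sigma=\boxtimes_{k:\,H_{\sigma_k}\text{ not complete}}H_{\sigma_k}$ (the projected parts are pairwise disjoint and pairwise non-adjacent); hence the number of parts is at most $\alpha(E_\sigma)\le\vartheta(E_\sigma)=\prod_k\vartheta(H_{\sigma_k})$ by Lovász's bound and multiplicativity of $\vartheta$ over strong products. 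Combining this with the size bound $\alpha_{2^{\rho n}}(D_\sigma)\le M_\sigma/2^{\rho n}$, with the identity $\alpha_{2^{\rho n}}(D_\sigma)=\alpha_{2^{\rho n}/M'_\sigma}(E_\sigma)$ from Lemma~\ref{IndependenceNumberStrongProduct} (where $M'_\sigma$ is the product over complete coordinates), and with Lemma~\ref{IndependenceNumber} applied to the splitting of $G^n$ into its undersized and oversized components, one gets
\[
\alpha_{2^{\rho n}}(G^n)\ \le\ 2\!\!\sum_{\sigma:\,M_\sigma\ge 2^{\rho n}}\!\!\min\Bigl\{\tfrac{M_\sigma}{2^{\rho n}},\ \textstyle\prod_k\vartheta(H_{\sigma_k})\Bigr\}\ +\ A(n)+B(n),
\]
where $A(n),B(n)$ are precisely the terms from the disjoint-union-of-cliques analysis in the proof of Theorem~\ref{thm:bounds}, whose growth exponents are $<\log m-\rho$ whenever $\rho<\log m-H(Q)$. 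A method-of-types evaluation of the remaining sum, optimizing over the type $p$ of $\sigma$, then yields
\[
C_\rho(G)\ \le\ \max_{p}\ H(p)+\min\Bigl\{\textstyle\sum_i p_i\log m_i-\rho,\ \sum_i p_i\log\vartheta(H_i)\Bigr\}.
\]

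On the branch where the minimum is $\sum_i p_i\log m_i-\rho$ the objective is $\le\log m-\rho$ with equality only at $p=Q$, and at $p=Q$ the minimum is that branch only when $\rho\ge(\log m-H(Q))-\sum_i\tfrac{m_i}{m}\log\vartheta(H_i)$; on the other branch the objective is $\le\log\vartheta(G)$. Hence the displayed bound already gives $C_\rho(G)<\log m-\rho$ on the range $\rho<\log(m/\vartheta(G))$, and one checks $\log(m/\vartheta(G))\le(\log m-H(Q))-\sum_i\tfrac{m_i}{m}\log\vartheta(H_i)$ by weighted AM--GM. To cover the residual range up to $\log m-H(Q)$ I would run an induction: for the bottleneck components (those with type near $Q$), the factor $E_\sigma$ is itself a strong product of connected graphs, hence connected, so applying the theorem to $E_\sigma$ (with Theorem~\ref{rho_capacity_disjoint_union_of_cliques}, the disjoint-union-of-cliques case, as the base) replaces the crude estimate $\alpha_{2^{\rho n}/M'_\sigma}(E_\sigma)\le\vartheta(E_\sigma)$ by the sharp $\alpha_{2^{\rho n}/M'_\sigma}(E_\sigma)\le 2^{\,C_{\log(2^{\rho n}/M'_\sigma)}(E_\sigma)}$, which pushes the threshold all the way to $\log m-H(Q)$. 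Finally the ``in particular'' claim is immediate: if $G$ is connected then $s=1$, so $H(Q)=0$ and $\rho_*(G)=\log m$; if $G$ is disconnected then $s\ge2$, so $H(Q)>0$ and $\rho_*(G)<\log m$.

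The crux is the last paragraph. The projection-plus-$\vartheta$ argument is clean but visibly lossy --- it charges each non-complete factor $H_i$ its Lovász value $\vartheta(H_i)$ rather than its order $m_i$ --- so the large-deviations optimum only grazes $\log m-\rho$ at the single point $p=Q$ and only after the non-complete part is treated recursively. Making that recursion airtight (choosing an induction parameter that actually decreases, since $E_\sigma$ has \emph{more} vertices than $G$, and carefully tracking the floors in $\alpha_{2^{\rho n}/M'_\sigma}(E_\sigma)$ and confirming the type-optimization genuinely collapses to $p=Q$) is where I expect the real difficulty to lie.
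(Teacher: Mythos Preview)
Your upper bound $\rho_*(G)\le\log m-H(Q)$ via Corollary~\ref{free-lunch-rate-and-packing-rate} is correct and is exactly what the paper does.

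The lower bound has a genuine gap that your final paragraph does not close. Take the connected case $s=1$: here $H(Q)=0$ and the claim is $\rho_*(G)=\log m$, i.e.\ $C_\rho(G)<\log m-\rho$ for \emph{every} $\rho<\log m$. Your projection scheme collapses completely in this case, since with a single component every $E_\sigma$ is $G^n$ itself and the ``induction step'' returns the very statement you are proving. The non-recursive part of your bound reduces to $C_\rho(G)\le\min\{\log m-\rho,\,\log\vartheta(G)\}$, which is strict only for $\rho<\log(m/\vartheta(G))$; on $[\log(m/\vartheta(G)),\log m)$ it gives nothing. Clique-cover bounds from Theorem~\ref{thm:bounds} do not help either: any clique cover of a connected non-complete $G$ has at least two parts, and the associated disjoint union of cliques has packing point strictly below $\log m$. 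The base case you name (Theorem~\ref{rho_capacity_disjoint_union_of_cliques}) is about \emph{disconnected} graphs and cannot anchor an induction whose step only ever produces connected ones. In short, the connected case is not a technicality to be mopped up by recursion --- it is the entire content of the lower bound, and your combinatorial toolkit does not reach it.

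The paper bypasses combinatorics for the lower bound and instead invokes Theorem~\ref{UpperBoundFromVanishingError}. Fixing any channel $p(y|x)$ with confusion graph $G$ and using the identity $I(U;Y)=H(X)-H(X|U)-I(U;X|Y)$ (valid since $U-X-Y$), the constraint $H(X|U)\ge\rho$ gives $I(U;Y)\le\log m-\rho$, with equality only if $H(X)=\log m$, $H(X|U)=\rho$, and $I(U;X|Y)=0$, i.e.\ the reverse chain $U-Y-X$ also holds. The two Markov conditions force $p(u|x)=p(u|y)$ on the joint support, so $p(u|\cdot)$ is constant along every edge of $G$ and hence constant on connected components; equivalently $U-Z-X$ where $Z$ is the component index of $X$. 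Then $\rho=H(X|U)\ge H(X|Z)=\log m-H(Q)$. Thus the information-theoretic upper bound on $C_\rho(G)$ is strictly below $\log m-\rho$ whenever $\rho<\log m-H(Q)$, yielding $\rho_*(G)\ge\log m-H(Q)$ directly, with no recursion.
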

\begin{proof}
  Let $G$ be the confusion graph of some point-to-point channel $p(y|x)$. By Theorem~\ref{UpperBoundFromVanishingError} we have
  \begin{align*}
C_\rho(G) 
  &\leq \max_{U- X- Y\atop H(X|U) \ge \rho} I(Y;U) \\
  &=\max_{U- X- Y\atop H(X|U) \ge \rho} H(X) - I(U;X|Y)- H(X|U)\\
  &\leq \log{m}-\rho.    
  \end{align*}
  Thus, a necessary condition to achieve a sum-rate of $\log{m}$ is that $H(X|U)=\rho$, $H(X)=\log{m}$ (i.e., $X$ is uniform),  and $I(U;X|Y) = 0$ (i.e.,  $U-Y-X$ also forms a Markov chain). Hence we can lower bound the packing point by 
  \begin{align*}
    \rho_{*}(G) \geq \min_{\substack{U-X-Y\\U-Y-X\\H(X)=\log{m}}} H(X|U).
  \end{align*}
For every $y$, define $S_y=\{x\mid x\in \mathcal{X}, p(x|y)>0\}$. Clearly, $S_y$ is a clique in $G$. The two Markov chains imply that $p(u|x) = p(u|y)$ whenever $p(u,x,y) >0$. Hence for any $y$, the distribution $p(u|x)$ is the same for each $x\in S_y$. 

This immediately implies that if $G$ is connected then $p(u|x)$ does not depend on $x$ at all. Hence $U$ and $X$ are independent, and thus
$$\rho_{*}(G) \geq \min_{\substack{U:U-X-Y\\U-Y-X\\H(X)=\log{m}}} H(X|U) = H(X) = \log{m}.$$

Now assume that $G$ is not connected. From the above arguments it is clear that $p(u|x)$ does not change inside each connected component of $G$. In other words, we have the Markov chain $U- Z- X$ where $Z=g(X)$ is a random variable that returns the index of the connected component of $G$ that $X$ lies in. Then $H(X|U)\geq H(X|Z)$, with equality if and only if $Z$ and $U$ are one-to-one. Since we want to minimize $H(X|U)$, we can without loss of generality assume that $U=Z$. From
$H(X)=\log{m}$ we see that $X$ is uniform. It is easy to verify that the only way to achieve that is by setting $p(x|u)$ to be uniform inside the connected component associated with $u$. This yields 
$$\rho_{*}(G)\geq H(X|U) = \sum_{i=1}^{s} \frac{m_i\log{m_i} }{m}= \log{m} - H(Q).$$
On the other hand, we have $\rho_{*}(G)\le\log{m} - H(Q)$ by Corollary~\ref{free-lunch-rate-and-packing-rate}. This completes the proof.
\end{proof}

\begin{corollary}\label{cor:unique_disj}
  Let $G$ be a graph with $m$ vertices. Suppose that $C_{0}(G)=\log{s}$ for some positive integer $s$, and there is a unique way (up to permutations) of writing $m$ as a sum of positive integers $m=m_1+m_2+\dots+m_t$ such that $H(\frac{m_1}{m},\ldots,\frac{m_t}{m})=\log{m}-\rho_*(G)$. If $t=s$ then $G$  is a disjoint union of cliques of sizes $m_1,\ldots,m_s$.    
\end{corollary}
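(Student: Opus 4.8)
The plan is to first use the hypothesis on $\rho_*(G)$ to pin down the connected-component structure of $G$, and then to exploit the elementary lower bound $\log\alpha(G)\le C(G)$ from~\eqref{sandwich_bound} together with $C_0(G)=\log s$ to force every component to be complete.

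First I would apply Theorem~\ref{PackingPoint}: writing $n_1,\dots,n_r$ for the sizes of the connected components of $G$, it gives $\rho_*(G)=\log m-H(Q')$ with $Q'=\{n_1/m,\dots,n_r/m\}$, so that $H(n_1/m,\dots,n_r/m)=\log m-\rho_*(G)$. Since $n_1,\dots,n_r$ are positive integers summing to $m$, the multiset $\{n_1,\dots,n_r\}$ is one of the decompositions of $m$ appearing in the hypothesis. By the assumed uniqueness, this multiset must equal $\{m_1,\dots,m_t\}$; in particular $r=t$, and since $t=s$ by assumption, $G$ has exactly $s$ connected components $H_1,\dots,H_s$ on $m_1,\dots,m_s$ vertices respectively.

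Next I would observe that an independent set of a disjoint union is the union of independent sets of the parts, so $\alpha(G)=\sum_{i=1}^s\alpha(H_i)$; since $\alpha(H_i)\ge 1$ for every $i$, this gives $\alpha(G)\ge s$. On the other hand, $\alpha(G^n)\ge\alpha(G)^n$ yields $\alpha(G)\le 2^{C(G)}=2^{C_0(G)}=s$. Hence $\sum_{i=1}^s\alpha(H_i)=s$, which together with $\alpha(H_i)\ge 1$ forces $\alpha(H_i)=1$ for every $i$. A graph with independence number $1$ is complete, so $H_i\cong K_{m_i}$ and therefore $G=K_{m_1}+\cdots+K_{m_s}$, as claimed.

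The argument is short and I do not expect a genuine obstacle; the only point requiring a little care is the matching step, where one must notice that the component-size multiset is automatically a legal decomposition of $m$ into positive integers, and that the role of the hypothesis $t=s$ is precisely to guarantee that $G$ has $s$ (rather than more or fewer) connected components — which is exactly what makes the bound $\alpha(G)\le s$ tight enough to conclude $\alpha(H_i)=1$ for all $i$.
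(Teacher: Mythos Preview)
Your proof is correct and follows essentially the same route as the paper: first use Theorem~\ref{PackingPoint} together with the uniqueness hypothesis to fix the connected-component sizes as $m_1,\dots,m_s$, then use the bound $C_0(G)\ge\log\alpha(G)$ to force each component to be complete. The paper phrases the second step as ``if even one connected component is not a clique then $C_0(G)\ge\log(s+1)>\log s$,'' which is exactly your counting argument $\alpha(G)=\sum_i\alpha(H_i)$ stated contrapositively.
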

\begin{proof}
From Theorem~\ref{PackingPoint} and the uniqueness assumption it must be that $G$ has $s$ connected components of sizes $m_1,\ldots,m_s$. If even one connected component is not a clique then $C_0(G)\geq \log{(s+1)} > \log{s}$, concluding the proof.  
\end{proof}
\begin{corollary}\label{cor:prime_cliques}
Let $G$ and $H$ be two graphs with the same number of vertices. Suppose $G$ is a disjoint union of cliques of distinct prime sizes. Assume $C_0(G) = C_0(H)$ and $\rho_*(G) = \rho_*(H)$. Then $G\cong H$. 
\end{corollary}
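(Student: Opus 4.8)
The plan is to combine the structural information in Corollary~\ref{cor:unique_disj} with a number-theoretic uniqueness fact about sums of distinct primes. First I would record that, by hypothesis, $G=K_{p_1}+\dots+K_{p_s}$ where $p_1,\dots,p_s$ are distinct primes and $m=p_1+\dots+p_s$; hence $C_0(G)=\log s$ and, by Theorem~\ref{PackingPoint}, $\rho_*(G)=\log m - H(Q)$ with $Q=(p_1/m,\dots,p_s/m)$. The assumptions give $C_0(H)=\log s$ and $\rho_*(H)=\rho_*(G)$, so by Theorem~\ref{PackingPoint} applied to $H$, the graph $H$ has some number $t$ of connected components, of sizes $n_1,\dots,n_t$, with $\sum n_j = m$ and $H(n_1/m,\dots,n_t/m)=H(Q)$, i.e. $m=n_1+\dots+n_t$ realizes the same entropy value as the prime partition.

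The key step is to invoke Corollary~\ref{cor:unique_disj}, which requires exactly two things: that the partition of $m$ achieving entropy $\log m - \rho_*(G)$ is unique up to permutation, and that $t=s$. So the heart of the argument is a lemma of the following flavor: \emph{if $m=p_1+\dots+p_s$ with the $p_i$ distinct primes, then the multiset $\{p_1,\dots,p_s\}$ is the unique partition of $m$ (into positive integers, of any length) that maximizes — equivalently, attains the value of — the entropy $H(n_1/m,\dots,n_t/m)$ among all partitions realizing that particular value.} Here I would use that the entropy of $(n_1/m,\dots,n_t/m)$, for a fixed $m$, is in bijection with the quantity $\prod_j n_j^{n_j}$ (smaller product $\leftrightarrow$ larger entropy), since $H = \log m - \tfrac1m\sum_j n_j\log n_j$. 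Thus two partitions of $m$ have equal entropy iff $\prod_j n_j^{n_j}$ agrees. I would then argue: if $\{n_1,\dots,n_t\}\neq\{p_1,\dots,p_s\}$ but $\prod n_j^{n_j} = \prod p_i^{p_i}$, derive a contradiction using unique factorization — each prime $q$ appearing in $\prod p_i^{p_i}$ appears to exponent exactly $p_i$ if $q=p_i$ for some $i$ (and to exponent $0$ otherwise, since the $p_i$ are distinct primes, so $p_i^{p_i}$ contributes the prime $p_i$ with multiplicity $p_i$ and no other prime); matching exponents on the $\{n_j\}$ side forces each $n_j$ to be built only from the primes $p_1,\dots,p_s$ and, by a counting/extremality argument on the exponent of the largest prime $p_s$, forces $\{n_j\}=\{p_i\}$. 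Once uniqueness holds, $t$ is forced to equal $s$, Corollary~\ref{cor:unique_disj} applies, and we conclude $H$ is a disjoint union of cliques of sizes $p_1,\dots,p_s$, hence $H\cong G$.

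The main obstacle I anticipate is the number-theoretic uniqueness lemma: showing that no other partition of $m$ into positive integers can have $\prod_j n_j^{n_j} = \prod_i p_i^{p_i}$. The subtle point is that a block $n_j$ need not itself be prime, so one must rule out, e.g., replacing $K_{p_1}+K_{p_2}$ by a single $K_{p_1+p_2}$ or by other regroupings; one must also handle blocks of size $1$, which contribute $1^1=1$ and so are invisible in the product (these correspond to a partition with trailing $1$'s, and would force $H$ to have isolated vertices — but that changes $\sum n_j$ unless absorbed, so the constraint $\sum n_j = m$ rules it out, though this needs to be said carefully). I would argue via the exponent of the largest prime $p_s$ in the factorization: on the left it is exactly $p_s$, and on the right it is $\sum_{j:\, p_s \mid n_j} v_{p_s}(n_j)\cdot n_j$; since each such $n_j\ge p_s$, and since $n_j \le m < 2p_s$ would not generally hold, I'd instead bound using $\sum_j n_j = m = \sum_i p_i$ together with convexity of $x\mapsto x\log x$, which shows the prime partition strictly minimizes $\sum n_j\log n_j$ among partitions into parts each $\ge 2$ refining no further — this convexity/extremality route is likely cleaner than pure $p$-adic bookkeeping and is probably the form the final proof should take.
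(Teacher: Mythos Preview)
Your overall plan matches the paper's: reduce to showing that $\prod_j n_j^{n_j}=\prod_i p_i^{p_i}$ together with $\sum_j n_j=\sum_i p_i=m$ forces $\{n_j\}=\{p_i\}$, and then invoke Corollary~\ref{cor:unique_disj}. The gap is in how you propose to prove this uniqueness. Your preferred ``convexity/extremality'' route does not work: convexity of $x\mapsto x\log x$ only lets you compare a partition with its refinements or coarsenings (splitting a part $a+b$ into $a,b$ always \emph{decreases} $\sum n_j\log n_j$), so the prime partition does not minimize $\sum n_j\log n_j$ among partitions into parts $\ge 2$, and in any case an inequality cannot pin down the unique partition achieving a given \emph{equality}.

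The route you abandoned is the right one, and it is essentially what the paper does. Compare $p_i$-adic valuations: the exponent of $p_i$ in $\prod_j n_j^{n_j}$ is $\sum_j n_j\,v_{p_i}(n_j)$, and every nonzero summand is at least $p_i$ (because $p_i\mid n_j$ forces $n_j\ge p_i$). Equating this to $p_i$ shows there is exactly one $j$ with $p_i\mid n_j$, and for that $j$ one has $n_j=p_i$ and $v_{p_i}(n_j)=1$. The paper packages this as follows: set $S_j=\{i:p_i\mid n_j\}$; the $S_j$ partition $[s]$ and $n_j^{n_j}=\prod_{i\in S_j}p_i^{p_i}$; if $|S_j|>1$ then $n_j>p_i$ for every $i\in S_j$, whence $n_j\log n_j=\sum_{i\in S_j}p_i\log p_i<\bigl(\sum_{i\in S_j}p_i\bigr)\log n_j$ and so $n_j<\sum_{i\in S_j}p_i$, contradicting $\sum_j n_j=\sum_i p_i$. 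Your worry about parts equal to $1$ is legitimate but resolves immediately in the valuation version: once each $p_i$ is matched to a distinct $n_{j(i)}=p_i$, the constraint $\sum_j n_j=m$ leaves no room for additional parts. So keep the $p$-adic bookkeeping and drop the convexity heuristic.
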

\begin{proof}
Write $G = K_{p_1}+ K_{p_2} + \cdots + K_{p_s}$ where $p_1<p_2<\ldots<p_s$ are distinct primes. Suppose $H$ has $t$ connected components of sizes $m_1\leq m_2\leq\ldots\leq m_t$. Since the packing points of $G$ and $H$ coincide, by Theorem~\ref{PackingPoint} we have
\begin{align}\label{eq:ent_eq}
  H\left(\frac{p_1}{m},\ldots,\frac{p_s}{m}\right) = H\left(\frac{m_1}{m},\ldots,\frac{m_t}{m}\right).
\end{align}
We now show that this entropy equality implies that $s=t$ and $p_i=m_i\ (1\le i\le s)$, which by Corollary~\ref{cor:unique_disj} will prove our claim. 

From~\eqref{eq:ent_eq} we have
\begin{align*}
  \prod_{i=1}^s p_i^{p_i} = \prod_{j=1}^t m_j^{m_j}.
\end{align*}
Fix any $j$, and let $i$ be such that $p_i | m_j$. Then clearly $p_i^{p_i} | m_j^{m_j}$. Thus for any $j$ there exists a subset $S_j\subseteq [s]$ such that 
\begin{align}\label{eq:mp_prod}
m_j^{m_j} = \prod_{i\in S_j} p_i^{p_i}  
\end{align}
Moreover, $\{S_1,\ldots,S_t\}$ form a partition of $[s]$. Now, if $|S_j|=1$ for all $j$ then we are done. Suppose to the contrary there exists $j$ such that $|S_j| > 1$. Then~\eqref{eq:mp_prod} implies that $m_j<\sum_{i\in S_j} p_i$. Thus we have $m = \sum_{j=1}^tm_j < \sum_{i=1}^sp_i = m$, in contradiction. 

\end{proof}

\begin{theorem}\label{free_lunch=packing}
  Let $G$ be a graph with $m$ vertices. Then $G$ is the disjoint union of $s$ copies of a complete graph $K_n$, i.e., $G=sK_{n}$ for some positive integers $s$ and $n$, if and only if $\rho^*(G)=\rho_*(G)$.
\end{theorem}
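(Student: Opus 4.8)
The plan is to prove both implications via the structural characterizations of $\rho^*(G)$ and $\rho_*(G)$ already established. The ``only if'' direction is the easy one: if $G = sK_n$, then $G$ is a disjoint union of $s$ cliques, all of the same size $n$, with $m = sn$. By Theorem~\ref{rho_capacity_disjoint_union_of_cliques} (or directly by the ``all $m_i$ equal'' case treated inside the proof of Theorem~\ref{thm:bounds}), one computes $\rho^*(G) = \frac{1}{s}\sum_{i=1}^s \log n = \log n$ and $\rho_*(G) = \frac{1}{m}\sum_{i=1}^s n\log n = \log n$, so indeed $\rho^*(G) = \rho_*(G)$. (When $s=1$, $G = K_n$, and both points equal $\log n$ as well; when $n=1$, $G = \overline{K}_s$ and both points are $0$.)

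For the ``if'' direction, suppose $\rho^*(G) = \rho_*(G) =: \rho_0$. At $\rho_0$ the non-increasing concave curve $C_\rho(G)$ equals $C(G)$ (definition of the free-lunch point) and also equals $\log m - \rho_0$ (definition of the packing point), so $C(G) = \log m - \rho_0$. Since for $\rho > \rho_0$ we have both $C_\rho(G) < C(G)$ and $C_\rho(G) = \log m - \rho$ (packing point), while for $\rho < \rho_0$ we have $C_\rho(G) = C(G)$ (free-lunch) hence $C_\rho(G) > \log m - \rho$, the entire curve is pinned: $C_\rho(G) = \min\{C(G), \log m - \rho\}$, i.e. the upper bound of Proposition~\ref{lem:trivial_bound_rho} is attained everywhere, and additionally the lower bound of that proposition must also be tight at every point, which forces $\frac{C(G)}{\log m}(\log m - \rho) = \min\{C(G), \log m - \rho\}$ for all $\rho$; evaluating this identity shows $C(G) = \log m - \rho_0$ is consistent only in the degenerate-looking way, but the real leverage is elsewhere: I would instead use the structural theorems. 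First, Theorem~\ref{PackingPoint} gives $\rho_*(G) = \log m - H(Q)$ where $Q$ is the distribution of the connected-component sizes $(m_1,\dots,m_s)$; so $H(Q) = \log m - \rho_0$. Second, I claim $C(G) \ge \log s$ always (the $s$ components form an independent family), with equality iff every component is a clique, by the remark after Theorem~\ref{thm:bounds} / the argument in Corollary~\ref{cor:unique_disj}. Combining $C(G) = \log m - \rho_0 = H(Q) \le \log s \le C(G)$ forces $H(Q) = \log s$, hence $Q$ is uniform, i.e. all components have equal size $n := m/s$; and it forces $C(G) = \log s$, hence every component is a clique. Therefore $G = sK_n$.

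The main obstacle is making precise the inequality $H(Q) \le \log s$ with its equality case, and the inequality $C(G) \ge \log s$ with its equality case, and then verifying they can be chained: $\log m - \rho_0$ appears once as $C(G)$ (via the free-lunch/packing coincidence) and once as $H(Q)$ (via Theorem~\ref{PackingPoint}), and squeezing $H(Q) \le \log s \le C(G) = \log m - \rho_0 = H(Q)$ collapses everything. The equality condition $C(G) = \log s \iff$ all components are cliques needs a short justification: if some component $H_i$ is not a clique it contains two non-adjacent vertices, so $\alpha(H_i) \ge 2$, whence $\alpha(G) \ge s+1$ and $C(G) \ge \log(s+1) > \log s$; this is exactly the argument used in Corollary~\ref{cor:unique_disj}. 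I would also handle the trivial edge cases ($G$ connected, forcing $\rho_* = \log m$ and $\rho^* = 0$, which coincide only if $m=1$, i.e. $G = K_1 = 1\cdot K_1$) at the outset so the main argument runs with $s \ge 2$.
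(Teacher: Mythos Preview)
Your proof is correct and follows essentially the same route as the paper: both use Theorem~\ref{PackingPoint} to write $\rho_0=\log m-H(Q)$, the trivial bound $C(G)\ge\log s$, and the identity $C(G)=\log m-\rho_0$ coming from $\rho^*=\rho_*=\rho_0$, then squeeze $H(Q)\le\log s\le C(G)=H(Q)$ to force equal component sizes and cliques. (Your aside that connectedness forces $\rho^*(G)=0$ is false for $G=K_m$, but it is superfluous since your main chain already handles $s=1$; and you are in fact more explicit than the paper about why $C(G)=\log s$ forces each component to be a clique.)
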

\begin{proof}
  If $G=sK_{n}$, then we can easily verify that $\rho^*(G)=\rho_*(G)=\log{n}$. Now suppose that $\rho^*(G)=\rho_*(G)$. Assume that $G$ has $s$ connected components of sizes $m_1,m_2,\dots,m_s$. From Theorem~\ref{PackingPoint} we have
  $$\rho^*(G)=\rho_*(G)=\sum_{i=1}^{s} \frac{m_i\log{m_i}}{m}.$$
  On the other hand, we know $C_{0}(G)\ge\log{s}$. For simplicity write $\rho^*=\rho^*(G)$. By the definition of the free-lunch point, we get $C_{\rho^*}=C_{0}(G)\ge\log{s}$. Then
  \begin{align*}
    \log{m}=\rho^*+C_{\rho^*}(G)\ge\sum_{i=1}^{s}\frac{m_i\log{m_i}}{m}+\log{s}\ge\log{m}.
  \end{align*}
  The above inequality holds if and only if $m_1=m_2=\dots=m_s$. This proves the result.
\end{proof}

\begin{theorem} \label{DisjointUnionOfCliques}
  Let $G=K_{m_1}+K_{m_2}+\dots+K_{m_s}$ and $H=K_{n_1}+K_{n_2}+\dots+K_{n_t}$ be two disjoint union of cliques. Suppose that the functions $C_{\rho}(G)$ and $C_{\rho}(H)$ coincide. Then $G\cong H$.
\end{theorem}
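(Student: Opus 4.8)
The plan is to extract from the function $C_\rho(\cdot)$ two numerical invariants that already determine a disjoint union of cliques up to isomorphism, and then to show these two invariants are in fact read off from $C_\rho(G)$. First, I would recover the number of cliques: by~\eqref{rho-capacity-outrange} (or by Proposition~\ref{lem:trivial_bound_rho} together with $C_0(G)=\log s$ for a disjoint union of $s$ cliques), the value $C_0(G)=\log s$ determines $s$, so $C_0(G)=C_0(H)$ forces $s=t$. Second, I would recover the sorted multiset of clique sizes from the derivative of $C_\rho$. By Theorem~\ref{rho_capacity_disjoint_union_of_cliques}, when the $m_i$ are not all equal the function $C_\rho(G)$ is differentiable on $[0,\log m]$ with $C_\rho'(G) = -\beta^*(\rho)$, where $\beta^*(\rho)\in[0,1]$ is the unique solution of~\eqref{rho_expression}; equivalently, the concave conjugate is $C_\star(G,\gamma) = -\log\sum_i m_i^{-\gamma}$ for $\gamma\in[-1,0]$. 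Since $C_\rho(G)$ and $C_\rho(H)$ coincide, so do their concave conjugates (the concave conjugate is determined by the function), so
\begin{align*}
  \sum_{i=1}^s m_i^{-\gamma} = \sum_{j=1}^t n_j^{-\gamma} \qquad \text{for all } \gamma\in[-1,0].
\end{align*}

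The main step is then to argue that this identity of exponential sums, holding on an interval of $\gamma$, forces the multisets $\{m_1,\dots,m_s\}$ and $\{n_1,\dots,n_t\}$ to be equal. Writing $x = -\gamma \in [0,1]$ and grouping equal clique sizes, we get an equality $\sum_a c_a\, \ell_a^{x} = \sum_b d_b\, \ell_b'^{\,x}$ of generalized Dirichlet polynomials, with positive integer coefficients $c_a,d_b$ and distinct positive bases $\ell_a$ (resp. $\ell_b'$), valid on a nondegenerate interval. Such functions are real-analytic, so the identity extends to all $x\in\RR$; letting $x\to+\infty$ (or comparing the dominant term) the largest base on each side must agree, with matching multiplicity, and one peels off terms inductively to conclude that the two multisets of $(\text{base},\text{multiplicity})$ pairs coincide. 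This yields $\{m_1,\dots,m_s\}=\{n_1,\dots,n_t\}$ as multisets, hence $G\cong H$.

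Two edge cases need to be handled separately. If all $m_i$ are equal, say $G = sK_n$, then $\rho^*(G)=\rho_*(G)=\log n$ by Theorem~\ref{free\_lunch=packing}; since $\rho^*$ and $\rho_*$ are determined by $C_\rho$, the same holds for $H$, so $H=sK_n$ as well by Theorem~\ref{free\_lunch=packing} (using $C_0(H)=\log s$ to fix the number of components). Conversely if the $m_i$ are not all equal, the $n_j$ cannot all be equal either, since $\rho^*(G)<\rho_*(G)$ is preserved. I expect the main obstacle to be the rigorous justification that equality of the exponential sums on an interval implies equality of the underlying multisets of bases and multiplicities — this is where analyticity (or a Vandermonde/linear-independence argument for the functions $x\mapsto \ell^x$ with distinct $\ell$) must be invoked carefully; everything else is bookkeeping with the formulas already established in Theorems~\ref{thm:bounds} and~\ref{rho_capacity_disjoint_union_of_cliques}.
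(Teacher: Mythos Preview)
Your proposal is correct and follows essentially the same route as the paper: reduce the equal-sizes case to Theorem~\ref{free_lunch=packing}, read off $s=t$ from $C_0$, obtain the power-sum identity $\sum_i m_i^{\beta}=\sum_j n_j^{\beta}$ on $[0,1]$, extend by analyticity, and peel off the largest base as $\beta\to\infty$. The only cosmetic difference is that you reach the power-sum identity by invoking the concave-conjugate formula $C_\star(G,\gamma)=-\log\sum_i m_i^{-\gamma}$ from Theorem~\ref{rho_capacity_disjoint_union_of_cliques}(1) directly, whereas the paper re-derives it via the derivative $C_\rho'(G)=-\beta^*$ from part~(2) of the same theorem together with~\eqref{rho-function-1}; these are Legendre-dual versions of the same computation.
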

\begin{proof}
  Since the functions $C_{\rho}(G)$ and $C_{\rho}(H)$ coincide, the graphs $G$ and $H$ have the same number of vertices, i.e., $m_1+\dots+m_s=n_1+\dots+n_t$. If $m_1=\dots=m_s$, then we get $G\cong H\cong sK_{m_1}$ by Theorem~\ref{free_lunch=packing}. 
  A similar proof applies for the case $n_1=\dots=n_t$.
  
  Now assume that $s\ge2$ and $m_i,1\le i\le s$ are not all equal and $n_i,1\le i\le t$ are not all equal. By Theorem~\ref{rho_capacity_disjoint_union_of_cliques} we have
  \begin{align*}
    \rho^*&=\rho^*(G)=\rho^*(H)=\frac{1}{s}\sum_{i=1}^{s}\log{m_i}=\frac{1}{t}\sum_{j=1}^{t}\log{n_j}, \\
    \rho_*& = \rho_*(G) = \rho_*(H) = \frac{1}{m}\sum_{i=1}^s m_i\log{m_i}=\frac{1}{m}\sum_{j=1}^t n_j\log{n_j},\\
    C_{\rho^*}(G)&=C_{\rho^*}(H)=\log{s}=\log{t}.
  \end{align*}
  Hence $s=t$. Fix $\tilde{\rho}\in[\rho^*,\rho_*]$. By~\eqref{rho_expression}, there exist  $\tilde{\beta},\tilde{\gamma}\in[0,1]$ such that
  $$\tilde{\rho}=\left(\sum_{i=1}^s m_i^{\tilde{\beta}}\log{m_i}\right)\Big/\sum_{i=1}^s m_i^{\tilde{\beta}} 
      =\left(\sum_{i=1}^s n_i^{\tilde{\gamma}}\log{n_i}\right)\Big/\sum_{i=1}^s n_i^{\tilde{\gamma}}.$$ 
      Then, by Theorem~\ref{rho_capacity_disjoint_union_of_cliques} and~\eqref{rho-function-1}, we have
  \begin{align*}
  -\tilde{\beta}&=C_{\tilde{\rho}}'(G)=C_{\tilde{\rho}}'(H)=-\tilde{\gamma},\\
  \log{\left(\sum_{i=1}^s m_i^{\tilde{\beta}}\right)}
  &=C_{\tilde{\rho}}(G)+\tilde{\beta}\tilde{\rho}
  =C_{\tilde{\rho}}(H)+\tilde{\gamma}\tilde{\rho}
  =\log{\left(\sum_{i=1}^s n_i^{\tilde{\beta}}\right)}.
\end{align*}
Therefore
\begin{align}\label{power_sum_equality}
  \sum_{i=1}^s m_i^{\beta}=\sum_{i=1}^s n_i^{\beta}\quad \text{for } 0\le\beta\le1.
\end{align}
Without loss of generality, assume that $m_1\le \dots\le m_s$ and $n_1\le \dots\le n_s$. Note that both sides of~\eqref{power_sum_equality} are analytic functions of $\beta$ over the whole complex plane. As they coincide in the interval $[0,1]$, they must be identical over the whole complex plane. Now letting $\beta\rightarrow\infty$ we get $m_s=\max_{i}m_i=\max_{i}n_i=n_s$. Applying this argument recursively, we conclude that $m_i=n_i$ for $1\le i\le s$.

\end{proof}

\begin{example}\label{ex:corner_points}{\rm
We note that the ``corner points'' of the $\rho$-capacity curve for a disjoint union of cliques do not always characterize the sizes of the cliques. Let $G=12K_{2}+6K_{8}$ and $H=4K_{1}+13K_{4}+K_{16}$. We see that the graphs have the same number of vertices $|V(G)|=|V(H)|=72$, the same Shannon capacity $C_{0}(G) = C_{0}(H) = \log{18},$ the same free-lunch point $\rho^*(G) = \rho^*(H) = \frac{5}{3},$ and the same packing point $\rho_*(G) = \rho_*(H) = \frac{7}{3}$, but they are clearly not isomorphic. 
}
\end{example}

\section{Open Problems}\label{sec:open}
Below we mention a few problems of interest. 
\begin{problem}
  The $\rho$-capacity of small graphs. 
  \begin{enumerate}[(i)]
  \item Find the $\rho$-capacity of all the graphs with up to $4$ vertices. The following four graphs remain unsolved: 
\begin{figure}[H]
    \centering
    \includegraphics[width=3.5 in]{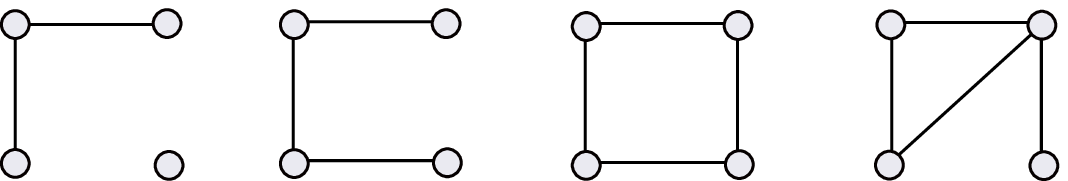} 
    \label{SmallGraphs}
\end{figure}
\item Find the $\rho$-capacity of the Pentagon $C_5$. 
  \end{enumerate}
\end{problem}

\begin{problem}
  Characterize the free-lunch point $\rho^*(G)$. Specifically, give a necessary and sufficient condition for $\rho^*(G)>0$. 
\end{problem}

\begin{problem}
  Let $G, H$ be two graphs with $C_\rho(G) = C_\rho(H)$. Do any of the following statements hold? 
  \begin{enumerate}
  \item If $G$ is a disjoint union of cliques then $G\cong H$.\footnote{Theorem~\ref{DisjointUnionOfCliques} and  Corollaries~\ref{cor:unique_disj} and~\ref{cor:prime_cliques} address special cases of this problem.} 
  \item If $G$ is a clique minus a clique then $G\cong H$.  
  \item If $E(G)\subseteq E(H)$ then $G\cong H$. 
  \item $G\cong H$. 
  \end{enumerate}
\end{problem}

\section*{Acknowledgment}
We are grateful to Lele Wang for many helpful discussions. We would also like to thank Alon Orlitsky for asking a question that ultimately led to this research, Noga Alon and Amit Weinstein for helpful discussions, and Young-Han Kim for pointing out the relevance of reference~\cite{Wi1990}. Finally, we are thankful to an annonymus referee for a very thorough and deep review of our work, and especially for noting a logical gap in the original proof of Theorem~\ref{PackingPoint}. 

\section*{Appendix}
\begin{lemma}\label{function-g}
  Let $m_1,\dots,m_s$ be $s$ positive integers, let $m=m_1+\dots+m_s$, and define a function $g:[-1,0]\rightarrow \RR$ by 
  $$g(\gamma)=-\log{\sum_{i=1}^s m_i^{-\gamma}}.$$ 
  Suppose that $s\ge2$ and $m_1,\dots,m_s$ are not all equal. Then
  \begin{enumerate}
    \item The function $g$ is differentiable on $[-1,0]$ and\,\footnote{Here $e$ is Euler's number, not the number of edges.}
      \begin{align*}
        g'(\gamma)&=\left(\sum_{i=1}^s m_i^{-\gamma}\log{m_i}\right)\Big{/}\sum_{i=1}^s m_i^{-\gamma},\\
        g''(\gamma)&=-(\log{e})\cdot\left(\sum_{1\le i<j\le s}(m_im_j)^{-\gamma}(\ln{m_i}-\ln{m_j})^2\right)\Big/\left(\sum_{i=1}^{s}m_i^{-\gamma}\right)^2.
      \end{align*}
    \item The function $g'$ is continuous and strictly monotonically decreasing on $[-1,0]$, and its image
      $$g'([-1,0]) = \left[\frac{1}{s}\sum_{i=1}^s \log{m_i}, \frac{1}{m}\sum_{i=1}^s m_i\log{m_i}\right].$$
  \end{enumerate}
\end{lemma}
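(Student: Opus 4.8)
The plan is to treat this as a direct calculus computation. Write $g(\gamma)=-\log S(\gamma)$ where $S(\gamma)=\sum_{i=1}^s m_i^{-\gamma}$. Since each term $m_i^{-\gamma}=e^{-\gamma\ln m_i}$ is a smooth (indeed entire) function of $\gamma$ and $S(\gamma)>0$ for all real $\gamma$, the function $g$ is infinitely differentiable on a neighbourhood of $[-1,0]$; in particular part~1 makes sense. First I would differentiate: $S'(\gamma)=-\sum_i m_i^{-\gamma}\ln m_i$, and the chain rule gives $g'(\gamma)=-(\log e)\,S'(\gamma)/S(\gamma)$. Converting $(\log e)\ln m_i=\log m_i$ yields exactly $g'(\gamma)=\bigl(\sum_i m_i^{-\gamma}\log m_i\bigr)/\sum_i m_i^{-\gamma}$, as claimed. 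The only thing to keep straight throughout is the base of $\log$ and the resulting factors of $\log e$.

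For the second derivative I would set $R(\gamma)=\sum_i m_i^{-\gamma}\ln m_i$, so that $g'(\gamma)=(\log e)\,R(\gamma)/S(\gamma)$, with $S'=-R$ and $R'(\gamma)=-\sum_i m_i^{-\gamma}(\ln m_i)^2$. The quotient rule then gives $g''(\gamma)=(\log e)\,(R'S+R^2)/S^2$. The one non-routine ingredient is the Lagrange-type identity $\bigl(\sum_i a_i\bigr)\bigl(\sum_i a_i b_i^2\bigr)-\bigl(\sum_i a_i b_i\bigr)^2=\sum_{1\le i<j\le s} a_i a_j (b_i-b_j)^2$, which I would apply with $a_i=m_i^{-\gamma}>0$ and $b_i=\ln m_i$. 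This identifies $R'S+R^2=-\sum_{i<j}(m_im_j)^{-\gamma}(\ln m_i-\ln m_j)^2$, producing the stated formula for $g''$.

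For part~2, because $m_1,\dots,m_s$ are not all equal there is at least one pair with $\ln m_i\neq\ln m_j$, so the numerator $\sum_{i<j}(m_im_j)^{-\gamma}(\ln m_i-\ln m_j)^2$ is strictly positive for every $\gamma\in[-1,0]$; hence $g''(\gamma)<0$ on all of $[-1,0]$. Consequently $g'$ is continuous (a ratio of smooth functions with strictly positive denominator) and strictly decreasing on $[-1,0]$, so its image is the closed interval $[g'(0),g'(-1)]$ with $g'(0)<g'(-1)$. Finally I would evaluate the endpoints: $m_i^{-0}=1$ gives $g'(0)=\frac1s\sum_i\log m_i$, and $m_i^{-(-1)}=m_i$ gives $g'(-1)=\frac1m\sum_i m_i\log m_i$, which is precisely the claimed interval.

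I do not anticipate a real obstacle here: the argument is entirely elementary, and the only point requiring care is the Lagrange identity for the numerator of $g''$ (together with the bookkeeping of $\log$ versus $\ln$). Everything else is routine differentiation and the elementary fact that a strictly decreasing continuous function maps a closed interval onto the closed interval spanned by its endpoint values.
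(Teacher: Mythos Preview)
Your proposal is correct and follows exactly the approach the paper takes: the paper's proof simply says ``The result 1) follows from direct computation. Since $g''(\gamma)<0$ for $-1\le\gamma\le0$, we can verify 2) directly,'' and you have spelled out that direct computation in full, including the Lagrange identity and the endpoint evaluations.
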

\begin{proof}
  The result 1) follows from direct computation. Since $g''(\gamma)<0$ for $-1\le\gamma\le0$, we can verify 2) directly.
\end{proof}

\begin{lemma}\label{optimization_problems}
Let $m_1,\dots,m_s$ be $s$ positive integers, let $\rho$ be a nonnegative number satisfying that
\begin{align}\label{rho_interval_2}
  \frac{1}{s}\sum_{i=1}^s \log{m_i}\le\rho\le\frac{1}{m}\sum_{i=1}^s m_i\log{m_i},
\end{align}
and let
\begin{align*}
  A(n) &= \frac{1}{ 2^{\rho n}}\sum_{\substack{i_1+\cdots+i_s=n\\m_1^{i_1}\cdots m_s^{i_s}\le 2^{\rho n} }}{n\choose i_1,i_2,\ldots,i_s}m_1^{i_1}\cdots m_s^{i_2},\\
B(n) &= \sum_{\substack{i_1+\cdots+i_s=n\\m_1^{i_1}\cdots m_s^{i_s}\ge 2^{\rho n} }}{n\choose i_1,i_2,\ldots,i_s}.
\end{align*}
Suppose that $s\ge2$ and $m_1,\dots,m_s$ are not all equal. Then
  \begin{align*}
  \lim_{n\rightarrow\infty}\frac{1}{n}\log{A(n)}
    =\lim_{n\rightarrow\infty}\frac{1}{n}\log{B(n)}
    =\log{\left(\sum_{i=1}^s m_i^{\beta}\right)}-\beta\rho
  \end{align*}
  where $\beta\in[0,1]$ is the unique solution satisfying
   \begin{align*}
     \rho=\left(\sum_{i=1}^s m_i^{\beta}\log{m_i}\right)\Big{/}\sum_{i=1}^s m_i^{\beta}.
   \end{align*}
\end{lemma}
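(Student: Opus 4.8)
The plan is to evaluate both limits via the standard large-deviations / method-of-types analysis of multinomial sums. First I would rewrite the constraint $m_1^{i_1}\cdots m_s^{i_s} \le 2^{\rho n}$ as $\sum_j (i_j/n)\log m_j \le \rho$, so that after the substitution $i_j = x_j n$ the sums $A(n)$ and $B(n)$ become sums over (a discretization of) the probability simplex $\Delta = \{x \in \RR_{\ge 0}^s : \sum_j x_j = 1\}$ of terms of the form $\binom{n}{x_1 n,\ldots,x_s n} \prod_j m_j^{x_j n}$ (for $A(n)$, with the extra factor $2^{-\rho n}$) restricted to $\{x : \sum_j x_j \log m_j \le \rho\}$ for $A(n)$ and $\{x : \sum_j x_j\log m_j \ge \rho\}$ for $B(n)$. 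By Stirling's approximation, $\frac 1n\log\binom{n}{x_1 n,\ldots,x_s n} \to H(x)$ uniformly over the simplex (up to a polynomial-in-$n$ factor which is negligible on the exponential scale), so the exponential growth rate of each term is $\varphi(x) \dfn H(x) + \sum_j x_j \log m_j$ for $B(n)$, and $\varphi(x) - \rho$ for $A(n)$.

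Since the number of lattice points is only polynomial in $n$, Laplace's principle gives
\begin{align*}
  \lim_{n\to\infty}\frac 1n\log B(n) &= \sup\Bigl\{H(x) + \textstyle\sum_j x_j\log m_j : x\in\Delta,\ \sum_j x_j\log m_j \ge \rho\Bigr\},\\
  \lim_{n\to\infty}\frac 1n\log A(n) &= \sup\Bigl\{H(x) + \textstyle\sum_j x_j\log m_j : x\in\Delta,\ \sum_j x_j\log m_j \le \rho\Bigr\} - \rho.
\end{align*}
Next I would solve the unconstrained problem $\max_{x\in\Delta}\varphi(x)$ by Lagrange multipliers: the optimum is $x_j^* \propto m_j$, i.e. $x_j^* = m_j/m$, giving value $\log m$, and the corresponding ``slope'' $\sum_j x_j^*\log m_j = \frac 1m\sum_j m_j\log m_j$, which is exactly the right endpoint of the interval~\eqref{rho_interval_2}. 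More generally, for a fixed value $t$ of the linear functional $\sum_j x_j\log m_j$, maximizing $H(x)+\sum_j x_j\log m_j$ subject to $x\in\Delta$ and $\sum_j x_j\log m_j = t$ is a tilted-entropy problem whose solution is the exponential family $x_j(\beta) = m_j^{\beta}\big/\sum_i m_i^{\beta}$, with the multiplier $\beta$ determined by $t = \bigl(\sum_j m_j^{\beta}\log m_j\bigr)\big/\sum_i m_i^{\beta}$; this is precisely the relation defining $\beta$ in the statement. A short computation then shows that the optimal value along this family, as a function of $t$, equals $\log\bigl(\sum_i m_i^{\beta(t)}\bigr) - \beta(t)\,t + t = \log\bigl(\sum_i m_i^{\beta(t)}\bigr) + (1-\beta(t))t$; here Lemma~\ref{function-g}, which records that $t\mapsto\beta(t)$ is a decreasing bijection from $[-1,0]\ni-\beta$ onto the interval in~\eqref{rho_interval_2}, guarantees that for any $\rho$ in that interval there is a unique $\beta\in[0,1]$ with $t=\rho$.

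It remains to check that the linear constraints are active at the respective optima, i.e. that imposing $\sum_j x_j\log m_j \le \rho$ (resp. $\ge\rho$) forces the maximizer of $\varphi$ to sit exactly on the hyperplane $\sum_j x_j\log m_j = \rho$. For $B(n)$: the unconstrained maximizer $x^*=(m_j/m)_j$ has slope $\frac1m\sum m_j\log m_j \ge \rho$ by~\eqref{rho_interval_2}, so it already satisfies the constraint, BUT that would give growth rate $\log m$, not the claimed value — so I should be careful: the claimed common value is $\log(\sum m_i^\beta)-\beta\rho$, and one checks that this is $\le\log m$ with equality only when $\rho$ equals the right endpoint. The resolution is that $\varphi$ restricted to the half-space $\{\sum x_j\log m_j\ge\rho\}$ is maximized on its relative interior only if the unconstrained max lies there; since the unconstrained slope is $\ge\rho$, the max over the half-space is indeed $\log m$. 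Thus the honest statement is that the two limits coincide and equal $\log(\sum m_i^\beta)-\beta\rho$ only after one also invokes~\eqref{rho-capacity-outrange}-type boundary behavior; concretely, I would instead argue that $A(n)$ and $B(n)$ sandwich $\alpha_{2^{\rho n}}(G^n)$ as in~\eqref{sandwich_bound_G^n}, and that the function $t\mapsto$ optimal value is concave with the tilted family achieving its maximum subject to equality $=t$, so that the half-space optima are attained on the boundary hyperplane precisely in the range~\eqref{rho_interval_2}; outside that range the endpoint formulas of Lemma~\ref{function-g}(2) take over. The main obstacle is this boundary bookkeeping — verifying that for $\rho$ strictly inside~\eqref{rho_interval_2} the constrained suprema for $A(n)$ and for $B(n)$ are \emph{both} attained on the hyperplane $\{\sum x_j\log m_j=\rho\}$ and hence agree — rather than the asymptotics of the multinomial coefficients, which is routine Stirling. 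Once that is settled, substituting the tilted optimizer and simplifying yields the common value $\log(\sum_i m_i^{\beta})-\beta\rho$, completing the proof.
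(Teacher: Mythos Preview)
Your overall strategy---method of types plus a tilted-distribution/Lagrangian analysis---is exactly what the paper does. But there is a concrete error that is the source of all your ``boundary bookkeeping'' difficulty: the objective for $B(n)$ is wrong. The terms in $B(n)$ are bare multinomial coefficients $\binom{n}{i_1,\ldots,i_s}$, with \emph{no} weight $m_1^{i_1}\cdots m_s^{i_s}$, so the per-term exponential rate is $H(x)$ alone, not $\varphi(x)=H(x)+\sum_j x_j\log m_j$. Thus the correct reduction is
\[
  \lim_{n\to\infty}\tfrac1n\log B(n)=\sup\Bigl\{H(x):x\in\Delta,\ \textstyle\sum_j x_j\log m_j\ge\rho\Bigr\},
\]
which is precisely the paper's problem~\eqref{maximization_problem_2}. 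With this correction the confusion evaporates: the unconstrained maximizer of $H$ is the uniform distribution $x_j=1/s$, whose ``slope'' $\tfrac1s\sum_j\log m_j$ is the \emph{left} endpoint of~\eqref{rho_interval_2}; hence for every $\rho$ in that interval the constraint $\sum_j x_j\log m_j\ge\rho$ is active and the optimum lies on the hyperplane. Symmetrically, for $A(n)$ (where your objective $\varphi(x)-\rho$ is correct) the unconstrained maximizer is $x_j=m_j/m$, whose slope is the \emph{right} endpoint, so the constraint $\le\rho$ is also active throughout. Both constrained problems therefore reduce to the same equality-constrained problem, and your tilted family $x_j(\beta)\propto m_j^\beta$ (equivalently the paper's KKT point with $\beta=1-\tilde\mu_0$) gives the common value $\log\bigl(\sum_i m_i^\beta\bigr)-\beta\rho$.

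In short, the obstacle you flagged (``the unconstrained max lies in the feasible half-space for $B(n)$, giving $\log m$'') is an artifact of having inserted a spurious $\prod m_j^{i_j}$ into $B(n)$; once that is removed, no appeal to~\eqref{rho-capacity-outrange} or to the sandwich~\eqref{sandwich_bound_G^n} is needed, and the argument matches the paper's.
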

\begin{proof}
First, the number of tuples $(i_1,\ldots,i_s)$ such that $i_1+\cdots+i_s=n$ is at most $(n+1)^{s}$. For each tuple $(i_1,\ldots,i_s)$, let $P=(p_1,\ldots,p_s)=(i_1/n,\ldots,i_s/n)$. Then we have (see~\cite[Theorem 11.1.3]{CoTh2006})
  $$
  \frac{1}{(n+1)^{s}}\,2^{nH(P)}\le {n\choose i_1,i_2,\ldots,i_s} \le 2^{nH(P)}.
  $$ 
  Hence we can reduce the computation of $\lim_{n\rightarrow\infty}\frac{1}{n}\log{A(n)}$ and $\lim_{n\rightarrow\infty}\frac{1}{n}\log{B(n)}$
  to the following two maximization problems:
  \begin{align}\label{maximization_problem_1}
    \begin{split}
    \text{maximize\quad} & H(P)+\sum_{i=1}^sp_i\log{m_i}-\rho \\
    \text{subject to\quad} & 
    \sum_{i=1}^sp_i\log{m_i} \le \rho \\
    &\sum_{i=1}^sp_i=1 \\
    & p_i\ge0,\quad i=1,\ldots,s
  \end{split}
  \end{align}
  and
  \begin{align}\label{maximization_problem_2}
    \begin{split}
    \text{maximize\quad} & H(P) \\
    \text{subject to\quad} & 
    \sum_{i=1}^sp_i\log{m_i} \ge \rho \\
    &\sum_{i=1}^sp_i=1 \\
    & p_i\ge0,\quad i=1,\ldots,s.
  \end{split}
  \end{align}
  Now we will solve the maximization problem~\eqref{maximization_problem_1}. We first define the Lagrangian
  \begin{align*}
  &L(p_1,\ldots,p_s,\mu_0,\dots,\mu_{s},\lambda)\\
  =&H(P)+\sum_{i=1}^sp_i\log{m_i}-\rho-\mu_0\left(\sum_{i=1}^sp_i\log{m_i}-\rho\right)
  -\sum_{i=1}^{s}\mu_i(-p_{i})-\lambda\left(\sum_{i=1}^sp_i-1\right)\\
  =&H(P)+(1-\mu_0)\left(\sum_{i=1}^sp_i\log{m_i}-\rho\right)+\sum_{i=1}^{s}\mu_ip_{i}-\lambda\left(\sum_{i=1}^sp_i-1\right).
\end{align*}
By 2) of Lemma~\ref{function-g} and~\eqref{rho_interval_2}, there exists a unique $\tilde{\mu}_0\in[0,1]$ such that
\begin{align*}
 \rho = \left(\sum_{i=1}^s m_i^{1-\tilde{\mu}_0}\log{m_i}\right)\Big{/}\sum_{i=1}^s m_i^{1-\tilde{\mu}_0}.
\end{align*}
Let $\tilde{\mu}_i=0$ for $1\le i\le s$, and let 
$$ \tilde{p}_i=\frac{m_i^{1-\tilde{\mu}_0}}{\sum_{i=1}^{s}m_i^{1-\tilde{\mu}_0}},\quad i=1,\ldots,s, $$
and $\tilde{\lambda}=\log{(\sum_{i=1}^{s}m_{i}^{1-\tilde{\mu}_0})}-\log{e}$. Then we can verify that $\tilde{p}_1,\dots,\tilde{p}_s,\tilde{\mu}_0,\dots,\tilde{\mu}_s,\tilde{\lambda}$ satisfy the Karush--Kuhn--Tucker conditions (see~\cite[Section 5.5.3]{BoVa2004}). Therefore they are optimal and the maximum is
  \begin{align*}
    H(\tilde{p}_1,\ldots,\tilde{p}_s)=\log{\left(\sum_{i=1}^s m_i^{1-\tilde{\mu}_0}\right)}-(1-\tilde{\mu}_0)\rho. 
  \end{align*}
  Similarly, we can show that these $\tilde{p}_i, i=1,\dots,s$ are also optimal solutions for the maximization problem~\eqref{maximization_problem_2}. Now replacing $1-\tilde{\mu}_0$ with $\beta$ will give the result. 
\end{proof}

In the following, we provide the proofs of Lemmas~\ref{IndependentFamily}-\ref{IndependenceNumberStrongProduct} of Section~\ref{sec:properties}.

\begin{lemma1}
  Let $G$ be a graph with $m$ vertices and $2\le k\le m$. Suppose $\mathcal{F}=\{V_i \mid 1\le i\le N\}$ is an independent family of $G$ such that $|V_i|\le k$ for $1\le i\le N$. Then 
  $$\sum_{i=1}^{N}|V_i|\le \min\{m, (k-1)(2\,\alpha_k(G)+1)\}.$$
\end{lemma1}
\begin{proof}
  The inequality $\sum_{i=1}^{N}|V_i|\le m$ is obvious. Now, without loss of generality, we can assume that $|V_i|=k$ for $1\le i\le N_1$ and $|V_i|<k$ for $N_1<i\le N$. Then we can obtain a $k$-independent family from $\mathcal{F}$ as follows. First, set $U_i=V_i$ for $1\le i\le N_1$. Then we define $U_{N_1+1}=\cup_{i=N_1+1}^{N_1+l}V_{i}$, where $l$ is the smallest integer such that $\sum_{i=N_1+1}^{N_1+l}|V_i|\ge k$. Hence $k\le |U_{N_1+1}|\le 2(k-1)$. We continue in this way until the number of vertices contained in the remaining $V_i$ is less than $k$. Suppose the number of sets $U_i$ we get is equal to $M$. As these $U_i$ form a $k$-independent family, we have $M\le\alpha_{k}(G)$. It follows that
  $$ \sum_{i=1}^{N}|V_{i}| \le 2(k-1)\alpha_{k}(G)+k-1 = (k-1)(2\,\alpha_{k}(G)+1). $$
\end{proof}

\begin{lemma2}
  Let $G=H_1+H_2+\cdots+H_n$ be the disjoint union of $n$ graphs $H_1,\ldots,H_n$ and $k\ge2$. Then
  $$\sum_{i=1}^{n}\alpha_{k}(H_i)\le \alpha_{k}(G) \le \min\left\{\frac{|V(G)|}{k}, \frac{k-1}{k}\sum_{i=1}^{n}(2\,\alpha_{k}(H_i)+1)\right\}.$$
\end{lemma2}
\begin{proof}
  The first inequality can be easily verified, so we only deal with the second.
  Suppose $\mathcal{F}=\{V_j \mid 1\le j\le N\}$ is an $k$-independent family of $G$. Without loss of generality, we can assume that $|V_j|=k$ for all $j$. Now fix any $i$. Then $\{V_j\cap V(H_i)\mid 1\le j\le N\}$ is an independent family of $H_i$. By Lemma~\ref{IndependentFamily} we get
  \begin{align*}
    kN=\sum_{j=1}^{N}|V_j|=\sum_{i=1}^{n}\sum_{j=1}^{N}|V_j\cap V(H_i)|
    \le\min\left\{|V(G)|,(k-1)\sum_{i=1}^{n}\left(2\,\alpha_{k}(H_i)+1\right)\right\}.
  \end{align*}
  Now the result follows.
\end{proof}

\begin{lemma3}
  Let $G$ be a graph. Then for any positive integer $k\le|V(G)|$, we have
  $\alpha_{k}(G) = \alpha_{km}(G\boxtimes K_{m}).$
\end{lemma3}
\begin{proof}
  Let $\mathcal{F}=\{V_i\mid 1\le i\le N\}$ be a $km$-independent family of $G\boxtimes K_{m}$. For each $V_i$, if a vertex $(u,v)\in V_i$ where $u \in G,v\in K_{m}$, then without loss of generality we can assume that the set $u\times V(K_{m})$ is contained in $V_i$. Under this assumption, it is not hard to see that there is a one-to-one correspondence between the $k$-independent family in $G$ and the $km$-independent family in $G\boxtimes K_{m}$. The result follows easily from this observation.
\end{proof}

\bibliographystyle{IEEEtran}
\bibliography{ZeroErrorISIT}

\begin{thebibliography}{10}
\providecommand{\url}[1]{#1}
\csname url@samestyle\endcsname
\providecommand{\newblock}{\relax}
\providecommand{\bibinfo}[2]{#2}
\providecommand{\BIBentrySTDinterwordspacing}{\spaceskip=0pt\relax}
\providecommand{\BIBentryALTinterwordstretchfactor}{4}
\providecommand{\BIBentryALTinterwordspacing}{\spaceskip=\fontdimen2\font plus
\BIBentryALTinterwordstretchfactor\fontdimen3\font minus
  \fontdimen4\font\relax}
\providecommand{\BIBforeignlanguage}[2]{{%
\expandafter\ifx\csname l@#1\endcsname\relax
\typeout{** WARNING: IEEEtran.bst: No hyphenation pattern has been}%
\typeout{** loaded for the language `#1'. Using the pattern for}%
\typeout{** the default language instead.}%
\else
\language=\csname l@#1\endcsname
\fi
#2}}
\providecommand{\BIBdecl}{\relax}
\BIBdecl

\bibitem{Sh1956}
C.~E. Shannon, ``The zero error capacity of a noisy channel,'' \emph{Institute
  of Radio Engineers, Transactions on Information Theory,}, vol. IT-2, no.
  September, pp. 8--19, 1956.

\bibitem{Lo1979}
L.~Lov{\'a}sz, ``On the {S}hannon capacity of a graph,'' \emph{IEEE Trans.
  Inform. Theory}, vol.~25, no.~1, pp. 1--7, 1979.

\bibitem{Ha1979}
W.~Haemers, ``On some problems of {L}ov\'asz concerning the {S}hannon capacity
  of a graph,'' \emph{IEEE Trans. Inform. Theory}, vol.~25, no.~2, pp.
  231--232, 1979.

\bibitem{We2011}
A.~Weinstein, ``Simultaneous communication in noisy channels,'' \emph{IEEE
  Trans. Inform. Theory}, vol.~57, no.~10, pp. 6455--6462, 2011.

\bibitem{Pi1978}
M.~S. Pinsker, ``Capacity of noiseless broadcast channels,'' \emph{Problemy
  Pereda\v ci Informacii}, vol.~14, no.~2, pp. 28--34, 1978.

\bibitem{Ma1977}
K.~Marton, ``The capacity region of deterministic broadcast channels,''
  \emph{Proc. Int. Symp. Inf. Theory, Paris-Cachan, France}, pp. 243--248,
  1977.

\bibitem{Wi1990}
F.~M.~J. Willems, ``The maximal-error and average-error capacity region of the
  broadcast channel are identical: a direct proof,'' \emph{Problems Control
  Inform. Theory/Problemy Upravlen. Teor. Inform.}, vol.~19, no.~4, pp.
  339--347, 1990.

\bibitem{MRR1978}
R.~J. McEliece, E.~R. Rodemich, and H.~C. Rumsey, Jr., ``The {L}ov\'asz bound
  and some generalizations,'' \emph{J. Combin. Inform. System Sci.}, vol.~3,
  no.~3, pp. 134--152, 1978.

\bibitem{Sch1979}
A.~Schrijver, ``A comparison of the {D}elsarte and {L}ov\'asz bounds,''
  \emph{IEEE Trans. Inform. Theory}, vol.~25, no.~4, pp. 425--429, 1979.

\bibitem{HaImKl2011}
R.~Hammack, W.~Imrich, and S.~Klav{\v{z}}ar, \emph{Handbook of product graphs},
  2nd~ed., ser. Discrete Mathematics and its Applications (Boca Raton).\hskip
  1em plus 0.5em minus 0.4em\relax CRC Press, Boca Raton, FL, 2011, with a
  foreword by Peter Winkler.

\bibitem{CoTh2006}
T.~M. Cover and J.~A. Thomas, \emph{Elements of information theory},
  2nd~ed.\hskip 1em plus 0.5em minus 0.4em\relax Wiley-Interscience [John Wiley
  \& Sons], Hoboken, NJ, 2006.

\bibitem{jelinek}
F.~Jelinek, \emph{Probabilistic Information Theory: Discrete and Memoryless
  Models}.\hskip 1em plus 0.5em minus 0.4em\relax New York, McGraw-Hill, 1968.

\bibitem{BoVa2004}
S.~Boyd and L.~Vandenberghe, \emph{Convex optimization}.\hskip 1em plus 0.5em
  minus 0.4em\relax Cambridge University Press, Cambridge, 2004.

\bibitem{HiJeLe1993}
J.-B. Hiriart-Urruty and C.~Lemar{\'e}chal, \emph{Convex analysis and
  minimization algorithms. {II}}, ser. Grundlehren der Mathematischen
  Wissenschaften [Fundamental Principles of Mathematical Sciences].\hskip 1em
  plus 0.5em minus 0.4em\relax Springer-Verlag, Berlin, 1993, vol. 306,
  advanced theory and bundle methods.

\bibitem{Alon1998}
\BIBentryALTinterwordspacing
N.~Alon, ``The {S}hannon capacity of a union,'' \emph{Combinatorica}, vol.~18,
  no.~3, pp. 301--310, 1998. [Online]. Available:
  \url{http://dx.doi.org/10.1007/PL00009824}
\BIBentrySTDinterwordspacing

\end{thebibliography}
\end{document}